\newcommand{\problemDecision}[3]
{\noindent
 \begin{framed}
   \centerline{{\sc #1}}     
   \bigskip
   \noindent
   \begin{tabularx}{\textwidth}{lX}
     \emph{Given:} & #2	\\
     \emph{Question:} & #3
   \end{tabularx}
 \end{framed}
}
\newtheorem{case}{Case}
\newtheorem{subcase}{\textbf{Case}}
\numberwithin{subcase}{case}
\newtheorem{subsubcase}{\textbf{Case}}
\numberwithin{subsubcase}{subcase}
\newtheorem{definition}{Definition}
\newtheorem{theorem}{Theorem}
\newtheorem{lemma}{Lemma}
\newtheorem{corollary}{Corollary}
\newproof{proof}{Proof}
\newcommand{\ep}[2]{\overline{(#1,#2)}}
\begin{document}

\begin{frontmatter}
\title{{\sc Metric Dimension} for Gabriel Unit Disk Graphs is NP-Complete}

\author{Stefan Hoffmann\corref{cor1}}
\ead{stefan.hoffmann@hhu.de}
\author{Egon Wanke}
\ead{E.Wanke@hhu.de}

\cortext[cor1]{Corresponding author, phone number +492118110684}
\address{Institute of Computer Science, Heinrich-Heine-Universit\"at D\"usseldorf, D-40225 D\"usseldorf, Germany}

\begin{abstract}

We show that finding a minimal number of landmark nodes for a unique virtual addressing by hop-distances in wireless ad-hoc sensor networks is NP-complete even if the networks are unit disk graphs that contain only Gabriel edges. This problem is equivalent to {\sc Metric Dimension} for Gabriel unit disk graphs.
The Gabriel edges of a unit disc graph induce a planar $O(\sqrt{n})$ distance and an optimal energy spanner. This is one of the most interesting restrictions of {\sc Metric Dimension} in the context of wireless multi-hop networks.
\end{abstract}

\begin{keyword}
metric dimension \sep unit disk graph \sep gabriel graph \sep virtual address \sep wireless multi-hop network \sep sensor network
\end{keyword}
\end{frontmatter}


\section{Introduction}

Wireless radio networks in which all nodes have the same radio range are often modeled as {\em unit disc graphs}. An undirected graph is a unit disk graph (UDG) if its vertices can be embedded in the Euclidean plane $\mathbb R^2$ by an embedding $\rho$ such that two vertices $u,v$ are connected by an edge if and only if their Euclidean distance $d^\text{e}(u,v):=\|\rho(u)-\rho(v)\|_2$ is at most $1$. The vertices represent the sensor nodes and the undirected edges the symmetric communication channels between them. Two sensor nodes can communicate with each other, if they are close enough.

A widely used idea to reduce the complexity of distributed algorithms for wireless ad-hoc sensor networks is to consider only some of the available connections. The strongest restriction that preserves connectivity is a spanning tree. A spanning tree unfortunately does not allow efficient routing through the network. A slightly weaker restriction is the {\em Gabriel graph} \cite{GS69}. An edge $\{u,v\}$ of an embedded graph $G$ is a {\em Gabriel edge}, if there is no vertex $w$ such that $d^e(u,w)^2+d^e(w,v)^2 \quad \leq \quad d^e(u,v)^2$. The Gabriel edges induce a planar $O(\sqrt{n})$ distance and optimal energy spanner $G'$ of $G$, see \cite{BDEK06}. That is, the ratio between the length of a shortest path in $G'$ and $G$ is of $O(\sqrt{n})$ if the costs of the edges $\{u,v\}$ are $d^e(u,v)$, and $1$ if the costs of the edges are $d^e(u,v)^\alpha$ for some $\alpha \geq 2$.

Many routing algorithms for wireless ad-hoc sensor networks are based on {\em virtual coordinates}, see for example VCap \cite{CCDU05}, JUMPS \cite{BPDCFS06}, GLIDER \cite{FGGSZ05}, VCost \cite{EMS07}, BVR \cite{FRZ05}, and HBR \cite{GHW12}. The virtual coordinates are computed from the distances to specific nodes called {\em landmarks}, {\em anchors}, or {\em beacons} \cite{FRZ05,KRR96}. The setup of the virtual addressing starts by selecting some landmark nodes that flood large parts of the network. After that every node defines its virtual coordinates depending on the distances to the landmark nodes.

The virtual coordinates can be used to route a message through the network. Most routing algorithms require the virtual addresses to be unique. Packet delivery is also only guaranteed if different nodes have different virtual coordinates. The length of the virtual addresses increases with the number of landmark nodes. Since every landmark node has to flood large parts of the network during initialization, the number of landmark nodes should be as small as possible to reduce the amount of consumed energy. These conditions cause the question of how to determine a minimum set of landmark nodes that provides a unique virtual addressing.

From a graph theoretical point of view, the decision version of the problem above is called {\sc Metric Dimension}. A set of vertices $S \subseteq V$ of a graph $G=(V,E)$ is called a {\em resolving set}, if for every vertex pair $u,v \in V$ there is at least one vertex $s\in S$ such that the distance between $s$ and $u$ differs from the distance between $s$ and $v$. A graph has metric dimension at most $k$ if there is a resolving set of size at most $k$. {\sc Metric Dimension} is known to be NP-complete for general graphs as well as for planar graphs \cite{DPSL12,GJ79,KRR96}. For a fixed number of landmarks $k$ it is decidable in polynomial time. It is also decidable in polynomial time for special classes of graphs like trees, wheels, complete graphs, $k$-regular bipartite graphs and outerplanar graphs \cite{CEJO00,DPSL12,KRR96,SBSSB11,SSH02}. The approximability of {\sc Metric Dimension} by centralized algorithms has been studied for bounded degree, dense, and general graphs in \cite{HSV11}. There are also 
boundaries for the metric dimension of some special classes of graphs, see \cite{CGH08,CPZ00}.

In this paper, we prove that {\sc Metric Dimension} is NP-complete even for unit disk graphs that contain only Gabriel edges. This is the most interesting restriction of the problem in the context of computing unique virtual addresses in wireless ad-hoc sensor networks. We show how to construct in polynomial time for an arbitrary instance $(X,{\cal C})$ of a special satisfiability problem a Gabriel unit disk graph $H_\psi$ and an integer $k$ such that there is a satisfying truth assignment for $X$ if and only if there is a resolving set $S$ for $H_\psi$ of size at most $k$.


\section{Definitions and terminology}

Let $G=(V,E)$ be a {\em directed} or {\em undirected graph}. That is, $V$ is the set of vertices and $E$ is the set of {\em directed edges} $E\subseteq \{(u,v)~|~u,v \in V,~u \not=v\}$  or {\em undirected edges} $E \subseteq \{\{u,v\}~|~u,v \in V,~u \not=v\}$, respectively. A graph $G'=(V',E')$ is a {\em subgraph} of $G=(V,E)$ if $V' \subseteq V$ and $E' \subseteq E$. It is an {\em induced subgraph} of $G$ if $E' = E \cap \{(u,v)~|~ u,v \in V'\}$ or $E' = E \cap \{\{u,v\}~|~ u,v \in V'\}$, respectively.

A sequence of $k$ vertices $(v_1, \dots ,v_k)$ of $G$ is called a {\em directed path} if $G$ is directed and $(v_i,v_{i+1})\in E$ for $1\le i < k$. It is called an {\em undirected path} if $G$ is undirected and $\{v_i,v_{i+1}\}\in E$ for $1\le i < k$.

\subsection{Gabriel unit disk graphs}
For a vector $(x,y) \in  \mathbb{R}^2$, let $\Vert (x,y)\Vert_2 := \sqrt{x^2 + y^2}$ be its {\em Euclidean norm}.
\begin{definition}\label{DefinitionUDG}
An undirected graph $G=(V,E)$ is a {\em unit disk graph (UDG)} if there is a {\em UDG embedding} $\rho: V \to \mathbb{R}^2$ such that
$\forall u,v \in V$, $(u \not = v) \Rightarrow (\rho(u) \not= \rho(v))$ and
$\{u,v\}\in E \Leftrightarrow \Vert \rho(u) - \rho(v) \Vert_2 \le 1$.
\end{definition}


\begin{definition}\label{GabrielGraph}
For an undirected graph $G=(V,E)$ and an embedding $\rho: V \to \mathbb{R}^2$ of its vertices an edge $\{u,v\}$ is a {\em Gabriel edge} if there is no vertex $w \in V$ with
\[\Vert \rho(u) - \rho(w) \Vert_2^2 + \Vert \rho(w) - \rho(v) \Vert_2^2 \quad \leq \quad \Vert \rho(u) - \rho(v) \Vert_2^2.\]

An undirected graph $G=(V,E)$ is a {\em Gabriel UDG (GUDG)} if there is a UDG embedding $\rho: V \to \mathbb{R}^2$ for which all edges in $E$ are Gabriel edges. Such an embedding is called a {\em GUDG embedding}.
\end{definition}

\begin{definition}\label{DefinitionPlanarity}
A graph $G=(V,E)$ is {\em planar} if there is an embedding $\rho: V \to \mathbb{R}^2$ with the following two properties:
\begin{enumerate}
\item $\forall u,v \in V$, $(u \not = v) \Rightarrow (\rho(u) \not= \rho(v))$ and 
\item
for every edge between two vertices $u$ and $v$ 
there is a line in $\mathbb{R}^2$ connecting the points $\rho(u)$ and $\rho(v)$ such that no two lines for two different edges intersect except at a common endpoint.
\end{enumerate}
\end{definition}





\subsection{Metric Dimension}

The {\em distance} $d(u,v)$ between two nodes $u,v \in V$ in a graph $G=(V,E)$ is the smallest integer $k$ for which there is a path between $u$ and $v$ with $k$ edges. 

\begin{definition}
Let $G=(V,E)$ be an undirected graph. A set of vertices $S \subseteq V$ is a {\em resolving set} for $G$, see for example \cite{Bou09,CEJO00,HMPSCP05}, if for every pair $u,v \in V$ of two distinct vertices there is a vertex $s \in S$ such that $d(u,s) \not= d(v,s)$. The minimum size of a resolving set for an undirected graph $G$ is called the {\em metric dimension} of $G$.
\end{definition}

In this paper, we consider the complexity of the following decision problem.

\problemDecision{GUDG Metric Dimension}
{A GUDG $G=(V,E)$ and a positive integer $k$.}
{Is there a resolving set $S \subseteq V$ for $G$ of size at most $k$?}

\medskip
For a given undirected graph $G$ and a positive integer $k$, deciding whether there is a resolving set for $G$ of size at most $k$ is NP-complete for general graphs \cite{GJ79,KRR96} and even for planar graphs \cite{DPSL12}. We extend this and show that it remains NP-complete for GUDGs with vertex degree $\leq 6$.


\subsection{Satisfiability}

Let $X=\{x_1,\ldots,x_n\}$ be a set of {\em boolean variables}. A {\em truth assignment} for $X$ is a function $t:X \to \{\text{true}, \text{false}\}$. If $t(x_i) = \text{true}$ then we say variable $x_i$ is {\em true} under $t$; if $t(x_i) = \text{false}$ then we say variable $x_i$ is {\em false} under $t$. If $x_i$ is a variable of $X$ then $x_i$ and $\overline{x_i}$ are {\em literals} over $X$; $x_i$ is called a {\em positive literal}, $\overline{x_i}$ is called a {\em negative literal}. The positive literal $x_i$ is true under $t$ if and only if variable $x_i$ is true under $t$; negative literal $\overline{x_i}$ is true under $t$ if and only if variable $x_i$ is false under $t$. A {\em clause} over $X$ is a set of literals over $X$. It represents the disjunction of literals which is {\em satisfied by a truth assignment} $t$ if and only if at least one of its literals is true under $t$. A collection ${\cal C}$ of clauses over $X$ is {\em satisfiable} if and only if there is a truth assignment $t$ that 
simultaneously satisfies all clauses of ${\cal C}$.

\begin{definition}
Let $X$ be a set of boolean variables and ${\cal C}$ be a collection of clauses over $X$. The directed {\em clause variable graph} $G_{\psi}$ of $\psi = (X,{\cal C})$ has a vertex $x$ for every variable $x\in X$ and a vertex $c$ for every clause $c\in {\cal C}$.
There is a directed edge $(x,c)$ from {\em variable vertex} $x$ to {\em clause vertex} $c$ if and only if $c$ contains literal $x$ or $\overline{x}$.
\end{definition}




The following problem is NP-complete as shown in Corollary 3 of \cite{DPSL12}. 

\problemDecision{$1$-Negative Planar 3-Sat}
{A set $X$ of boolean variables and a collection ${\cal C}$ of clauses over $X$ such that
\begin{compactitem}
\item every variable $x \in X$ occurs in exactly two or three clauses, once as a negative literal, and once or twice as a positive literal,
\item the clause variable graph $G_{\psi}$ for $\psi = (X,{\cal C})$ is planar,
\item every clause contains two or three literals and
\item every clause with three literals contains at least one negative literal.
\end{compactitem}
}
{Is there a satisfying truth assignment for ${\cal C}$?}

\begin{definition}\label{PlanarOrthogonalGridDrawing}
Let $G=(V,E)$ be a directed planar graph and $\rho: V \to \mathbb{Z}^2$ be an embedding of the vertices of $G$. An {\em edge path} of {\em length $k-1$} for an edge $(u,v) \in E$ is a sequence of $k$ points
$\ep{u}{v} := (p_1, \dots , p_k), \quad p_i \in \mathbb{Z}^2, \quad 1\le i\le k,$
for which 
\begin{enumerate}
\item $(i \not= j) \Rightarrow (p_i \not= p_j)$,
\item $p_1 = \rho(u)$, $p_k = \rho(v)$, and
\item for $1\le i < k$, if $p_i = (x_i,y_i)$ then \\ $p_{i+1}\in\left\{(x_i+1,y_i),(x_i-1,y_i), (x_i,y_i+1), (x_i,y_i-1)\right\}$.
\end{enumerate}


A {\em planar orthogonal grid drawing} of a directed graph $G=(V,E)$ is a pair $(\rho,{\cal E})$ where
\begin{itemize}
\item
$\rho: V \to \mathbb{Z}^2$ is an embedding of the vertices of $G$,
\item
${\cal E}$ is a collection of edge paths, one for every edge of $G$, and
\item no two distinct edge paths $\ep{u}{v}$ and $\ep{u'}{v'}$ of ${\cal E}$ have a common point $p$ unless the corresponding two edges $(u,v)$ and $(u',v')$ have a common vertex $w \in \{u,v\} \cap \{u',v'\}$ and $p=\rho(w)$.
\end{itemize}
\end{definition}


\section{Main result}


\begin{theorem}\label{PlanarUDGMetricDimension}
{\sc GUDG Metric Dimension} is NP-complete.
\end{theorem}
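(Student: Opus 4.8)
The plan is to prove membership in NP and then to give a polynomial-time reduction from {\sc 1-Negative Planar 3-Sat}. Membership in NP is immediate: given a GUDG $G$ and an integer $k$, a candidate resolving set $S$ with $|S|\le k$ is a certificate of polynomial size, and it is verified in polynomial time by running a breadth-first search from each $s\in S$ to obtain all distances $d(s,v)$ and checking that no two distinct vertices of $G$ have the same vector of distances to $S$.

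For hardness, let $\psi=(X,{\cal C})$ be an instance of {\sc 1-Negative Planar 3-Sat}. First I would compute a planar orthogonal grid drawing $(\rho,{\cal E})$ of the clause variable graph $G_{\psi}$; this exists and is computable in polynomial time because every vertex of $G_{\psi}$ has degree at most $3$ (each variable occurs in at most three clauses, each clause has at most three literals). After scaling the drawing so that consecutive grid points are far apart, I would replace each feature of the drawing by a local gadget, all of whose vertices sit on a fixed refinement of the integer grid and are joined only by short edges, so that the union $H_{\psi}$ is a GUDG of maximum degree at most $6$: a \emph{variable gadget} at each point $\rho(x)$, a \emph{clause gadget} at each point $\rho(c)$, and \emph{wire gadgets} running along each edge path $\ep{x}{c}$ to transmit a truth value from the variable end to the clause end, together with turn and branch pieces where an edge path bends or a variable feeds two clauses. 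Each gadget is designed to be rigid in two senses: (i) geometrically, so that once the embedding is fixed, for every intended edge $\{u,v\}$ the closed disk with diameter $uv$ contains no other vertex — hence every edge is Gabriel — while all unintended adjacencies are excluded; and (ii) metrically, so that shortest paths between vertices of different gadgets are forced to run through the wires in the intended way, which makes all relevant distances computable locally and rules out shortcuts created by long wires.

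The satisfiability instance is encoded in the resolving-set budget as follows. Every gadget forces a fixed minimum number of landmarks that any resolving set must contain in order to separate certain dedicated local vertex pairs, and I would set the threshold $k$ equal to the sum of these local minima. A variable gadget admits exactly two minimum-cost placements of its mandatory landmarks, read as $t(x)=\text{true}$ or $t(x)=\text{false}$; a wire gadget likewise has two minimum-cost states, and by construction the state delivered at the clause end is forced to agree with the state chosen at the variable end (and with the polarity of the corresponding literal in that clause); a clause gadget can be resolved at its local minimum cost if and only if at least one incident wire arrives in a state corresponding to a literal that satisfies the clause, and otherwise it needs one extra landmark. Combining these, $H_{\psi}$ has a resolving set of size at most $k$ if and only if there is a truth assignment satisfying every clause of ${\cal C}$, which yields Theorem~\ref{PlanarUDGMetricDimension}. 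The $1$-negative restriction is what keeps the clause-gadget palette small: only $2$- and $3$-literal clauses with a negative literal available need be realized, so one avoids all-positive $3$-clause gadgets, which are the hardest to build under the simultaneous Gabriel and bounded-degree constraints.

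The main obstacle I expect is reconciling the geometric and combinatorial requirements at once: the coordinates of the gadget vertices must be chosen so that the Gabriel property provably holds at every junction where a wire turns, branches toward several clauses, or enters a clause gadget, while every intended adjacency is preserved, every unintended one is forbidden, and the degree stays at most $6$. Pinning down these coordinates and checking the disk-emptiness condition case by case is the delicate part of the construction. The accompanying distance analysis is routine in spirit but must be done carefully, precisely because an arbitrarily long wire could otherwise introduce unexpected equalities of distances that would break the locality on which the resolving-set argument rests.
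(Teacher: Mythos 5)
Your overall strategy coincides with the paper's: membership in NP is immediate, and hardness is shown by reducing {\sc 1-Negative Planar 3-Sat} to {\sc GUDG Metric Dimension} via a planar orthogonal grid drawing of the clause variable graph, replacing variables, clauses and edges by gadgets whose forced landmarks encode a truth assignment, with the budget $k$ set to the sum of the per-gadget minima (in the paper $k=4\cdot|X'|$, all of it charged to the variable gadgets; clause and edge gadgets force nothing, and an unsatisfied clause simply leaves its dedicated pair $w_1,w_2$ unresolved within that budget). Since you leave every gadget unspecified, the entire technical content of the proof is still missing, and two specific obstacles that you do not anticipate are worth naming concretely.

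First, your wires must span edge paths of arbitrary length, and although you correctly observe that long wires threaten the distance analysis, a stretchable wire gadget is not what the paper builds. Instead it preprocesses $\psi$ so that every edge path has length at most two, by repeatedly splitting a long edge path for a literal $x$ of clause $c$ using a fresh variable $h$ and a fresh clause $c'=\{x,\overline{h}\}$; the edge gadget then has fixed size (two disjoint paths of $37$ vertices each) and all relevant distances become fixed constants that can be tabulated. Without this step, or a proof that a parameterized wire of every required length preserves all the distance inequalities, the reduction is not complete. Second, when the gadgets are glued by identifying $(t,f)$-vertex pairs along the grid, the two vertices of each pair occupy two distinct positions on the tile boundary, so the gluing can fail by an orientation conflict ($t$ landing on $f$); arranging consistent orientations for all tiles simultaneously is exactly what the paper's Lemmas \ref{variableTileProperty} and \ref{clauseTileProperty} provide, and it is here --- not in avoiding all-positive clause gadgets --- that the $1$-negative condition is used: a three-literal clause tile can always be re-embedded so that its single remaining conflict lies on the $\ominus$-$(t,f)$-vertex pair of some variable tile, whose orientation can be flipped freely by switching the variable tile (possibly switching between $G^2_{v,a}$ and $G^2_{v,b}$). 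Your proposal gives no mechanism for this consistency, and without one the claim that $H_\psi$ is a GUDG does not follow.
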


The membership to NP is obvious, because it is easy to verify in polynomial time whether a given set of vertices $S$ is a resolving set. The NP-hardness is shown by a reduction from {\sc $1$-Negative Planar 3-Sat}.
Every directed planar graph with vertex degree at most four has a planar orthogonal grid drawing, whose size (determined by the number of grid points used for the edge paths) is polynomially bounded in the number of vertices. Such a drawing can be computed in polynomial time using the algorithms described in \cite{Tam87,TT89}. Let $G_{\psi}$ be a planar clause variable graph for $\psi$. Since the vertices of $G_{\psi}$ have a degree of at most three, we can assume that we have a planar orthogonal grid drawing $(\rho,{\cal E})$ for $G_{\psi}$.


The following preprocessing phase modifies $\psi=(X,{\cal C})$ into an equivalent instance $\psi'=(X',{\cal C}')$ for which there is a planar orthogonal grid drawing $(\rho',{\cal E}')$ for the clause variable graph $G_{\psi'}$ whose edge paths have a length of at most two. This upper bound on the length of the edge paths is necessary, because the gadget that is going to represent the edge paths cannot be embedded across arbitrarily large areas.

Let $(x,c)$ be an edge of the clause variable graph $G_{\psi}$ whose edge path in ${\cal E}$ is $\ep{x}{c}=(p_1, \dots , p_k)$ with length $l = k-1 \ge 3$. Assume clause $c$ contains a positive literal $x$ (the case for a negative literal $\overline{x}$ runs analogously). Then a new variable $h$ and a new clause $c'=\{x,\overline{h}\}$ are inserted and the positive literal $x$ in clause $c$ is replaced by the positive literal $h$. The new sets of variables and clauses again define an instance for {\sc $1$-Negative Planar 3-Sat} that has a satisfying truth assignment if and only if the original instance has a satisfying truth assignment. The original planar orthogonal grid drawing for $G_{\psi}$ is modified for the new clause variable graph as follows. Define $\rho(c') := p_2$, $\rho(h):=p_3$, $\ep{x}{c'} := (p_1,p_2)$, $\ep{h}{c'} := (p_3,p_2)$, and $\ep{h}{c} := (p_3,\ldots,p_k)$. The old edge path of length $l$ is now replaced by two edge paths of length $1$ and one edge path of length $l-2$. This 
splitting step is repeated until all edge paths have length at most two. See Figure \ref{edge-path-shortening} for an example.

\begin{figure}[hbt]
\medskip
\center
\includegraphics[width=340pt]{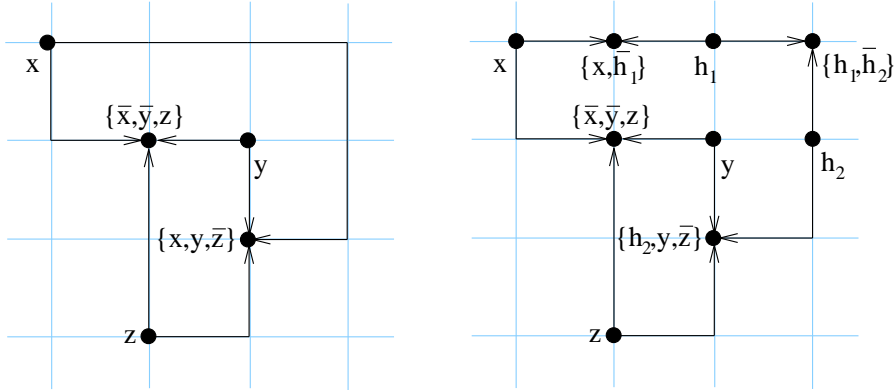}
\caption{Left: A planar orthogonal grid drawing of the clause variable graph $G_{\psi}$ for $\psi = (X,{\cal C})$ with $X=\{x,y,z\}$ and ${\cal C} = \{ \{x,y,\overline{z}\}, \{\overline{x}, \overline{y}, z\} \}$; Right: A planar orthogonal grid drawing of $G_{\psi'}$\label{edge-path-shortening}}
\end{figure}


\subsection{The construction of $H_{\psi}$}

The GUDG $H_{\psi}$ is assembled using copies of the following 6 graphs called {\em gadgets}. Some of these gadgets are similar to the gadgets used in \cite{DPSL12} for the proof that {\sc Planar Metric Dimension} is NP-complete. The modifications are done to obtain GUDGs. Every gadget has two or three special path pairs called {\em $(t,f)$-path pairs}. The $(t,f)$-path pairs are used to represent the edges of the clause variable graph $G_{\psi'}$. The end-vertices of degree one of a $(t,f)$-path pair are called a {\em $(t,f)$-vertex pair}.

\begin{enumerate}
\item
There are three {\em variable gadgets} $G^3_{v}, G^2_{v,a}$, and $G^2_{v,b}$ for the variables of $X'$, see Figure \ref{variableGadgets}. The first variable gadget $G^3_{v}$ is used for the case that the corresponding variable is contained in three clauses, once as a negative literal and twice as a positive literal. The other two 
are used for the case that the corresponding variable is contained in two clauses, once as a negative literal and once as a positive literal.

These three gadgets have two different types of $(t,f)$-path pairs marked by $\ominus$ for the negative literal and $\oplus$ for the positive literals. The $\ominus$-$(t,f)$-vertex pair is $(t_{1,14},f_{1,14})$, the $\oplus$-$(t,f)$-vertex pairs are $(t_{2,14},f_{2,14})$ and $(t_{3,14},f_{3,14})$.

\item
There are two {\em clause gadgets} $G^3_{c}, G^2_{c}$ for the clauses of ${\cal C}'$, see Figure \ref{clauseGadgets} to the left and in the middle. The first clause gadget $G^3_{c}$ is used for clauses with three literals. The second clause gadget $G^2_{c}$ is an induced subgraph of $G^3_{c}$. It is used for clauses with two literals.

The $(t,f)$-vertex pairs are $(t_{1,15},f_{1,15})$, $(t_{2,15},f_{2,15})$ and $(t_{3,15},f_{3,15})$.

\item
There is an {\em edge gadget} $G_e$ for the edges of $G_{\psi'}$, see Figure \ref{clauseGadgets} to the right. It is used to connect variable gadgets with clause gadgets and consists of two disjoint paths with 37 vertices each.

These two paths represent one $(t,f)$-path pair with two $(t,f)$-vertex pairs $(t_{1},f_{1})$ and $(t_{37},f_{37})$.


\end{enumerate}

\begin{figure}[hbt]
\null
\hfill
\includegraphics[height=170pt]{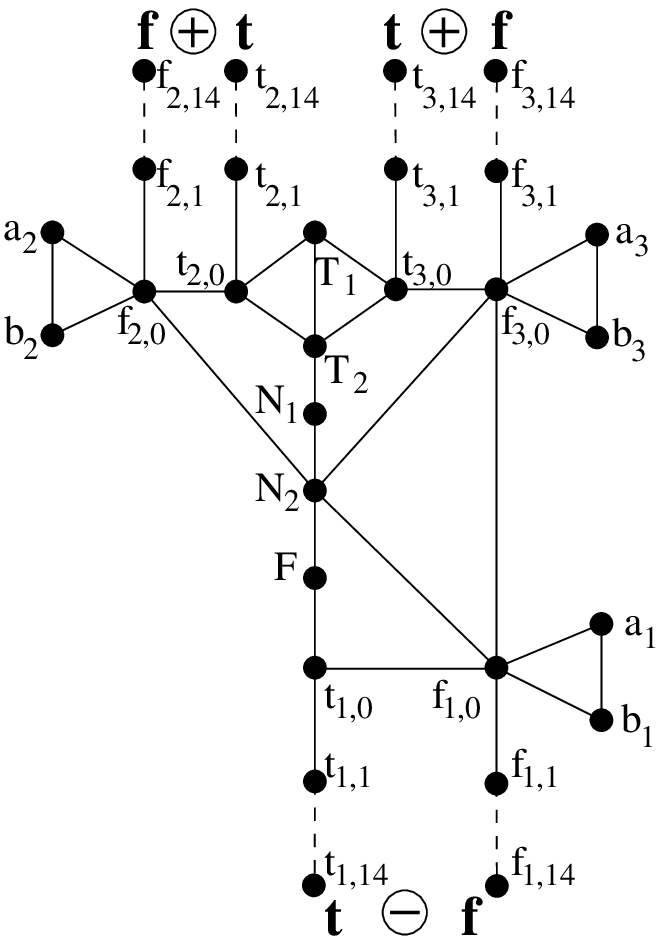}
\hfill
\includegraphics[height=170pt]{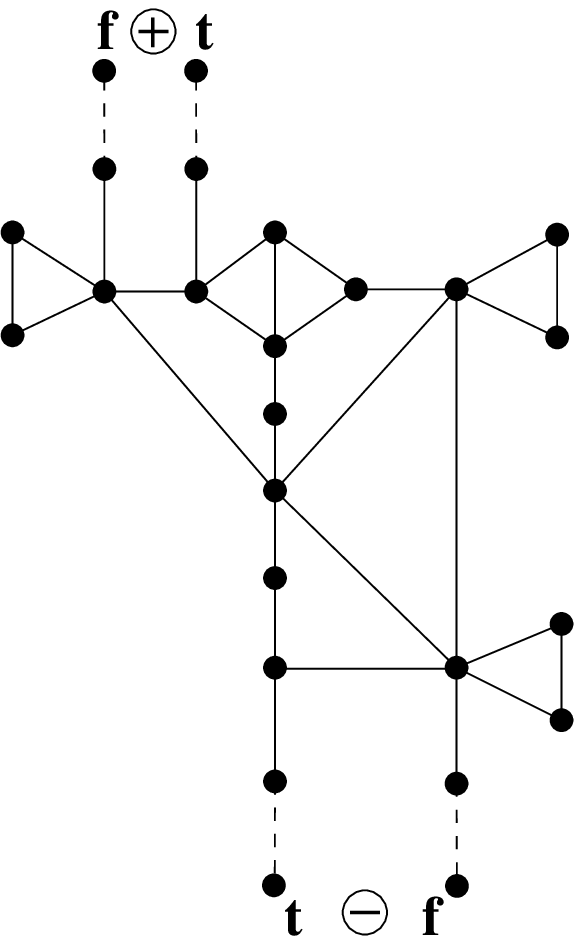}
\hfill
\includegraphics[height=170pt]{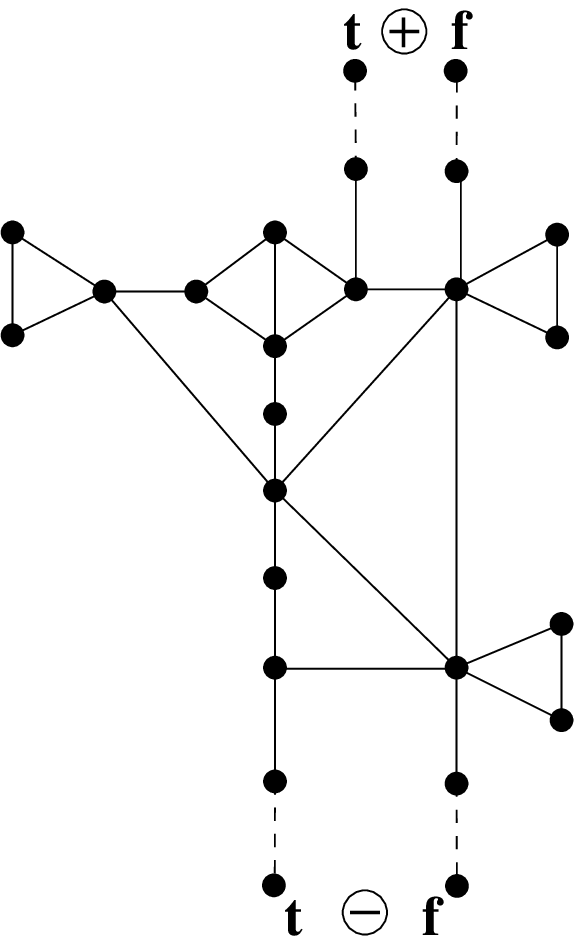}
\hfill
\null
\caption{The variable gadgets $G^3_{v}$, $G^2_{v,a}$, and $G^2_{v,b}$ f.l.t.r. The naming of the vertices in $G^2_{v,a}$ and $G^2_{v,b}$ is the same as in $G^3_{v}$, except that some vertices are missing. Every dashed line represents a path with 12 further vertices.}
\label{variableGadgets}
\end{figure}

\begin{figure}[hbt]
\null\hfill
\includegraphics[height=150pt]{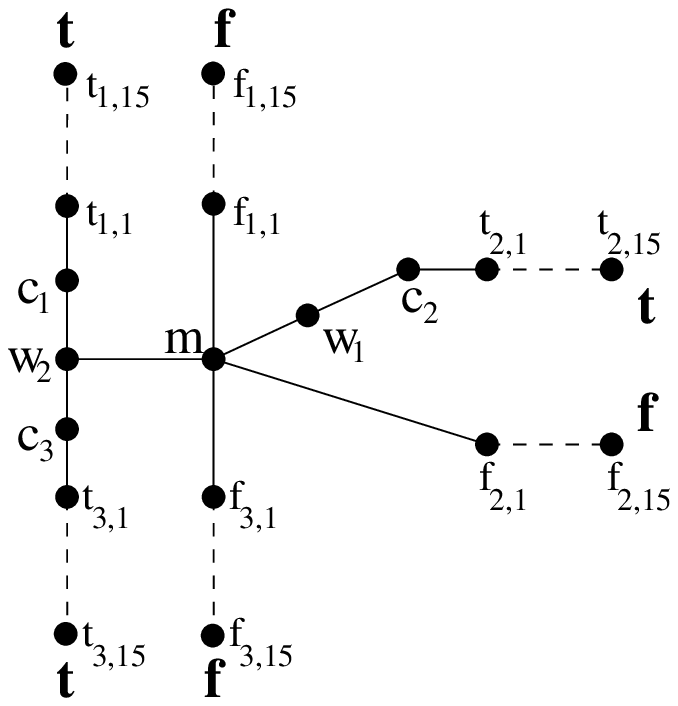} \hfill
\includegraphics[height=150pt]{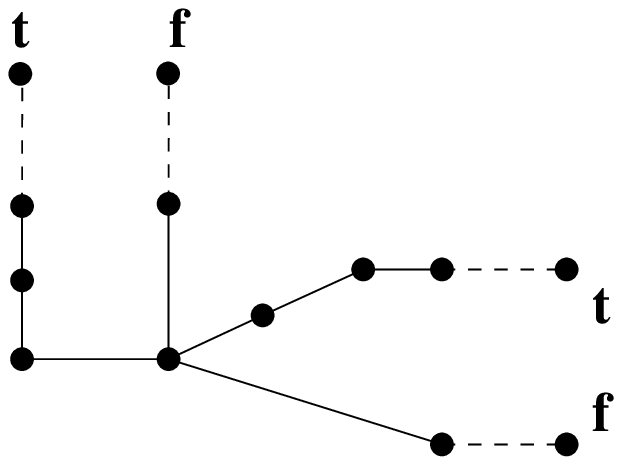} \hfill\null

\hfill\includegraphics[width=150pt]{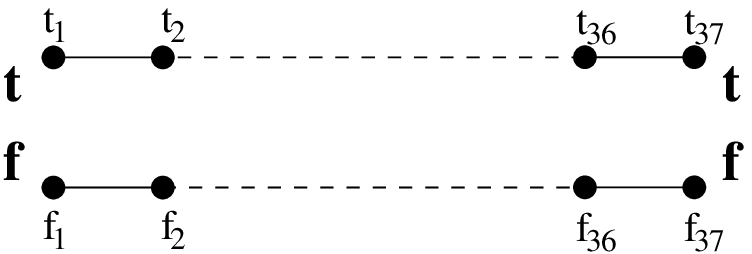} \hfill
\null
\caption{The clause gadgets $G^3_{c}$ and $G^2_{c}$ and the edge gadget $G_{e}$. The naming of the vertices in $G^2_{c}$ is the same as in $G^3_{c}$, except that some vertices are missing. Every dashed line represents a path with 13 or 33 further vertices, respectively.}
\label{clauseGadgets}
\medskip
\end{figure}

Graph $H_{\psi}$ contains for every variable $x \in X'$ with two positive literals and one negative literal in the clauses of ${\cal C}'$ a copy of $G^3_{v}$. It contains for every variable $x \in X'$ with one positive and one negative literal in the clauses of ${\cal C}'$ a copy of $G^2_{v,a}$ or $G^2_{v,b}$. Then $H_{\psi}$ contains for every clause of ${\cal C}'$ with three literals one copy of $G^3_{c}$ and for every clause of ${\cal C}'$ with two literals one copy of $G^2_{c}$. Finally it contains for every edge of $G_{\psi'}$ one copy of $G_e$.

The gadgets are connected to each other by identifying $(t,f)$-vertex pairs as follows. Let $G_x$ be a variable gadget for variable $x$, $G_c$ a clause gadget for clause $c$, and $G_e$ the edge gadget for the edge $(x,c) \in E_{G_{\psi'}}$. One $(t,f)$-vertex pair of $G_e$ is identified with one $(t,f)$-vertex pair of $G_c$ and the other $(t,f)$-vertex pair of $G_e$ is identified with one $(t,f)$-vertex pair of $G_x$. If clause $c$ contains the positive literal $x$, then a $\oplus$-$(t,f)$-vertex pair is used, otherwise, a $\ominus$-$(t,f)$-vertex pair is used.


Identifying two $(t,f)$-vertex pairs $(t_1,f_1)$ and $(t_2,f_2)$ means that all vertices adjacent to $t_2$ are connected to $t_1$, all vertices adjacent to $f_2$ are connected to $f_1$, and then the two vertices $t_2,f_2$ and their incident edges are removed.



\subsection{A GUDG embedding for $H_{\psi}$}

In this section, it is shown that $H_{\psi}$ can be assembled such that there is a GUDG embedding $\rho_{H_{\psi}}$ for $H_{\psi}=(V_{\psi},E_{\psi})$. For each gadget $G$ we present several different GUDG embeddings. In every of these embeddings the vertices of the gadget are mapped to positions inside a polygon. The $(t,f)$-vertices are placed at the border of the polygon. The non-$(t,f)$-vertices, i.e. all vertices not belonging to a $(t,f)$-vertex pair, are placed at positions with a distance $> 1$ to all positions outside the polygon.

\begin{definition}\label{DefinitionTile}
A {\em tile} is a pair $(G,\rho_G)$, where $G=(V,E)$ is one of the gadgets defined above
and $\rho_G:V \to \mathbb{R}^2$ is a GUDG embedding for $G$, called {\em tile embedding}, such that for every $(t,f)$-vertex pair $(t,f)$,
\[\{\rho_G(t),\rho_G(f)\} \in \left\{\begin{array}{l}\{(-6,-1),(-6,1)\}, ~ \{(6,-1),(6,1)\},\\\{(-1,-6),(1,-6)\}, ~ \{(-1,6),(1,6)\}\end{array}\right\}.\]

The $(t,f)$-vertices are placed at the border of the square $[-6,6] \times [-6,6]$ with side length $12$ and center $(0,0)$. 

All other vertices of the gadget are mapped to positions of the square $[-6,6] \times [-6,6]$ that have a distance $>1$ to all positions of $\mathbb{R}^2 \setminus [-6,6] \times [-6,6]$ outside the square $[-6,6] \times [-6,6]$.
\end{definition}

We consider several different tile embeddings for gadget $G^3_{v}$. These are the three tile embeddings shown in Figure \ref{variableTiles3} and all tile embeddings obtained by $90$, $180$, or $270$ degree rotations and/or a horizontal or vertical mirroring. The rotations by $90$, $180$, and $270$ degree and the horizontal and vertical mirroring can be performed by the mappings
\[ 
\begin{array}{lllll}
f_{90^{\circ}} & : (x,y) \mapsto (-y,x), & $\qquad$ & f_{\text{horizontal}} & : (x,y) \mapsto (-x,y), \\ 
f_{180^{\circ}} & : (x,y) \mapsto (-x,-y), & $\qquad$ & f_{\text{vertical}} & : (x,y) \mapsto (x,-y), \\
f_{270^{\circ}} & : (x,y), \mapsto (y,-x). & & & \\
\end{array}
\]

\begin{figure}[hbt]
\null\hfill
\includegraphics[width=100pt]{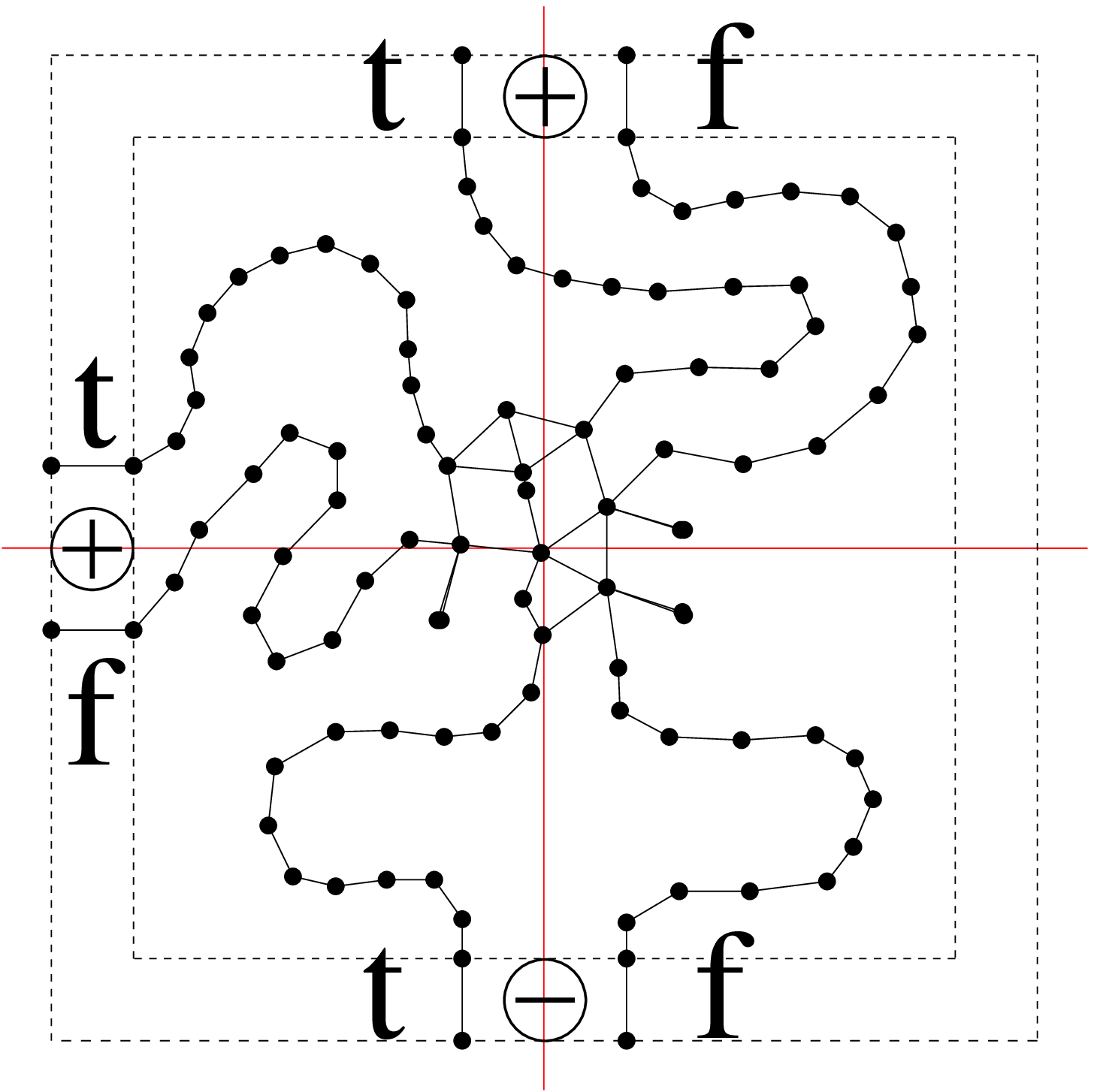} \hfill
\includegraphics[width=100pt]{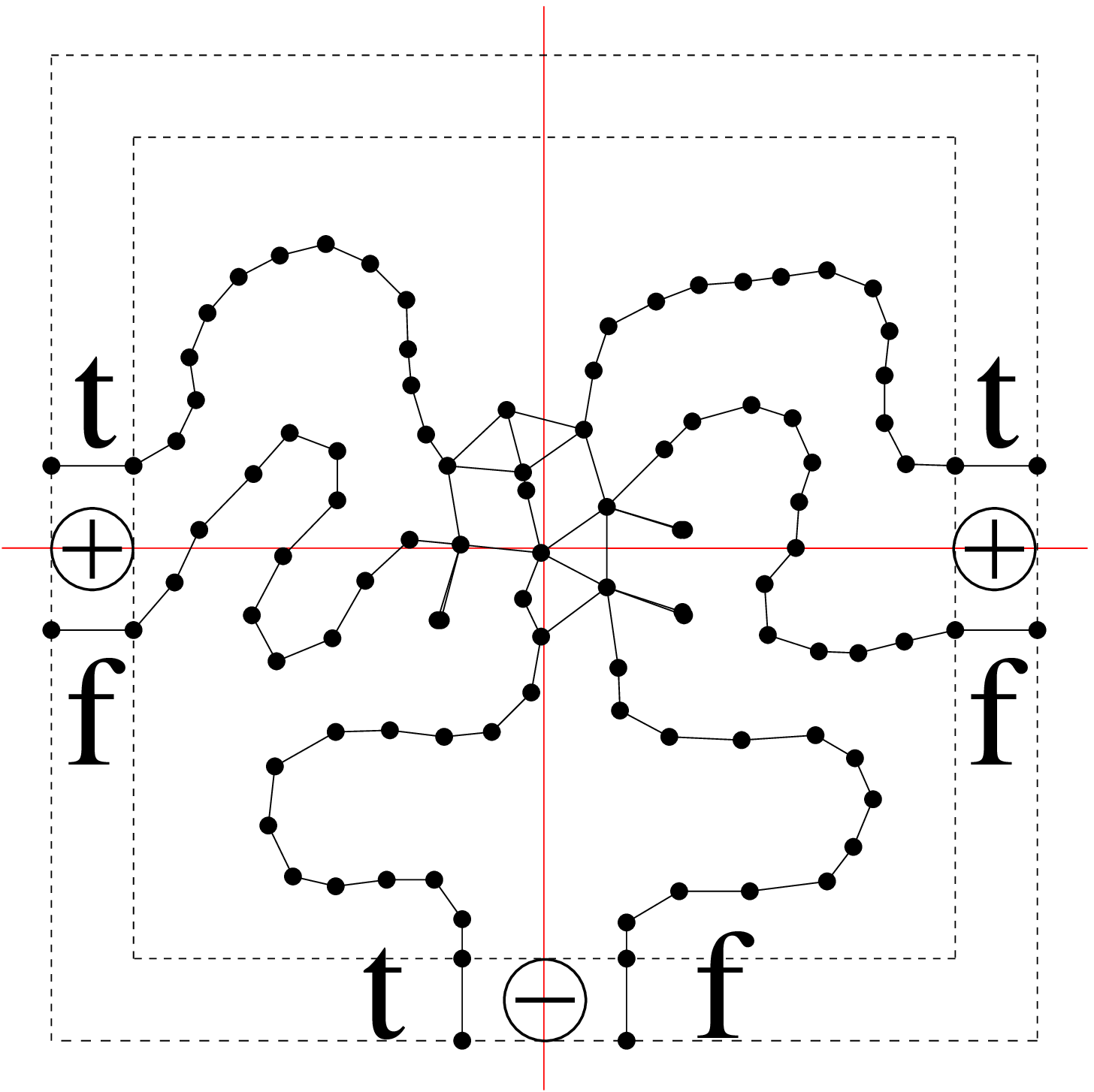} \hfill
\includegraphics[width=100pt]{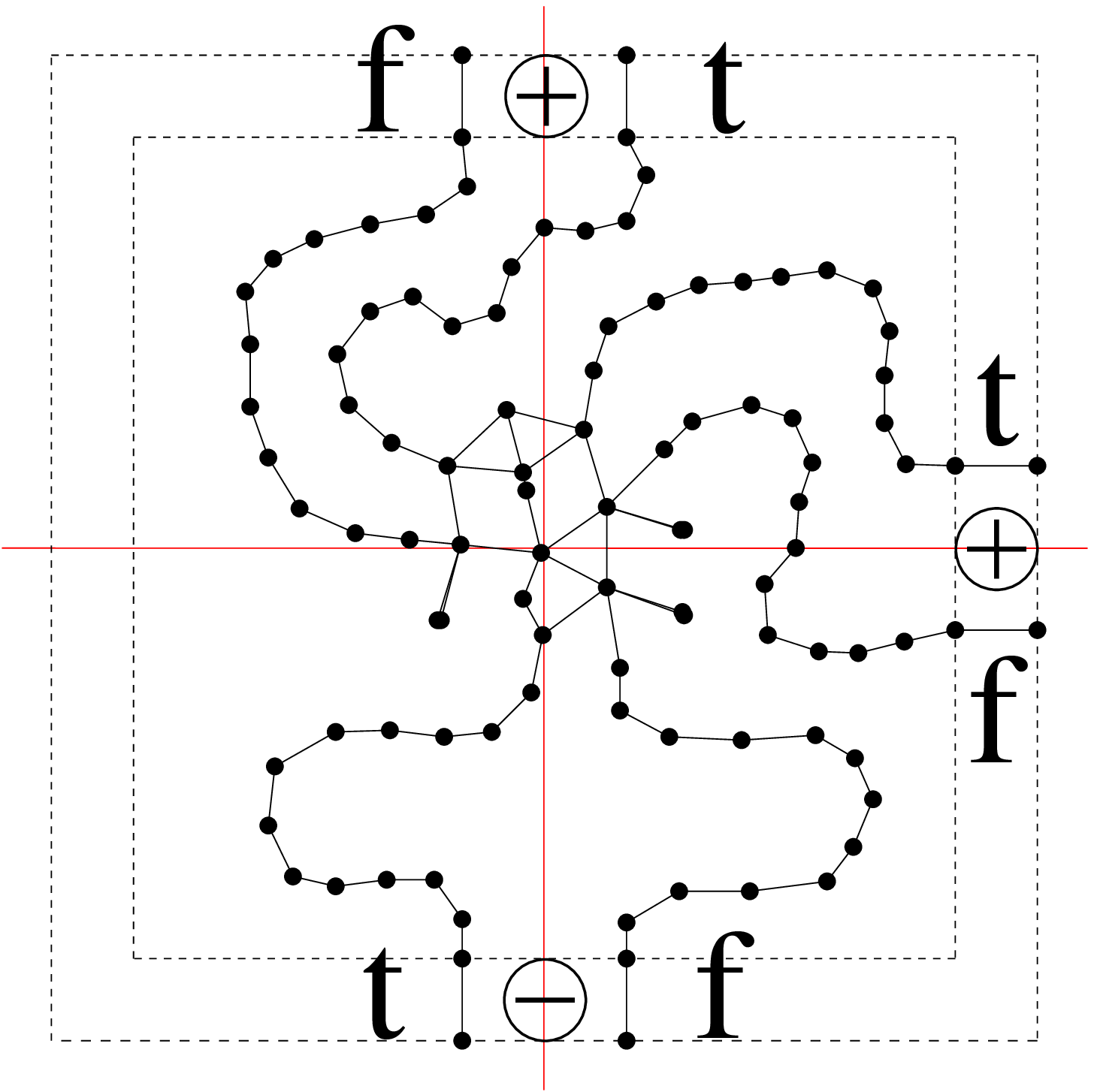} \hfill
\null
\caption{Three tile embeddings for variable gadget $G^3_{v}$. Rotating and mirroring yields further tile embeddings. The red $+$ marks the center $(0,0)$ of the coordinate system.}
\label{variableTiles3}
\medskip
\end{figure}

We also consider several different tile embeddings for every gadget $G^2_{v,a}$, $G^2_{v,b}$, $G^3_{c}$, $G^2_{c}$, and $G_{e}$, see Figures \ref{variableTiles2ab}, \ref{clauseTiles3var}, and \ref{clauseTiles2var}.

\begin{figure}[hbt]
\medskip
\null\hfill
\includegraphics[width=100pt]{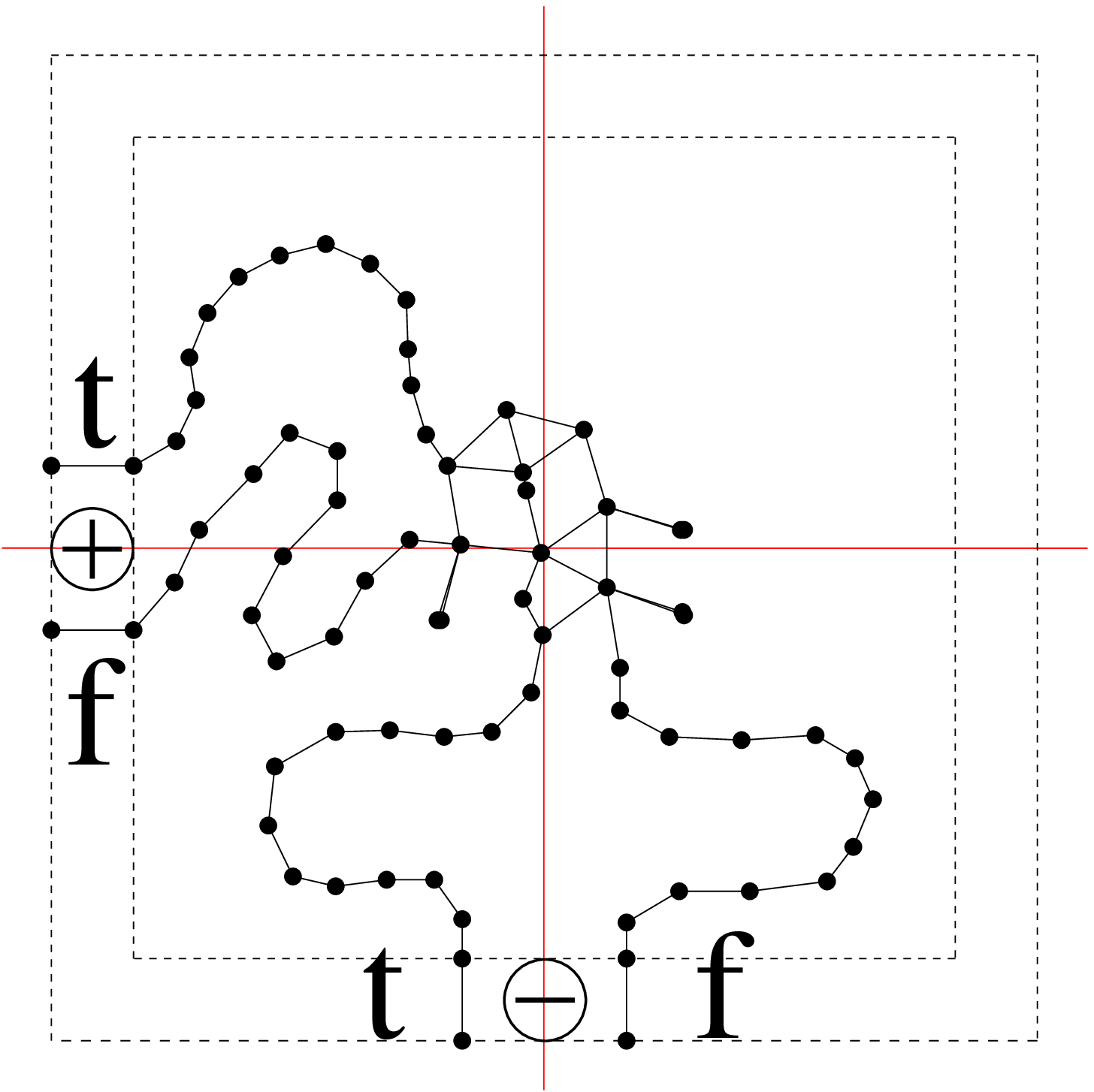} \hfill
\includegraphics[width=100pt]{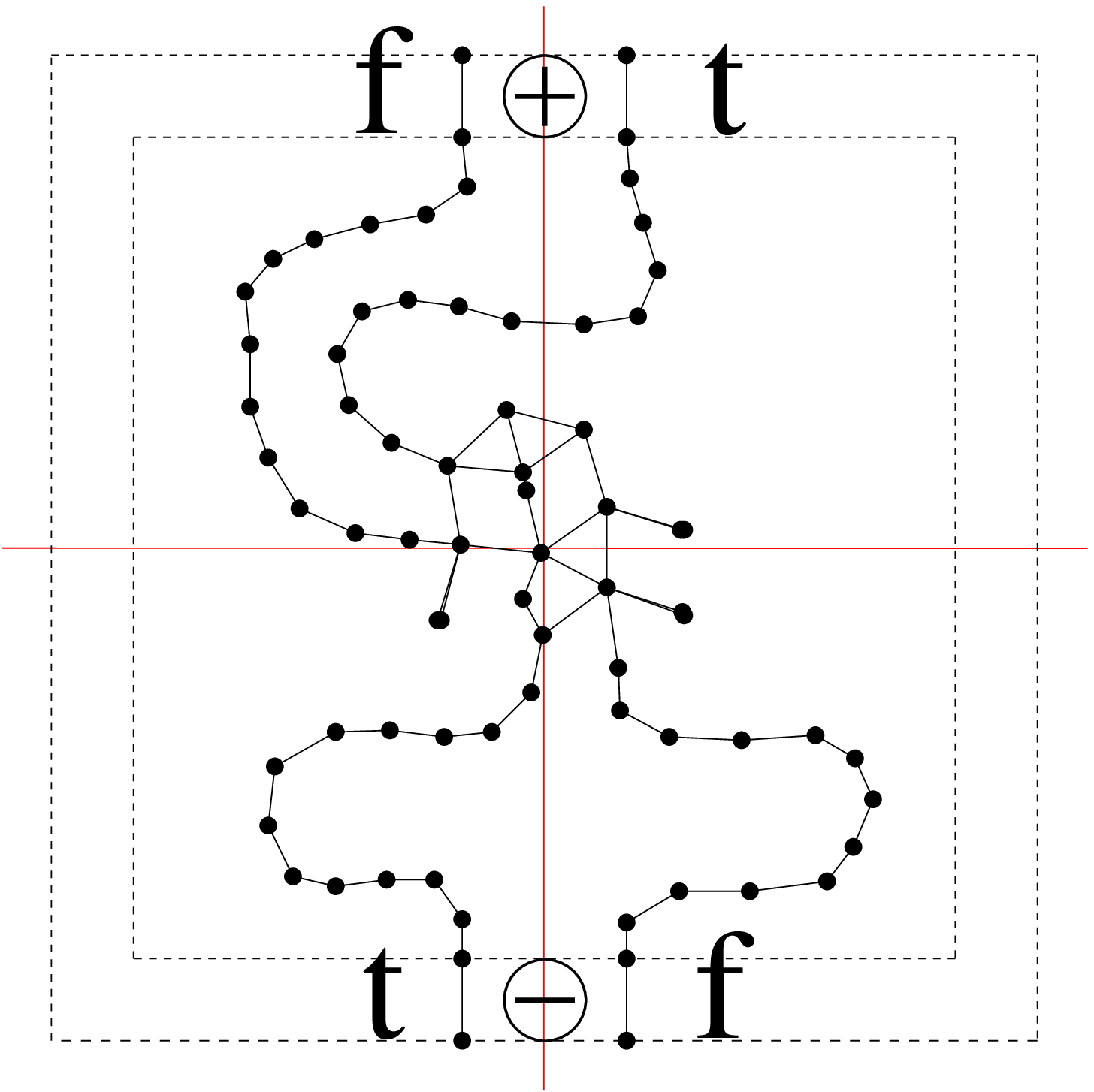} \hfill
\includegraphics[width=100pt]{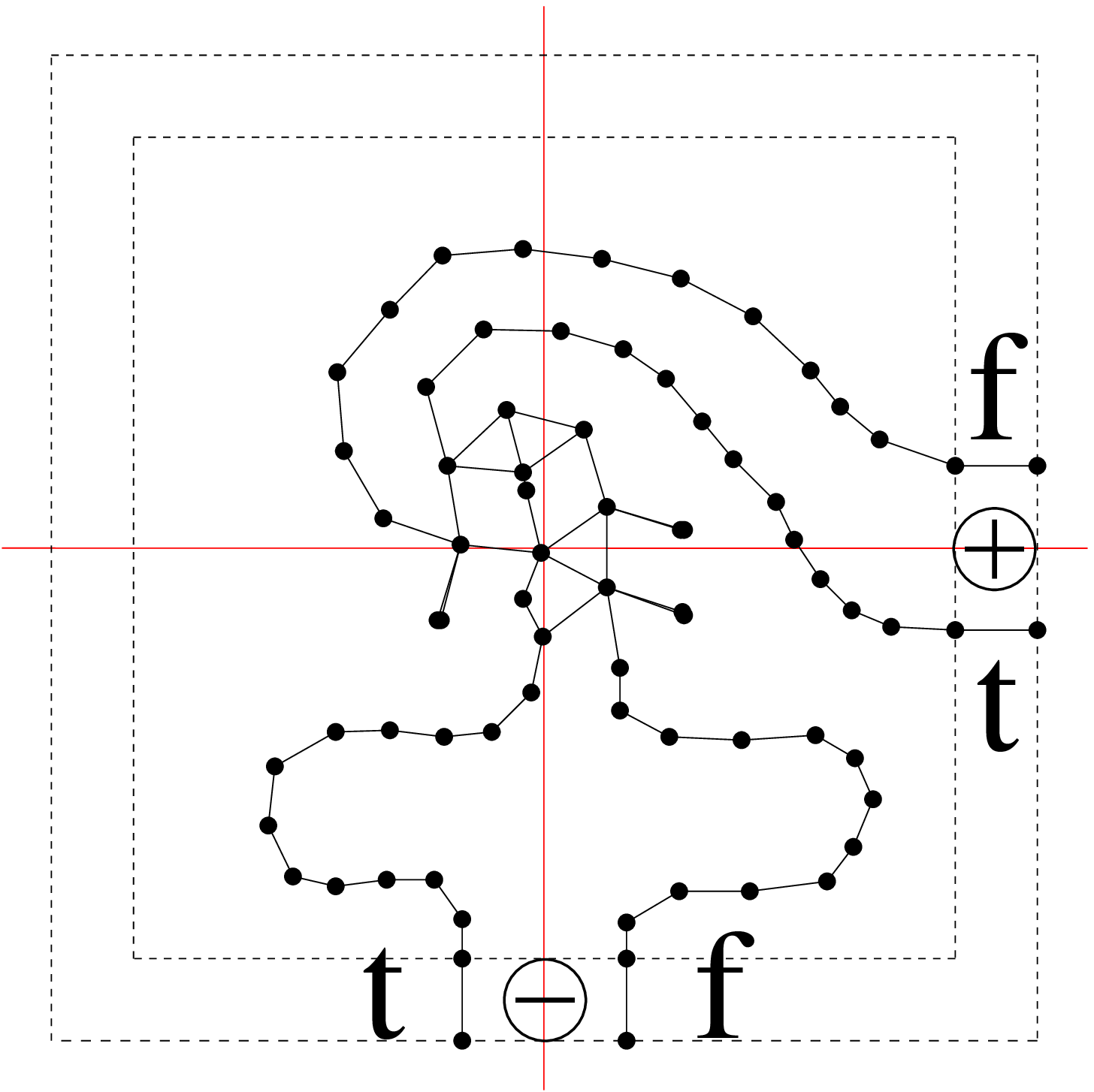} \hfill \null

\null\hfill
\includegraphics[width=100pt]{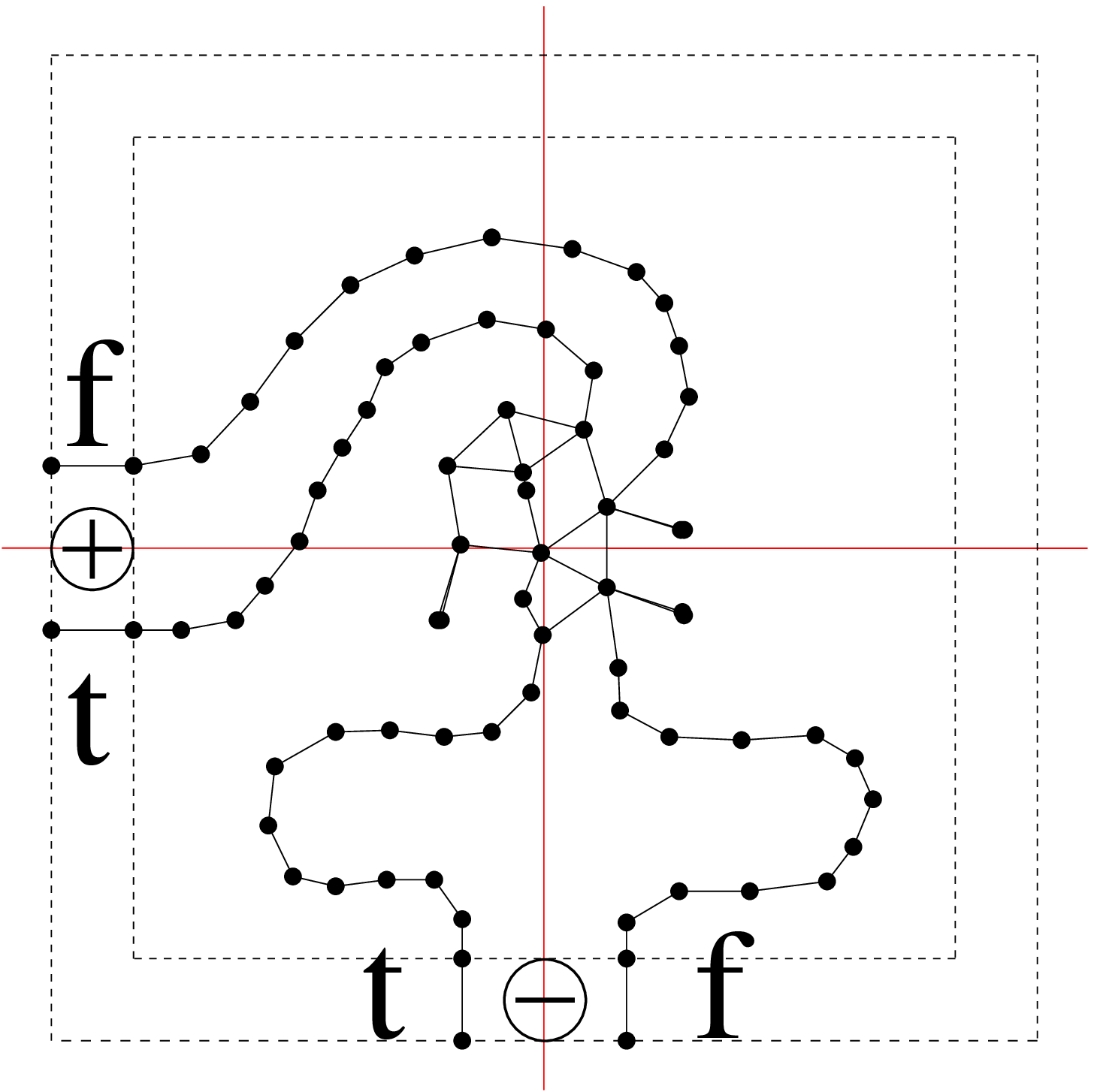} \hfill
\includegraphics[width=100pt]{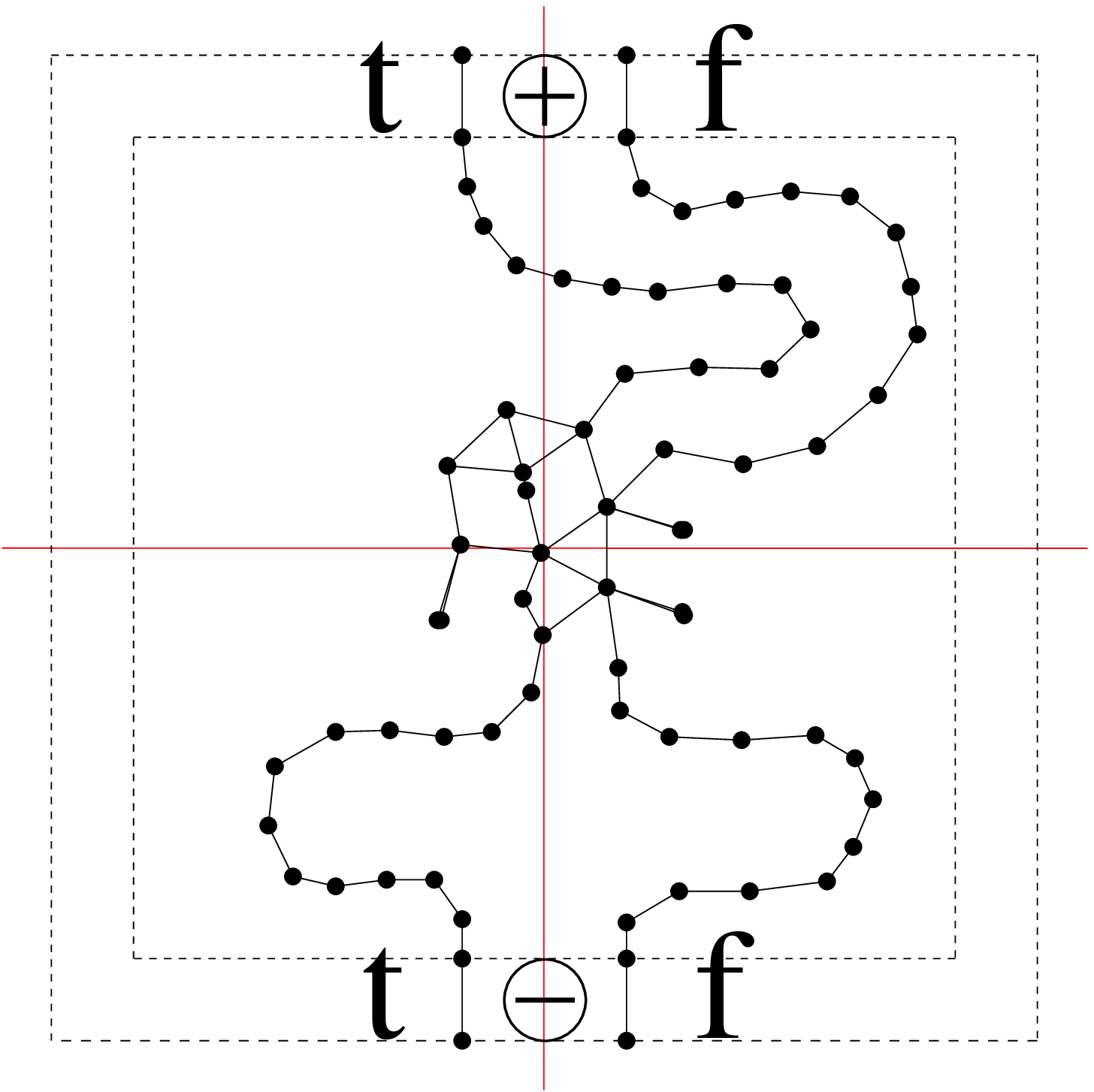} \hfill
\includegraphics[width=100pt]{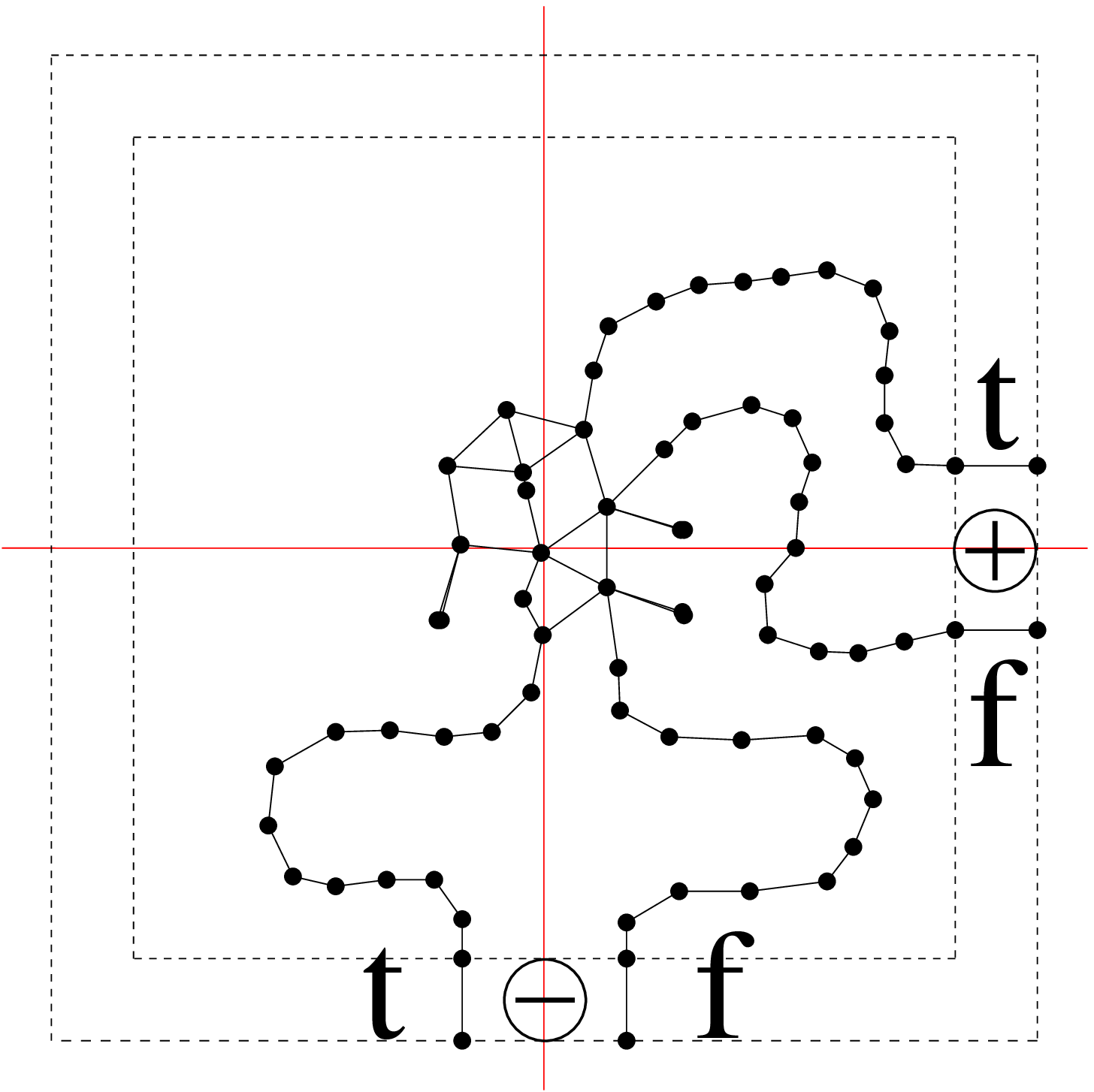} \hfill
\null
\caption{Three tile embeddings for variable gadget $G^2_{v,a}$ (top) and three tile embeddings for variable gadget $G^2_{v,b}$ (bottom). Rotating and mirroring yields further tile embeddings for each gadget.}
\label{variableTiles2ab}
\end{figure}

The tile embeddings resulting from the last embedding of Figure \ref{clauseTiles2var} for edge gadget $G_e$ are only used for edges whose edge paths have length 1. For edge paths of length 2, the following definition is used.

Let $S_{x,y} := \{(x'+x,y'+y) ~|~(x',y') \in [-6,6] \times [-6,6]\}$ be the square of side length $12$ and center $(0,0)$ translated by $(x,y) \in \mathbb{Z}^2$. The vertices of $G_e$ are mapped to positions of one of the following polygons $P_i$, $1 \leq i \leq 6$, formed by the union of $3$ squares.
\begin{center}
\begin{tabular}{llllllllllll}
$P_1 := $ & $S_{-12,0}$ & $\cup$ & $S_{0,0}$ & $\cup$ & $S_{12,0}$ & $\quad P_2 := $ & $S_{0,12}$  & $\cup$ & $S_{0,0}$ & $\cup$ & $S_{0,-12}$ \\
$P_3 := $ & $S_{-12,0}$ & $\cup$ & $S_{0,0}$ & $\cup$ & $S_{0,-12}$ & $\quad P_4 := $ & $S_{-12,0}$ & $\cup$ & $S_{0,0}$ & $\cup$ & $S_{0,12}$\\
$P_5 := $ & $S_{0,-12}$ & $\cup$ & $S_{0,0}$ & $\cup$ & $S_{12,0}$ & $\quad P_6 := $ & $S_{0,12}$  & $\cup$ & $S_{0,0}$ & $\cup$ & $S_{12,0}$ \\
\end{tabular}
\end{center}

\begin{figure}[hbt]
\null\hfill
\includegraphics[width=100pt]{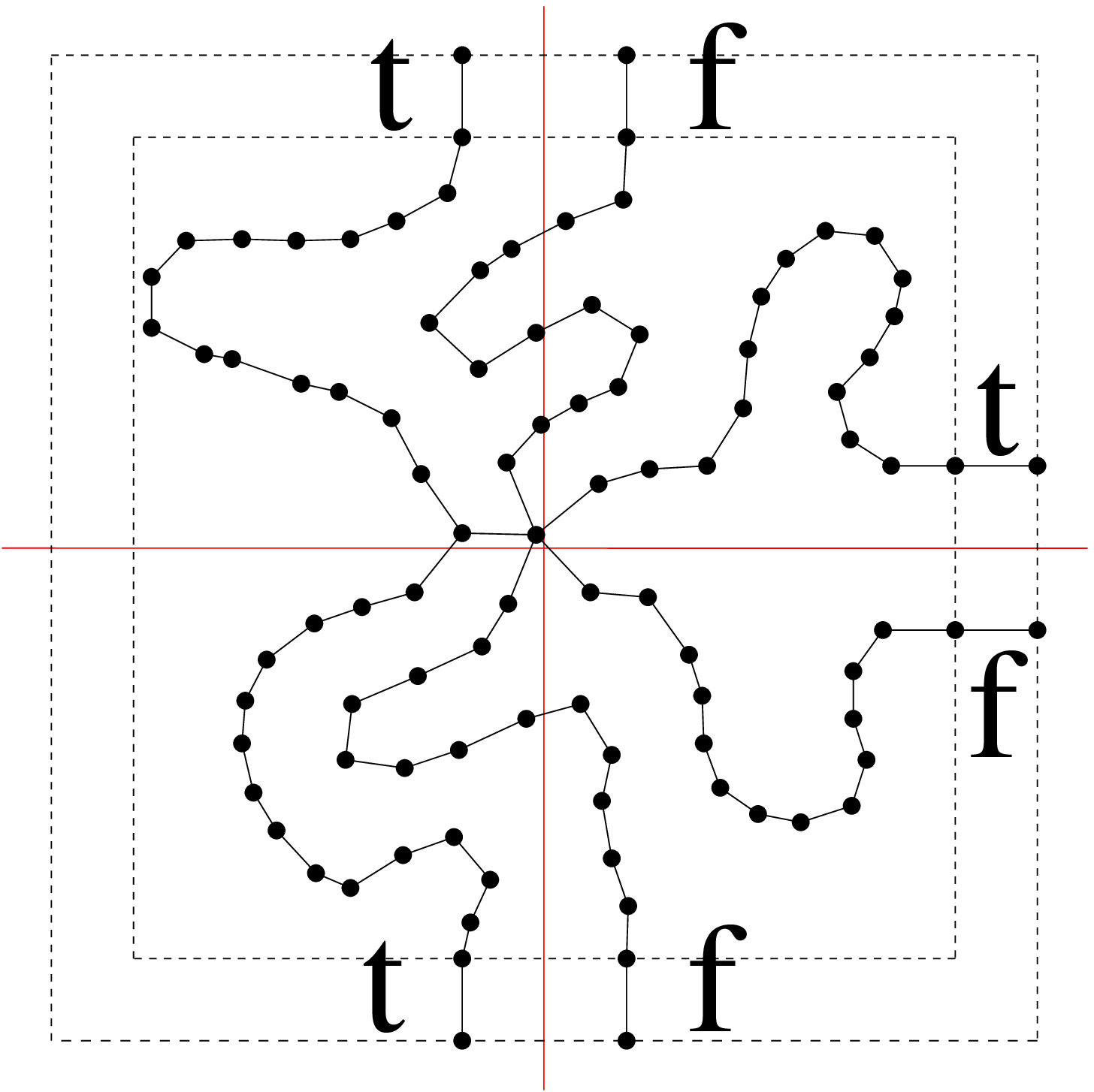} \hfill
\includegraphics[width=100pt]{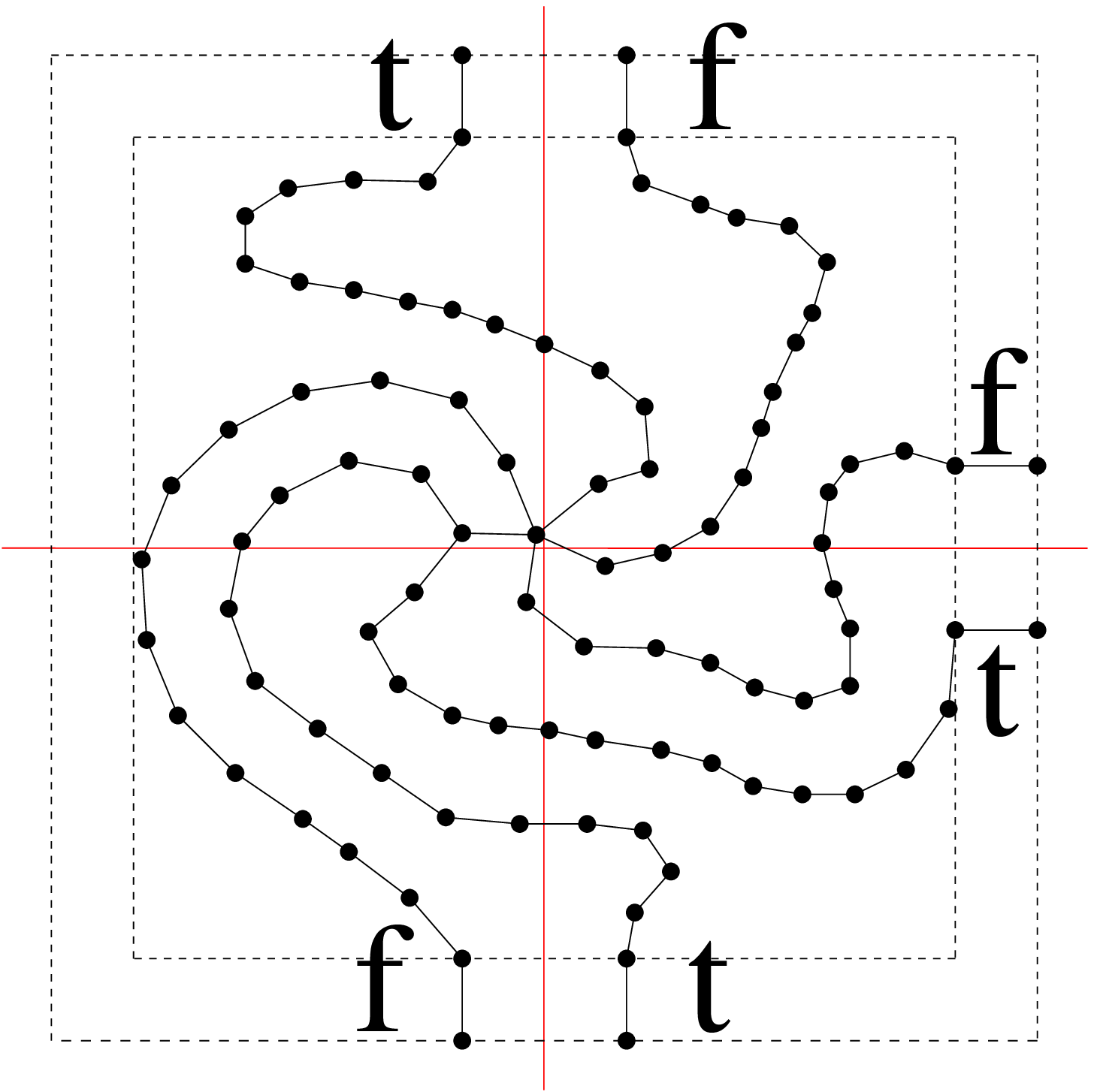} \hfill
\includegraphics[width=100pt]{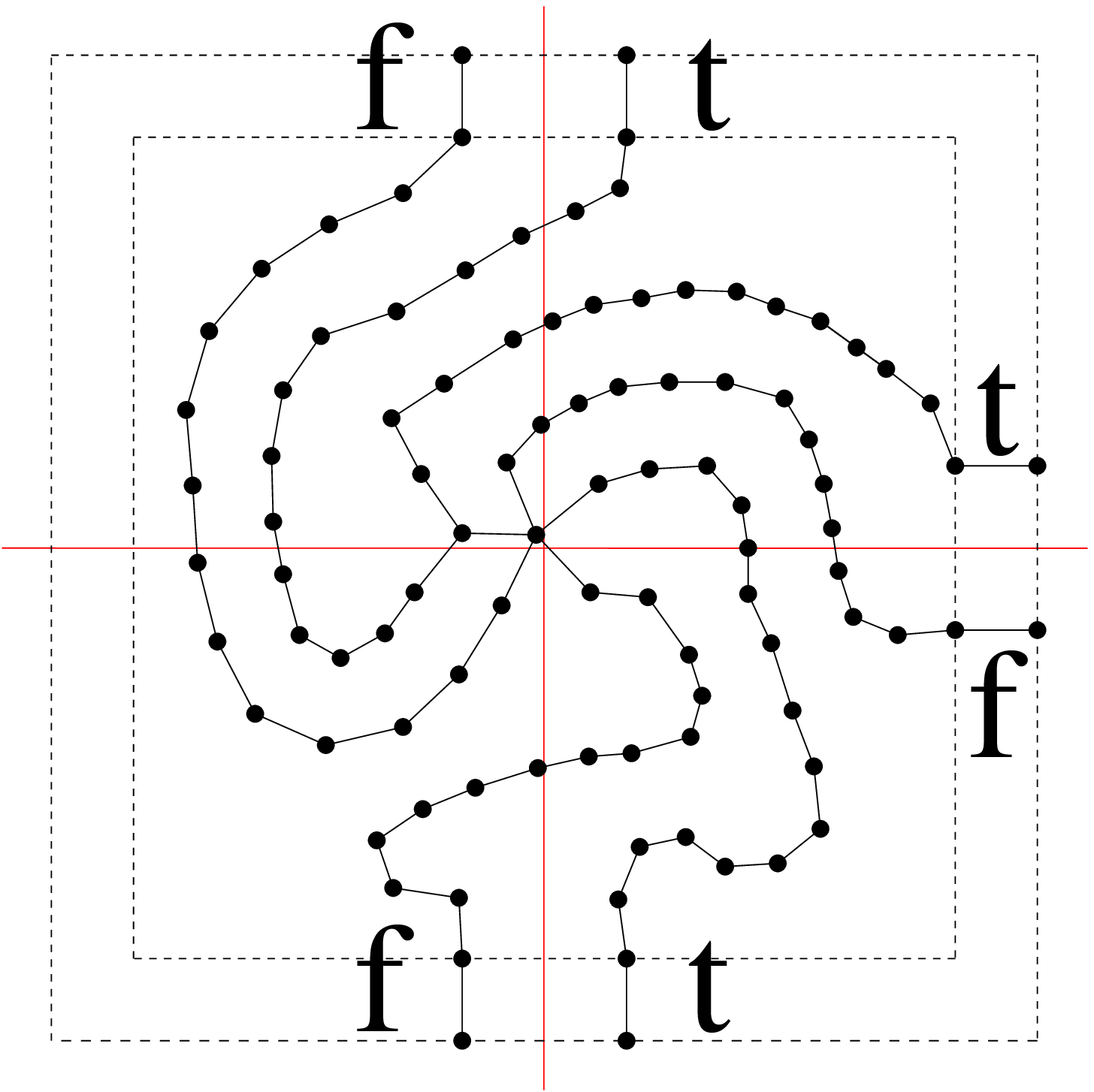} \hfill
\null
\caption{Three tile embeddings for clause gadget $G^3_{c}$. Rotating and mirroring yields further tile embeddings.}
\label{clauseTiles3var}
\medskip
\end{figure}

\begin{figure}[hbt]
\null\hfill
\includegraphics[width=100pt]{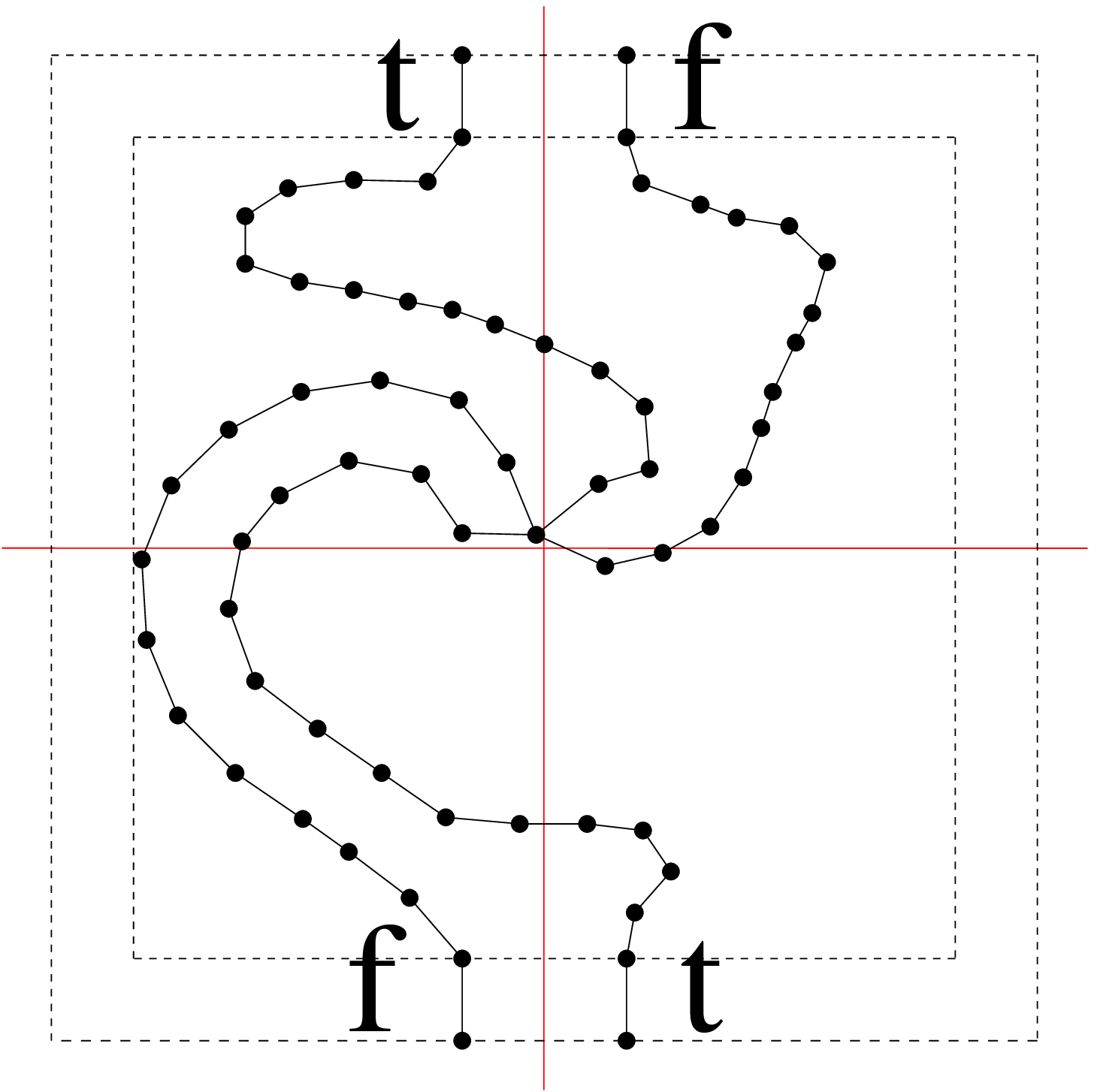} \hfill
\includegraphics[width=100pt]{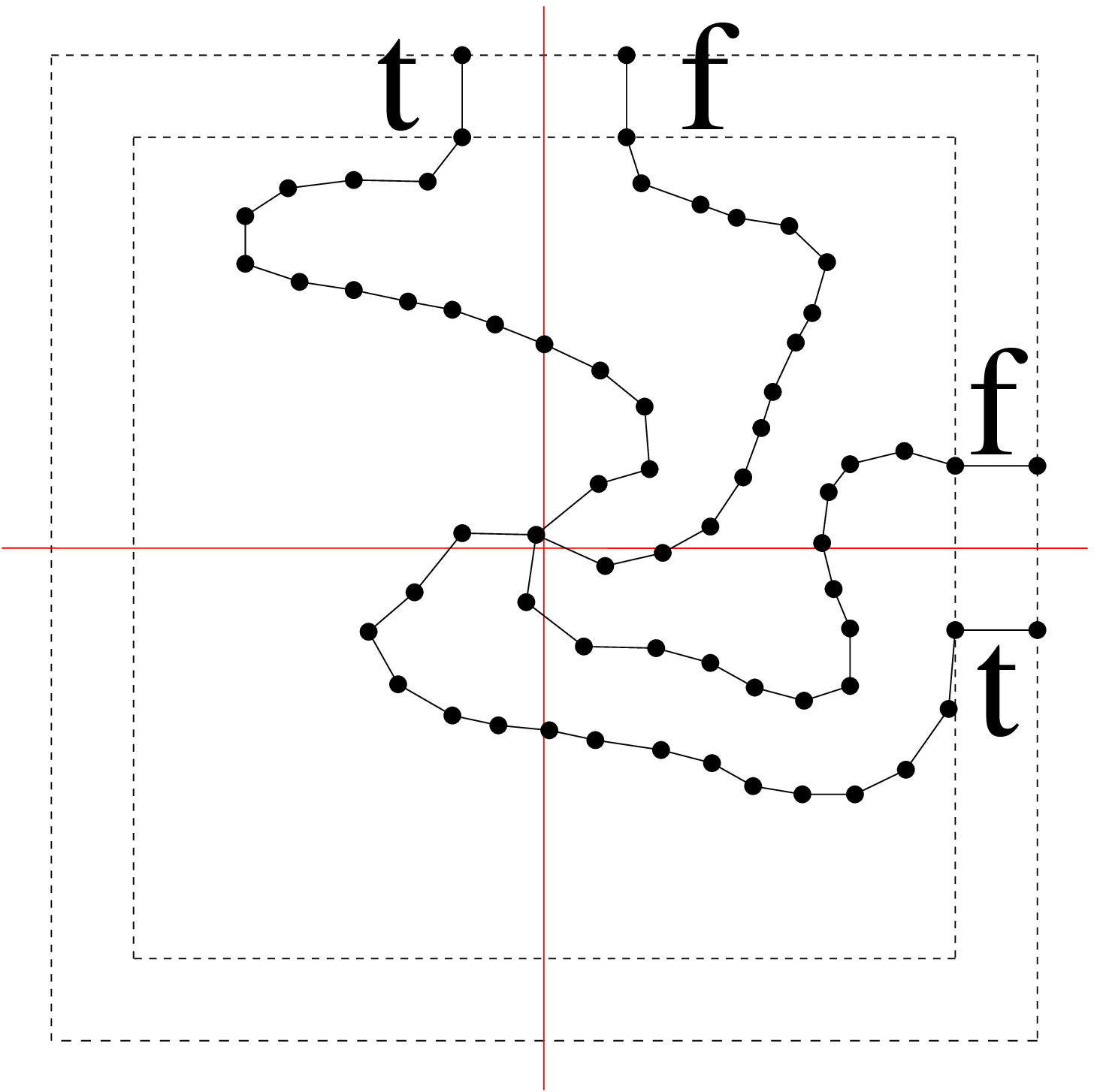} \hfill
\includegraphics[width=100pt]{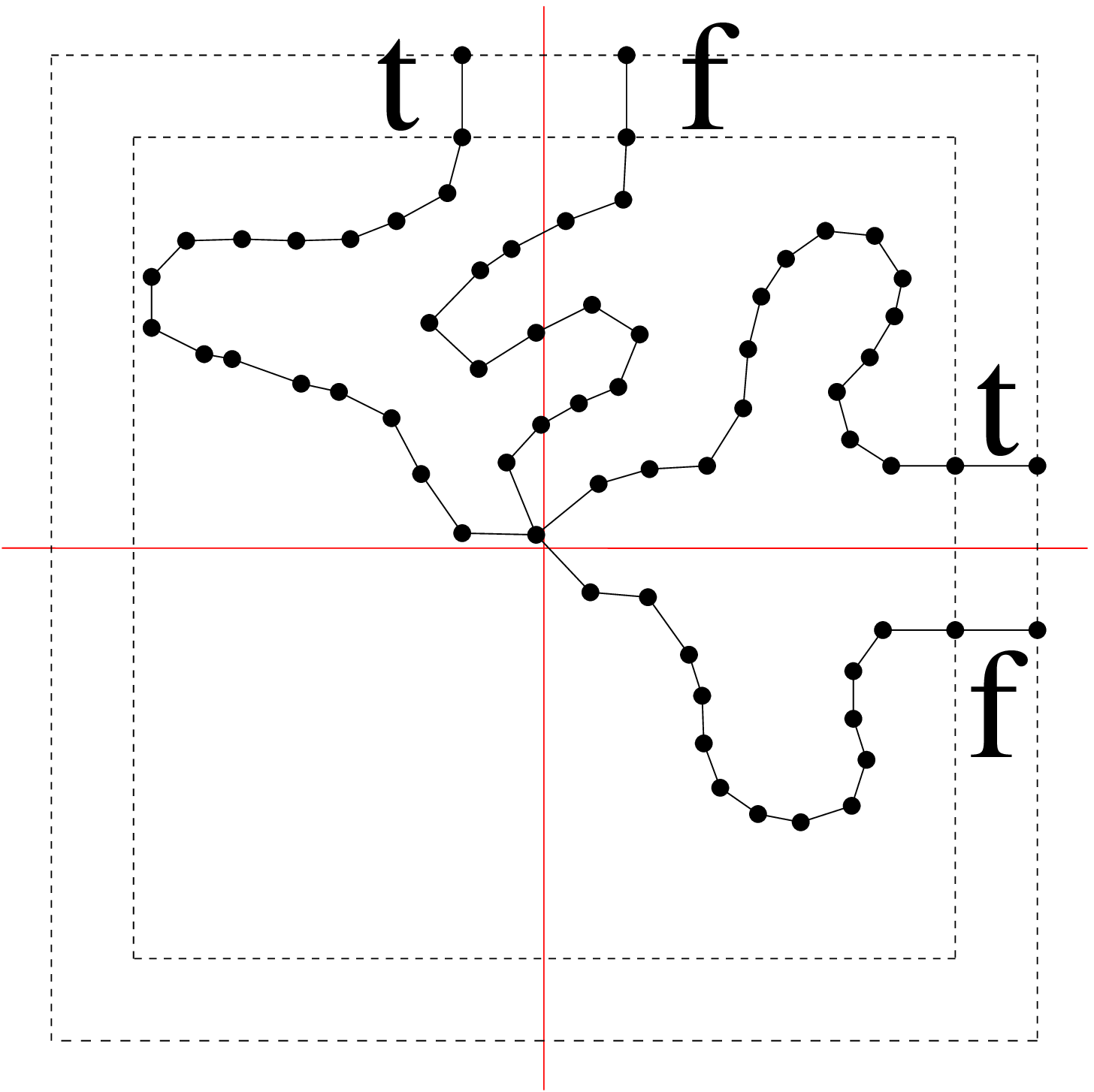} \hfill

\null\hfill
\includegraphics[width=100pt]{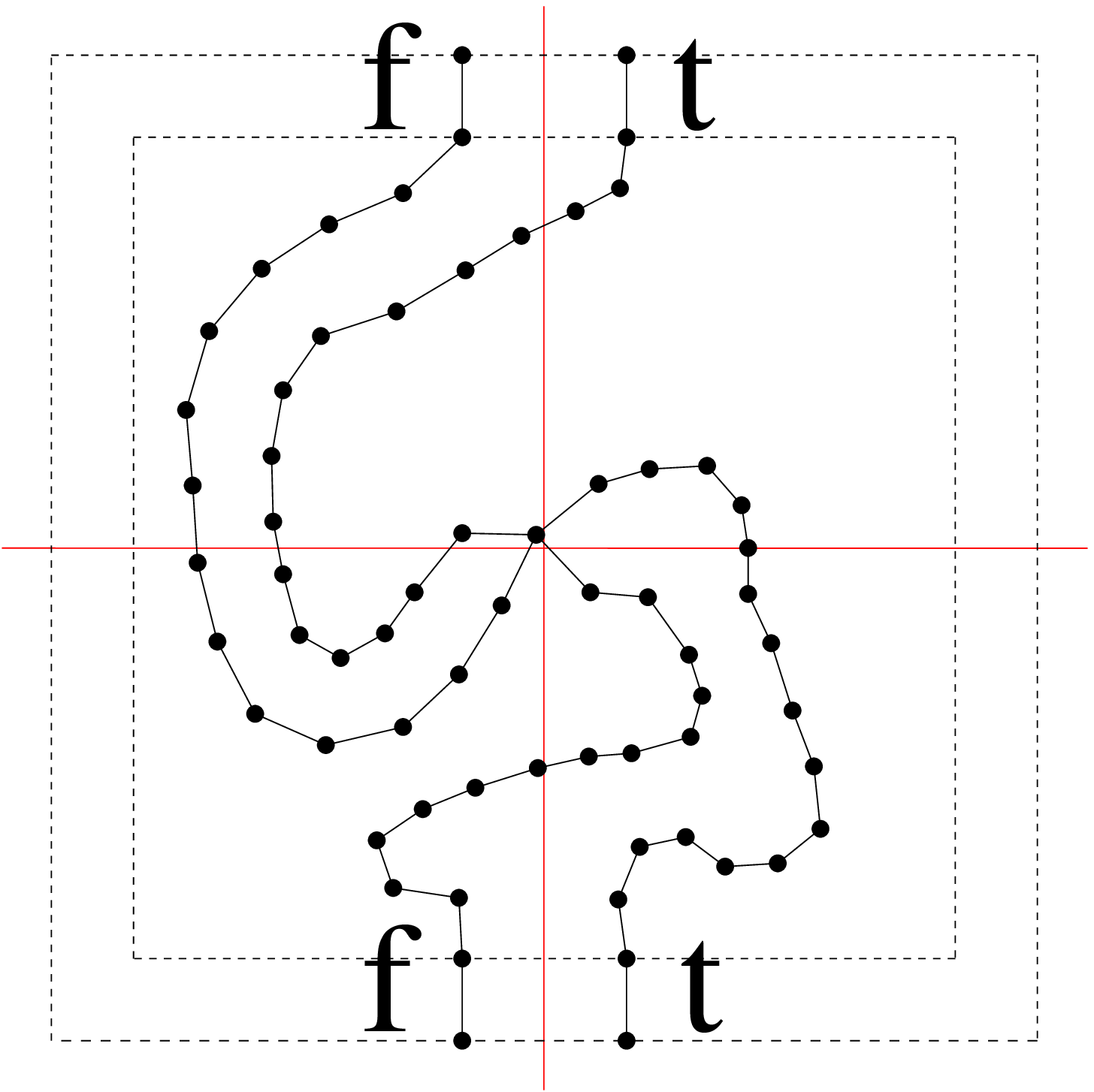} \hfill
\includegraphics[width=100pt]{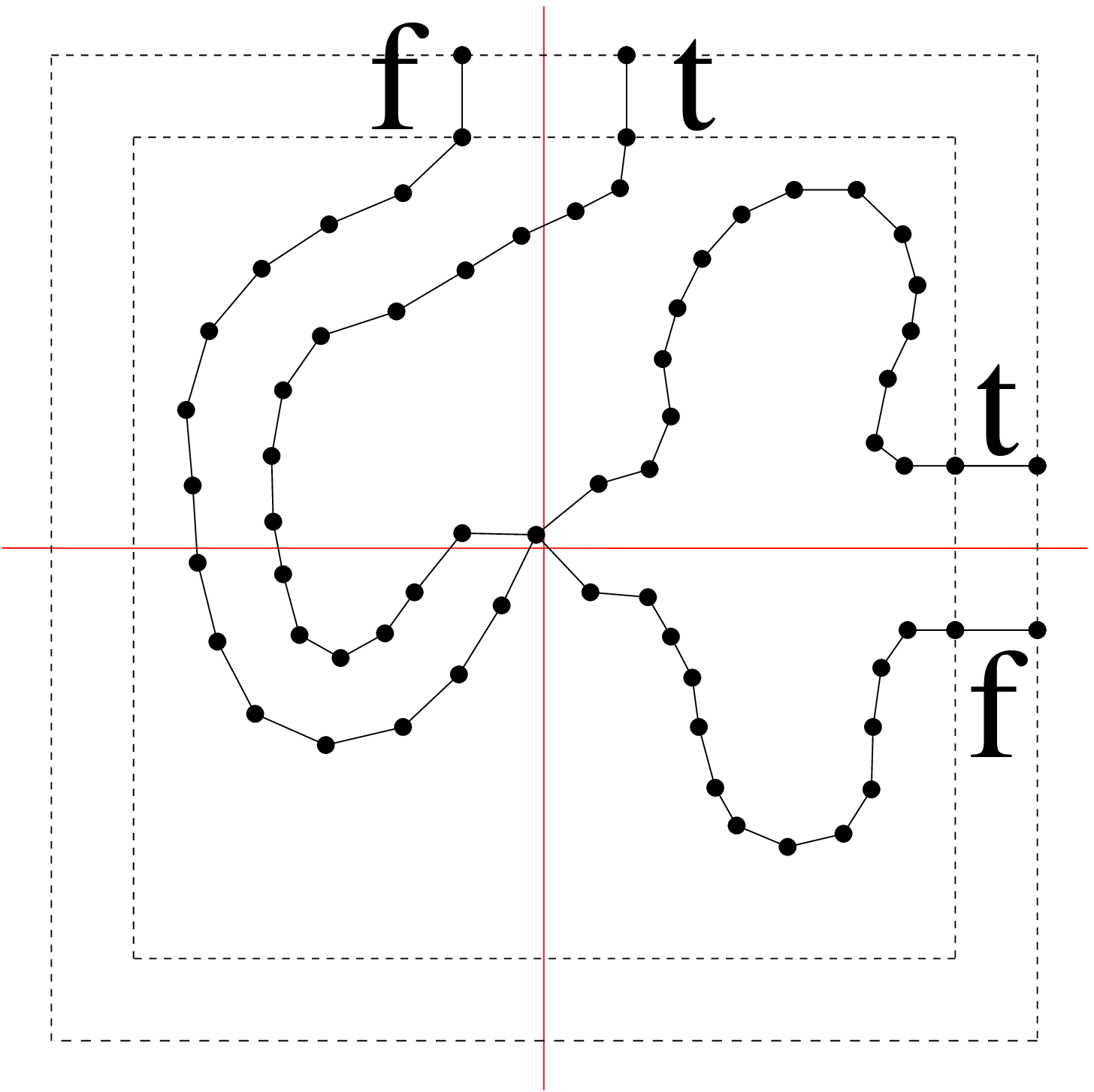} \hfill
\includegraphics[width=100pt]{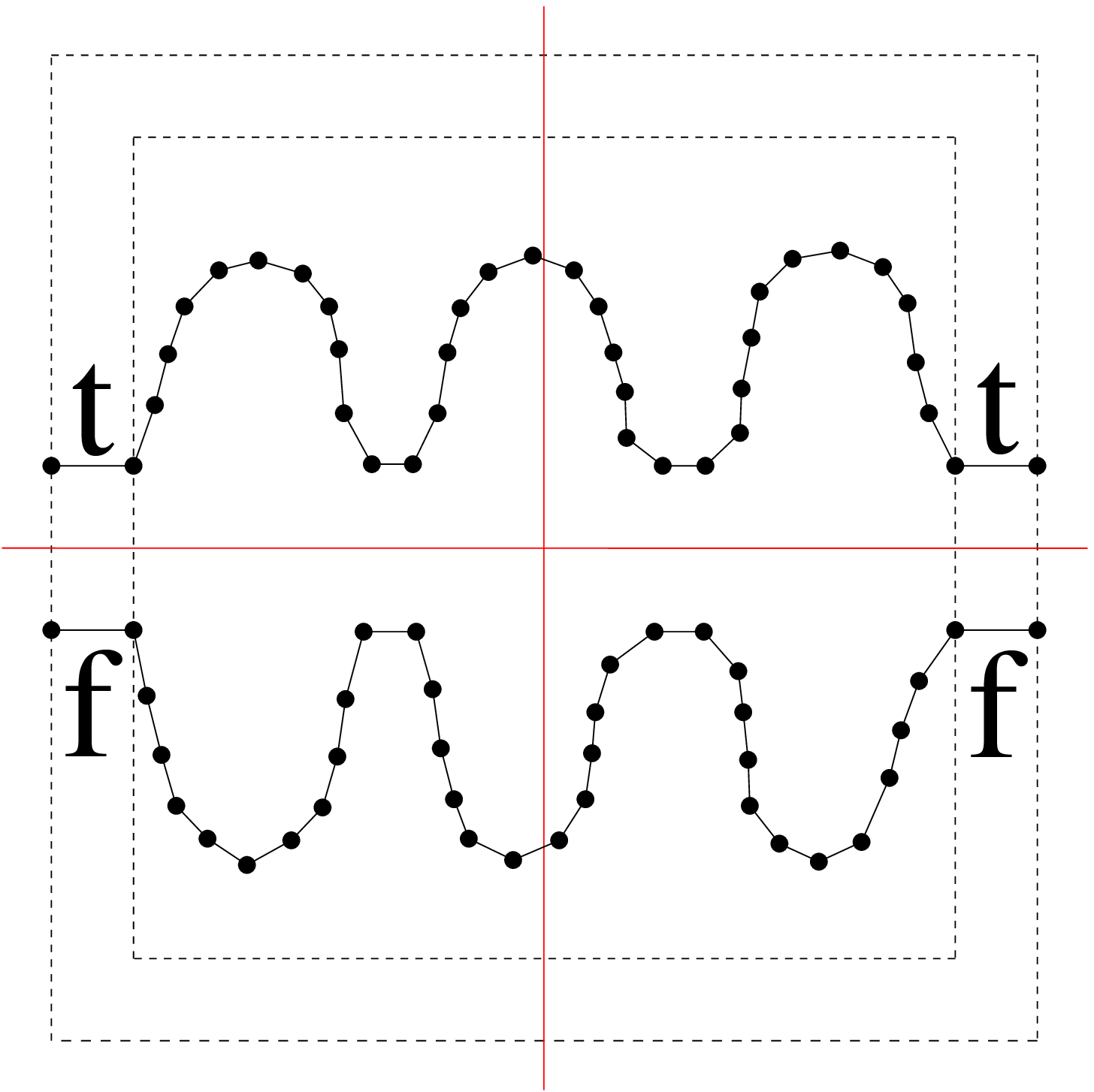}\hfill
\caption{Five tile embeddings for clause gadget $G^2_{c}$ and one tile embedding for the edge gadget $G_e$. Rotating and mirroring yields further tile embeddings.}
\label{clauseTiles2var}
\end{figure}


\begin{definition}\label{DefinitionTileTriple}
A {\em tile triple} is a pair $(G_e,\rho)$, where $G_{e}=(V,E)$ is the edge gadget defined above and $\rho:V \to P_i$ for some $i \in \{1, \ldots, 6\}$ is a GUDG embedding for $G_e$, called {\em tile triple embedding}, such that $\{\{\rho(t_1),\rho(f_1)\},\{\rho(t_{37}),\rho(f_{37})\}\}$ is either $\{\{(-18,-1),(-18,1)\},\{(18,-1),(18,1)\}\}$,\newline 
$\{\{(-1,-18),(1,-18)\},\{(-1,18),(1,18)\}\}$,\newline
$\{\{(-18,-1),(-18,1)\},\{(-1,-18),(1,-18)\}\}$,\newline
$\{\{(-18,-1),(-18,1)\},\{(-1,18),(1,18)\}\}$,\newline
$\{\{(-1,-18),(1,-18)\},\{(18,-1),(18,1)\}\}$,\newline
or $\{\{(-1,18),(1,18)\},\{(18,-1),(18,1)\}\}$, respectively.


Every non-$(t,f)$-vertex is mapped to a position of $P_i$ with a distance $> 1$ to all positions $\mathbb{R}^2 \setminus P_i$ outside of $P_i$.
\end{definition}

\begin{figure}[htb]
\null\hfill
\includegraphics[width=150pt]{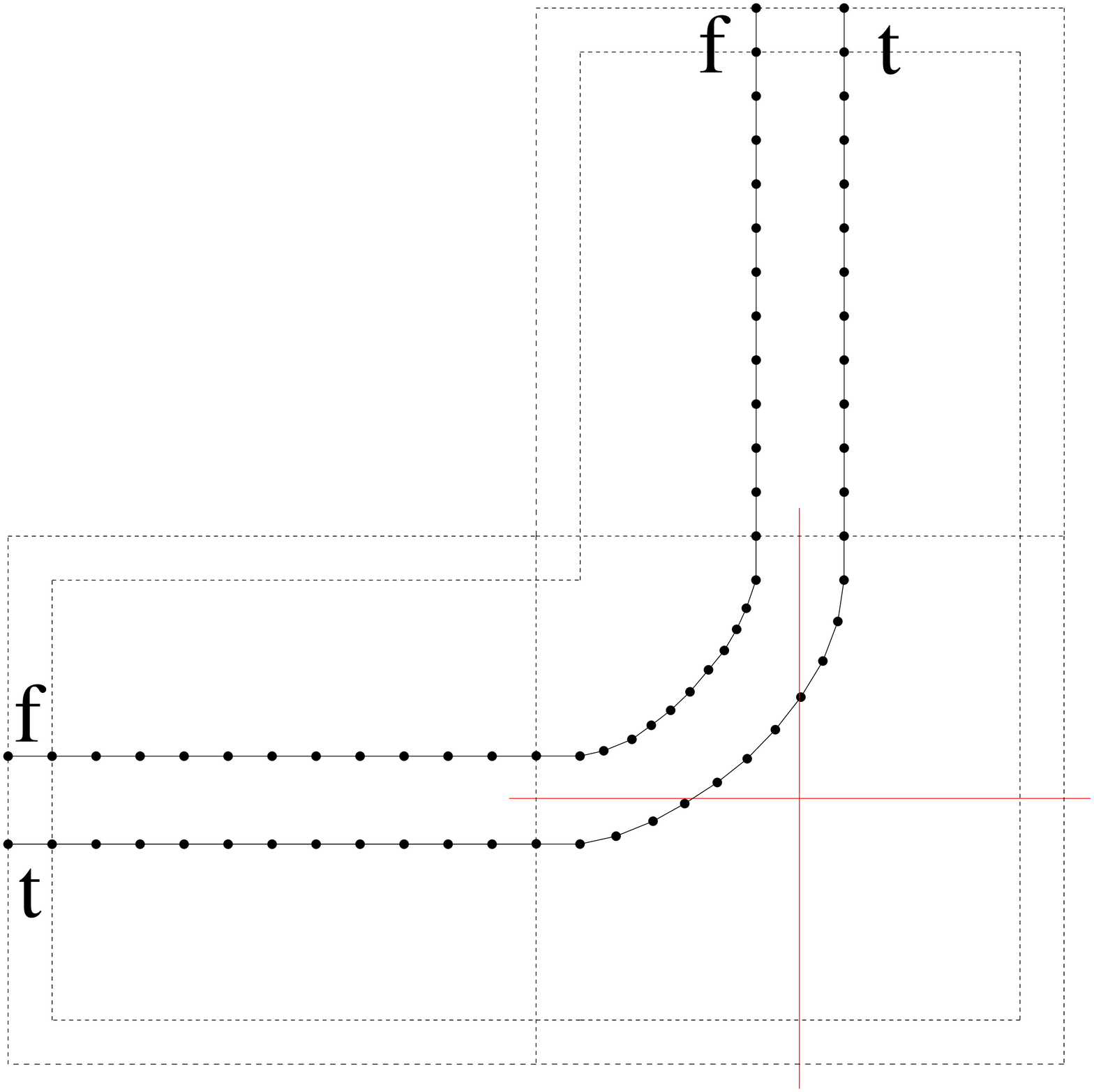} \hfill 
\includegraphics[width=150pt]{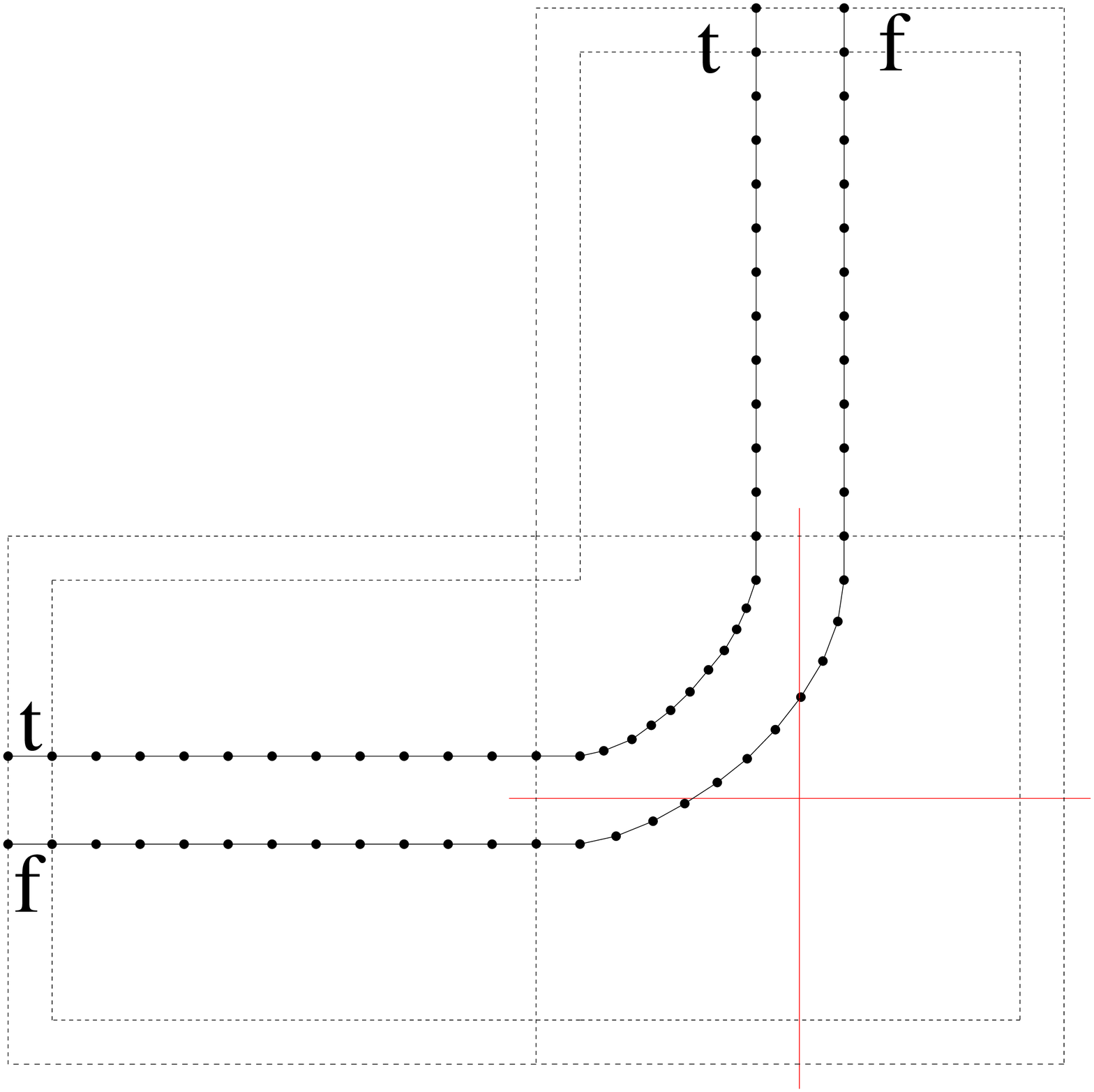} \hfill\null

\null\hfill
\includegraphics[width=250pt]{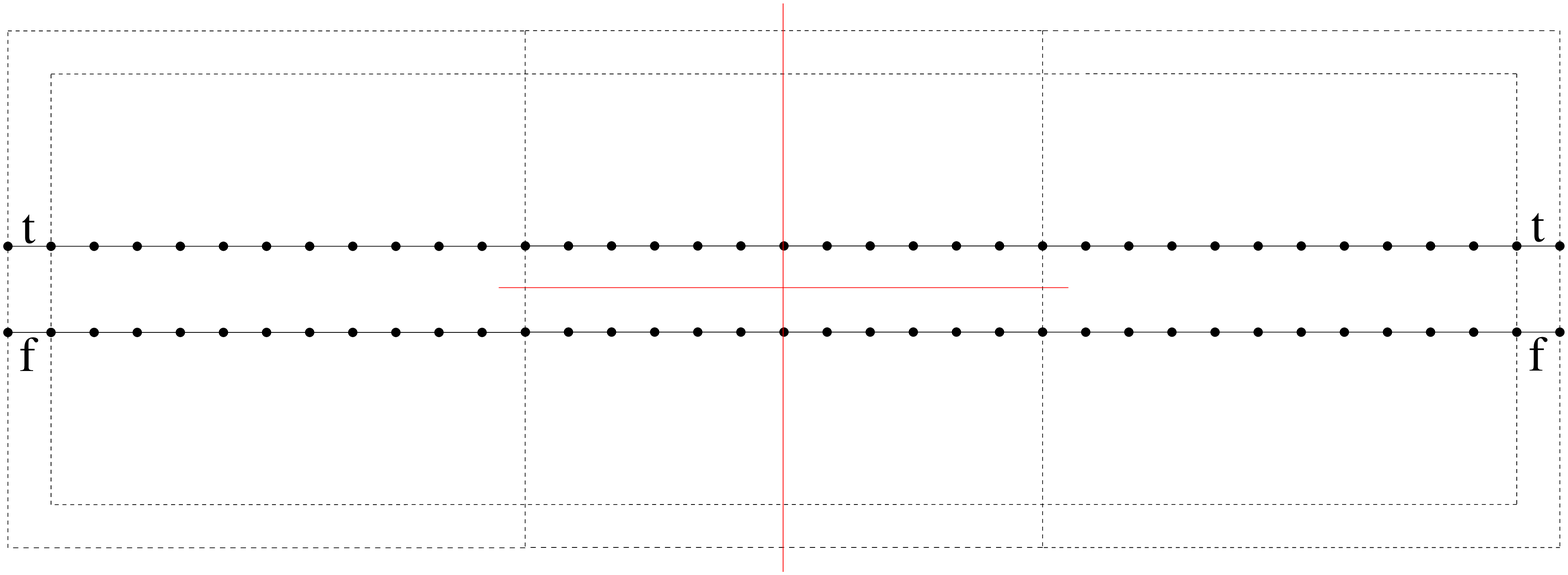} \hfill\null
\caption{Three tile triple embeddings for the edge gadget $G_e$, rotating and mirroring yields further tile triple embeddings for edge paths of length $2$.}
\label{tile-triples}
\medskip
\end{figure}

\begin{definition}
 The {\em orientation} of a $(t,f)$-vertex pair $(t,f)$ inside a tile (triple) $(G,\rho)$ is {\em TF}, if $\rho(t)$ is encounter before $\rho(f)$ on a clockwise traversal of the corresponding polygon and {\em FT} otherwise.
\end{definition}
\begin{definition}
 The {\em connection vector} of a tile $(G,\rho)$ is a $4$-tuple $(a_1,a_2,a_3,a_4)$ consisting of components $a_i = (b_i,c_i)$, where $b_i$ is the orientation of the $(t,f)$-vertex pair $(t_i,f_i)$ on the $i$-th side of the square \newline$[-6,6]\times [-6,6]$ or $\epsilon$, if none exists. The sides of the square are enumerated clockwise, starting from the top. If $G$ is a variable gadget and $b_i\not=\epsilon$, then $c_i\in\{\oplus,\ominus\}$ denotes whether $(t_i,f_i)$ is an $\oplus$-$(t,f)$-vertex pair or an $\ominus$-$(t,f)$-vertex pair. For all other cases set $c_i:=\emptyset$.
\end{definition}

Now define the equivalence relation $\sim_r$ on the space of connection vectors as follows: \newline$(a_1,a_2,a_3,a_4) \sim_r (b_1,b_2,b_3,b_4)$ if and only if there exists a $\delta\in\mathbb{N}$, such that for all $1\le i\le 4$, $a_i = b_{(i+\delta)~mod~4}$. The {\em type} of a tile is the equivalence class of its connection vector, denoted by one representative in square brackets. Table \ref{tileTypes} lists the types of all available tile embeddings.

\begin{table}\null\hfill
\def\arraystretch{1.2}
\begin{tabular}{|l|c|l|}\hline
 type & gadget & tile embedding	\\\hline
  $[(TF,\oplus),(\epsilon,\emptyset),(FT,\ominus),(FT,\oplus)]$	& $G_v^3$ &	fig.\ref{variableTiles3} ($1$)	\\\hline
  $[(TF,\oplus),(\epsilon,\emptyset),(TF,\ominus),(FT,\oplus)]$	& $G_v^3$ &	fig.\ref{variableTiles3} ($3$), mirrored\\\hline
  $[(\epsilon,\emptyset),(TF,\oplus),(FT,\ominus),(FT,\oplus)]$	& $G_v^3$ &	fig.\ref{variableTiles3} ($2$)	\\\hline
  $[(\epsilon,\emptyset),(TF,\oplus),(TF,\ominus),(FT,\oplus)]$	& $G_v^3$ &	fig.\ref{variableTiles3} ($2$), mirrored\\\hline
  $[(FT,\oplus),(TF,\oplus),(FT,\ominus),(\epsilon,\emptyset)]$	& $G_v^3$ &	fig.\ref{variableTiles3} ($3$)	\\\hline
  $[(FT,\oplus),(TF,\oplus),(TF,\ominus),(\epsilon,\emptyset)]$	& $G_v^3$ &	fig.\ref{variableTiles3} ($1$), mirrored\\\hline
  $[(\epsilon,\emptyset),(\epsilon,\emptyset),(FT,\ominus),(FT,\oplus)]$	& $G_{v,a}$ &	fig.\ref{variableTiles2ab} ($1$)	\\\hline
  $[(\epsilon,\emptyset),(\epsilon,\emptyset),(TF,\ominus),(FT,\oplus)]$	& $G_{v,b}$ &	fig.\ref{variableTiles2ab} ($6$), mirrored\\\hline
  $[(FT,\oplus),(\epsilon,\emptyset),(FT,\ominus),(\epsilon,\emptyset)]$	& $G_{v,a}$ &	fig.\ref{variableTiles2ab} ($2$)	\\\hline
  $[(FT,\oplus),(\epsilon,\emptyset),(TF,\ominus),(\epsilon,\emptyset)]$	& $G_{v,b}$ &	fig.\ref{variableTiles2ab} ($5$), mirrored\\\hline
  $[(\epsilon,\emptyset)(FT,\oplus),(FT,\ominus),(\epsilon,\emptyset)]$	& $G_{v,a}$ &	fig.\ref{variableTiles2ab} ($3$)	\\\hline
  $[(\epsilon,\emptyset)(FT,\oplus),(TF,\ominus),(\epsilon,\emptyset)]$	& $G_{v,b}$ &	fig.\ref{variableTiles2ab} ($4$), mirrored\\\hline

  $[(TF,\emptyset),(TF,\emptyset),(FT,\emptyset),(\epsilon,\emptyset)]$	& $G_c^3$ &	fig.\ref{clauseTiles3var} ($1$)	 \\\hline
  $[(TF,\emptyset),(FT,\emptyset),(FT,\emptyset),(\epsilon,\emptyset)]$	& $G_c^3$ &	fig.\ref{clauseTiles3var} ($1$), mirrored\\\hline
  $[(TF,\emptyset),(FT,\emptyset),(TF,\emptyset),(\epsilon,\emptyset)]$	& $G_c^3$ &	fig.\ref{clauseTiles3var} ($2$)	 \\\hline
  $[(FT,\emptyset),(TF,\emptyset),(FT,\emptyset),(\epsilon,\emptyset)]$	& $G_c^3$ &	fig.\ref{clauseTiles3var} ($2$), mirrored\\\hline
  $[(FT,\emptyset),(TF,\emptyset),(TF,\emptyset),(\epsilon,\emptyset)]$	& $G_c^3$ &	fig.\ref{clauseTiles3var} ($3$)	 \\\hline
  $[(FT,\emptyset),(FT,\emptyset),(TF,\emptyset),(\epsilon,\emptyset)]$	& $G_c^3$ &	fig.\ref{clauseTiles3var} ($3$), mirrored\\\hline
  $[(TF,\emptyset),(\epsilon,\emptyset),(TF,\emptyset),(\epsilon,\emptyset)]$	& $G_c^2$ &	fig.\ref{clauseTiles2var} ($1$)	 \\\hline
  $[(FT,\emptyset),(\epsilon,\emptyset),(FT,\emptyset),(\epsilon,\emptyset)]$	& $G_c^2$ &	fig.\ref{clauseTiles2var} ($1$), mirrored\\\hline
  $[(TF,\emptyset),(FT,\emptyset),(\epsilon,\emptyset),(\epsilon,\emptyset)]$	& $G_c^2$ &	fig.\ref{clauseTiles2var} ($2$)	 \\\hline
  $[(TF,\emptyset),(TF,\emptyset),(\epsilon,\emptyset),(\epsilon,\emptyset)]$	& $G_c^2$ &	fig.\ref{clauseTiles2var} ($3$)\\\hline
  $[(FT,\emptyset),(FT,\emptyset),(\epsilon,\emptyset),(\epsilon,\emptyset)]$	& $G_c^2$ &	fig.\ref{clauseTiles2var} ($3$), mirrored	\\\hline
  $[(FT,\emptyset),(\epsilon,\emptyset),(TF,\emptyset),(\epsilon,\emptyset)]$	& $G_c^2$ &	fig.\ref{clauseTiles2var} ($4$)\\\hline
  $[(TF,\emptyset),(\epsilon,\emptyset),(FT,\emptyset),(\epsilon,\emptyset)]$	& $G_c^2$ &	fig.\ref{clauseTiles2var} ($4$), mirrored	\\\hline
  $[(FT,\emptyset),(TF,\emptyset),(\epsilon,\emptyset),(\epsilon,\emptyset)]$	& $G_c^2$ &	fig.\ref{clauseTiles2var} ($5$)\\\hline
\end{tabular}\hfill\null\caption{The available tile types and the corresponding embeddings}
\label{tileTypes}
\end{table}

Next we place the tiles and tile triples according to the planar orthogonal grid drawing $(\rho_{G_{\psi'}},{\cal E})$ of the clause variable graph $G_{\psi'}$.

Let $(v,c)$ be an edge of $G_{\psi'}$ with edge path $\ep{v}{c} = ((x_1,y_1),\ldots,(x_n,y_n))$.
Edge $(v,c)$ is {\em from-left}, {\em from-right}, {\em from-top}, or {\em from-bottom} if $x_2 > x_1$, $x_2 < x_1$, $y_2 < y_1$, or $y_2 > y_1$, respectively. Analogously it is {\em to-left}, {\em to-right}, {\em to-top}, or {\em to-bottom} if $x_n < x_{n-1}$, $x_{n} > x_{n-1}$, $y_n > y_{n-1}$, or $y_{n} < y_{n-1}$. Edge $(v,c)$ is a $\oplus$-edge or a $\ominus$-edge if variable $v$ occurs in clause $c$ as a positive or negative literal.

A tile or tile triple $(G,\rho_{G})$ has a left-, right-, top-, or bottom-$(t,f)$-vertex pair if gadget $G$ has a $(t,f)$-vertex pair mapped to $((x_1,y_1),(x_2,y_2))$ such that $x_1,x_2 < 0$, $x_1,x_2 > 0$, $y_1,y_2 > 0$, or $y_1,y_2 < 0$, respectively.

Let $v$ be a variable vertex of $G_{\psi'}$ and $G$ be the copy of the gadget for $v$. Then a tile $(G,\rho_{G})$ for $G$ is chosen whose $(t,f)$-vertices are arranged according to the directions of the edges $e=(v,c)$ from $v$ in the following sense. If, for example, $e$ is a $\oplus$-from-left edge then tile $(G,\rho_{G})$ has a right-$\oplus$-$(t,f)$-vertex pair. If $e$ is a $\ominus$-from-top edge then tile $(G,\rho_{G})$ has a bottom-$\ominus$-$(t,f)$-vertex pair, etc.

Let $c$ be a clause vertex of $G_{\psi'}$ and $G$ be the copy of the gadget for $c$. Then a tile $(G,\rho_{G})$ for $G$ is chosen whose $(t,f)$-vertices are arranged according to the directions of the edges $e=(v,c)$ to $c$ in the following sense. If, for example, $e$ is a to-left edge then tile $(G,\rho_{G})$ has a right-$(t,f)$-vertex pair. If $e$ is a to-top edge then tile $(G,\rho_{G})$ has a bottom-$(t,f)$-vertex pair, etc.

Let $e$ be an edge of $G_{\psi'}$ and $G$ be the copy of the gadget for $e$. Then a tile or tile triple $(G,\rho_{G})$ for $G$ is chosen whose $(t,f)$-vertices are arranged according to the direction of edge $e$ in the following sense. If, for example, $e$ is a from-left-to-right edge then $(G,\rho_{G})$ has a left- and a right-$(t,f)$-vertex pair. If $e$ is a from-top-to-bottom edge then $(G,\rho_{G})$ has a top- and bottom-$(t,f)$-vertex pair, etc.

To construct a GUDG embedding $\rho_{H_{\psi}}$ for $H_{\psi}$ we place the vertices of the gadgets as follows.
\begin{enumerate}
\item
Let $G$ be a gadget inserted into $H_{\psi}$ for a variable or clause vertex $u$ of the clause variable graph $G_{\psi'}$, and let $(G,\rho_{G})$ be the selected tile. Then the GUDG embedding $\rho_{H_{\psi}}$ for the vertices $w$ from gadget $G$ is defined by
$\rho_{H_{\psi}}(w) := 24 \cdot \rho_{G_{\psi'}}(u) + \rho_{G}(w)$.

\item
Let $G$ be a gadget inserted into $H_{\psi}$ for an edge path \newline$\ep{u}{v}=((x_1,y_1),(x_2,y_2)) \in {\cal E}$ of length $1$, and let $(G,\rho_{G})$ be the selected tile for gadget $G$. Then the GUDG embedding $\rho_{H_{\psi}}$ for the vertices $w$ from gadget $G$ is defined by
$\rho_{H_{\psi}}(w) := 12 \cdot ((x_1,y_1)+(x_2,y_2)) + \rho_{G}(w)$.
\item
Let $G$ be a gadget inserted into $H_{\psi}$ for an edge path \newline$\ep{u}{v}=((x_1,y_1),(x_2,y_2),(x_3,y_3)) \in {\cal E}$ of length $2$, and let $(G,\rho_{G})$ be the selected tile triple for gadget $G$. Then the GUDG embedding $\rho_{H_{\psi}}$ for the vertices $w$ from gadget $G$ is defined by
$\rho_{H_{\psi}}(w) := 24 \cdot (x_2,y_2) + \rho_{G}(w)$.
\end{enumerate}
Figure \ref{tileExample} shows an example of the placed tiles for the clause variable graph $G_{\psi'}$ of Figure \ref{edge-path-shortening}.

Graph $H_\psi$ is defined by the union of gadgets through identification of their $(t,f)$-vertices. Thus we have to show that it is always possible to select a tile (or tile triple) for every inserted copy of a gadget such that identified vertices are placed at the same position. 

Every tile and tile triple embedding is a GUDG embedding. A GUDG embedding for the inner part used in the variable gadgets is shown in Figure \ref{tileExample}. The distance between non-$(t,f)$-vertices from different tiles is always greater than $1$. Thus, the resulting embedding $\rho_{H_\psi}$ is a GUDG embedding for $H_\psi$. 

\begin{figure}[ht]
\null\hfill
\includegraphics[width=180pt]{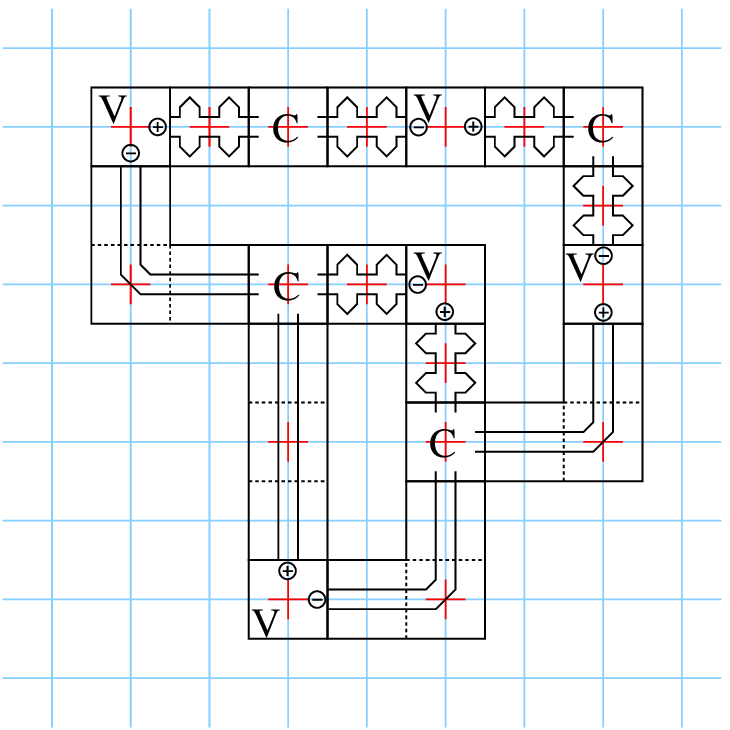} \hfill
\includegraphics[width=150pt]{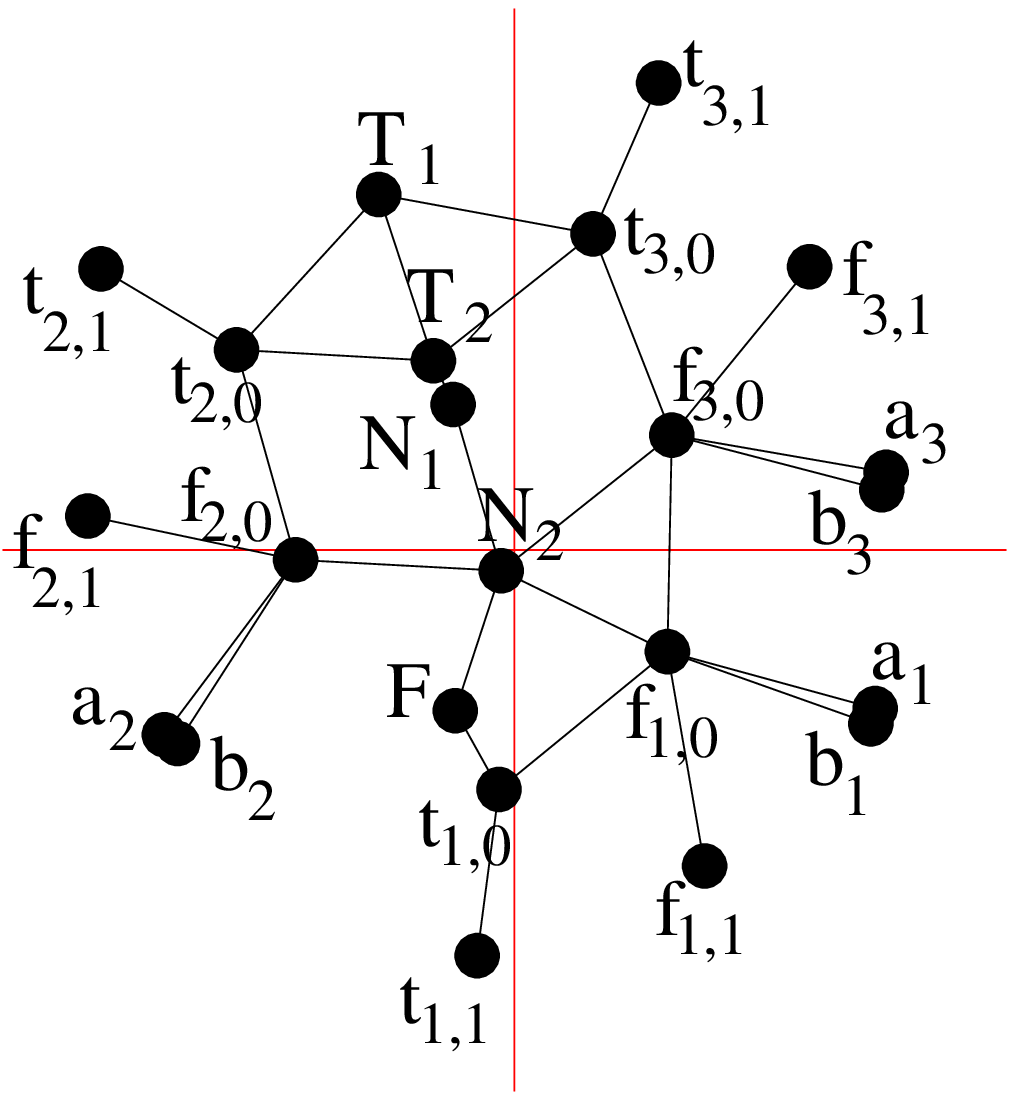} \hfill
\null
\caption{Left: A tile placement for the clause variable graph $G_{\psi'}$ of Figure \ref{edge-path-shortening}. Clause and variable gadgets are labeled $C$ and $V$, respectively. Right: A GUDG embedding for the inner part of the variable gadgets}
\label{tileExample}
\end{figure}
\begin{table}[htb]\centering
\begin{tabular}{|c|l|c||c|l|c|}
\hline
$v$ & $\rho(v)$ & $d$ & $v$ & $\rho(v)$ & $d$  \\\hline
$T_1$ & $(-0.7,1.61)$   & $(4,3,3)$ 	&  	$f_{1,0}$ & $(0.62,-0.48)$  & $(1,3,2)$	 \\\hline
$T_2$ & $(-0.45,0.85)$  & $(4,3,3)$ 	&  	$t_{2,0}$ & $(-1.35,0.9)$   & $(4,2,4)$	 \\\hline
$N_1$ & $(-0.36,0.65)$  & $(3,3,3)$ 	&  	$f_{2,0}$ & $(-1.08,-0.06)$ & $(3,1,3)$	 \\\hline
$T_1$ & $(-0.7,1.61)$   & $(4,3,3)$ 	&  	$t_{3,0}$ &	$(0.28,1.43)$   & $(3,4,2)$	 \\\hline
$F$   & $(-0.35,-0.75)$ & $(3,3,3)$ 	&   	$f_{3,0}$ & $(0.64,0.51)$   & $(2,3,1)$	 \\\hline
$a_1$ & $(1.57,-0.74)$  & $(0,4,3)$ 	&  	$t_{1,1}$ & $(-0.25,-1.87)$ & $(3,5,4)$	 \\\hline
$a_2$ & $(-1.68,-0.86)$ & $(4,0,4)$ 	&  	$f_{1,1}$ & $(0.79,-1.46)$  & $(2,4,3)$	 \\\hline
$a_3$ & $(1.62,0.34)$   & $(3,4,0)$	& 	$t_{2,1}$ &	$(-1.97,1.27)$  & $(5,3,5)$	 \\\hline
$b_1$     & $(1.55,-0.81)$  & $(1,4,3)$&	$f_{2,1}$ & $(-2.03,0.14)$  & $(4,2,4)$	\\\hline
$b_2$	 & $(-1.62,-0.9)$  & $(4,1,4)$ &	$t_{3,1}$ & $(0.58,2.12)$   & $(4,5,3)$	\\\hline
$b_3$	 & $(1.6,0.26)$    & $(3,4,1)$	&	$f_{3,1}$ & $(1.27,1.28)$   & $(3,4,2)$	\\\hline
$t_{1,0}$ & $(-0.15,-1.11)$ & $(2,4,3)$&		&		&			\\\hline
\end{tabular}\caption{Coordinates for the GUDG embedding in Figure \ref{tileExample} and the distances from $d = (d(v,a_1) , d(v,a_2) , d(v,a_3))$ from each vertex to $a_1$, $a_2$, and $a_3$}
\label{var-coords}
\end{table}

\medskip

If $t$ is identified with $t'$ and $f$ is identified with $f'$ then there only are the following two cases left to be considered due to the restrictions for the placement of $(t,f)$-vertex pairs inside tiles and tile triples. Either $(\rho_{H_\psi}(t),\rho_{H_\psi}(f))=(\rho_{H_\psi}(t'),\rho_{H_\psi}(f'))$ or
$(\rho_{H_\psi}(t),\rho_{H_\psi}(f))=(\rho_{H_\psi}(f'),\rho_{H_\psi}(t'))$. In the first case there is no problem, the second case is called an {\em orientation conflict}. $\rho_{H_\psi}$ is a {\em valid} embedding if it has no orientation conflicts.

\begin{lemma}\label{variableTileProperty}
 Let $(G,\rho)$ be a variable tile. Then there also is a variable tile $(G',\rho')$ with:
\begin{enumerate}
  \item $(\rho(t_{1,14}),\rho(f_{1,14})) = (\rho'(f'_{1,14}),\rho'(t'_{1,14}))$
  \item	for all $\oplus$-$(t,f)$-vertex pairs $(t,f)$ in $(G,\rho)$ there is a $\oplus$-$(t,f)$-vertex pair $(t',f')$ in $(G',\rho')$ \newline such that $(\rho(t),\rho(f)) = (\rho'(t'),\rho'(f'))$.
\end{enumerate}
\end{lemma}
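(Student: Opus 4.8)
The plan is to read the claim off the variable‑gadget rows of Table~\ref{tileTypes}. Recall that the tile embeddings we consider for the variable gadgets are exactly those of Figures~\ref{variableTiles3} and~\ref{variableTiles2ab} together with all of their $90^{\circ}$, $180^{\circ}$, $270^{\circ}$ rotations and mirror images. Two preliminary remarks make the argument essentially mechanical. First, a rotation by $90^{\circ}k$ is an orientation‑preserving isometry of the square $[-6,6]\times[-6,6]$ that cyclically shifts its four sides, so the connection vector of a rotated tile is the cyclic shift by $k$ of the original connection vector with all orientation bits unchanged; hence the set of connection vectors realized by variable tiles is closed under $\sim_r$, and a type appearing in Table~\ref{tileTypes} means that \emph{every} connection vector in that $\sim_r$-class is realized by some variable tile. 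Second, by Definition~\ref{DefinitionTile} the side of the square carrying a given $(t,f)$-vertex pair determines the unordered pair of its two possible positions, and the orientation bit TF/FT of that component then determines which of the two positions is the image of $t$ and which is the image of $f$. Consequently, two variable tiles whose connection vectors agree in all four components have embeddings that agree on every $(t,f)$-vertex, and two whose connection vectors agree everywhere except for the orientation bit of one $(t,f)$-vertex pair have embeddings that agree on all $(t,f)$-vertices except that the images of $t$ and $f$ of that one pair are exchanged.

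Now take the given variable tile $(G,\rho)$, let $w$ be its connection vector, and let $w'$ be obtained from $w$ by replacing the orientation bit of the $\ominus$-labelled component (the one carrying $(t_{1,14},f_{1,14})$) by the opposite bit, leaving the $\ominus$-label and the other three components unchanged. The key point is that the $\sim_r$-class of $w'$ is again one of the variable‑tile types of Table~\ref{tileTypes}: the twelve variable rows split into six consecutive pairs, and the two rows of each pair differ in exactly one place, namely the orientation of the $\ominus$-$(t,f)$-vertex pair (the six $G^3_v$-rows pair up among themselves; each $G^2_{v,a}$-row is paired with the $G^2_{v,b}$-row obtained by flipping that bit). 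By the first preliminary remark there is therefore a variable tile $(G',\rho')$ whose connection vector equals $w'$ exactly. By the second remark, $(G',\rho')$ sends the $\ominus$-$(t,f)$-vertex pair $(t'_{1,14},f'_{1,14})$ to the same two points as $(G,\rho)$ sends $(t_{1,14},f_{1,14})$ but with the roles of $t$ and $f$ interchanged, which is~(1); and it sends each $\oplus$-$(t,f)$-vertex pair to the same ordered pair of points, with the same orientation, as in $(G,\rho)$ (possibly matching a $\oplus$-pair of $(G,\rho)$ to the differently indexed $\oplus$-pair of $(G',\rho')$, which the statement permits), which is~(2).

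The only step that is not purely formal is checking that the twelve variable‑tile types of Table~\ref{tileTypes} are indeed closed under flipping the $\ominus$-orientation bit; this amounts to reading the connection vectors of the variable embeddings of Figures~\ref{variableTiles3} and~\ref{variableTiles2ab} (and their mirrors) off the pictures and matching them with their rows in the table. I expect this finite piece of bookkeeping to be the main obstacle, but it is routine once the figures are available.
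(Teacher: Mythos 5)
Your proof is correct and takes essentially the same route as the paper, which simply states that the lemma "can easily be verified using Table~\ref{tileTypes}" and notes that the replacement tile may use the other two-literal gadget ($G^2_{v,a}$ versus $G^2_{v,b}$); your write-up just makes that table lookup explicit by observing that the twelve variable rows pair up under flipping the $\ominus$-orientation bit and that rotation closes each type under $\sim_r$. The pairing you need is already visible in the table itself, so the residual bookkeeping you flag at the end is exactly the verification the paper delegates to the reader.
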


\begin{figure}[hbt]
\null\hfill
\includegraphics[height=1.5cm]{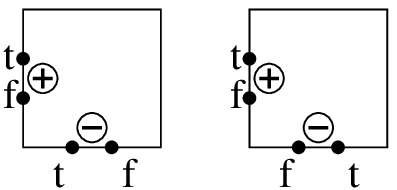} \hfill
\includegraphics[width=8.5cm]{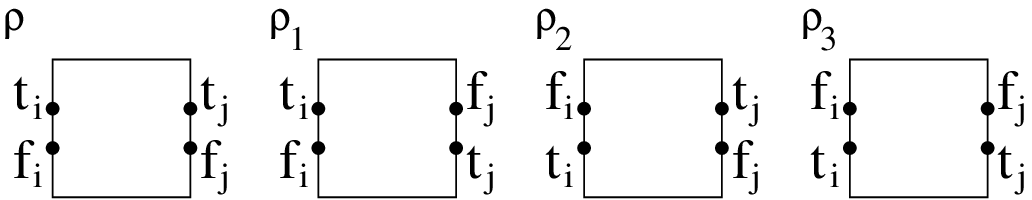} \hfill
\null 
\caption{Lemma \ref{variableTileProperty} and \ref{clauseTileProperty}: For every variable tile the orientation of the $\ominus$-$(t,f)$-vertex pair can be chosen freely by selecting another tile embedding and possibly another variable gadget. For every clause tile the orientation of two of its $(t,f)$-vertex pairs can be chosen freely by selection another tile embedding.}
\label{variableTilePropertyFig}
\end{figure}


\begin{lemma}\label{clauseTileProperty}
 Let $(G,\rho)$ be a clause tile and $(t_i,f_i)$, $(t_j,f_j)$ two distinct $(t,f)$-vertex pairs in $G$. Then there also are three other tiles $(G,\rho_1)$,$(G,\rho_2)$ and $(G,\rho_3)$ with:
\begin{description}
 \item $(\rho(t_i),\rho(f_i)) = (\rho_1(t_i),\rho_1(f_i))$, $(\rho(t_j),\rho(f_j)) = (\rho_1(f_j),\rho_1(t_j))$
 \item $(\rho(t_i),\rho(f_i)) = (\rho_2(f_i),\rho_2(t_i))$, $(\rho(t_j),\rho(f_j)) = (\rho_2(t_j),\rho_2(f_j))$
 \item $(\rho(t_i),\rho(f_i)) = (\rho_3(f_i),\rho_3(t_i))$, $(\rho(t_j),\rho(f_j)) = (\rho_3(f_j),\rho_3(t_j))$
\end{description}
\end{lemma}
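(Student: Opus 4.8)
The plan is to reduce the statement to a finite inspection of Table~\ref{tileTypes}, exploiting that rotating a tile embedding by a multiple of $90^\circ$ (through $f_{90^\circ},f_{180^\circ},f_{270^\circ}$) or mirroring it (through $f_{\text{horizontal}},f_{\text{vertical}}$) again yields a tile embedding. First I would record how these operations act on the data that matters here: a rotation cyclically shifts the connection vector, sends a clockwise traversal of the square to a clockwise traversal, and hence preserves the orientation of every $(t,f)$-vertex pair while moving it to the correspondingly rotated side; a mirroring reverses clockwise traversals, so it flips the orientation of every $(t,f)$-vertex pair whose side it fixes, and it permutes the eight admissible $(t,f)$-vertex positions in the obvious way. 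The structural fact I would use is that the symmetry group of the square $[-6,6]\times[-6,6]$ acts simply transitively on the ordered pairs of adjacent sides, and that for a pair of opposite sides the mirror across their common perpendicular bisector fixes both sides and flips the orientation on each. Since $\rho$ itself already realises the combination $(\mathrm{orig},\mathrm{orig})$ for $(t_i,f_i)$ and $(t_j,f_j)$, it suffices to exhibit, with these two pairs kept on the sides $s_i\neq s_j$ they occupy in $\rho$, tiles realising the other three combinations of orientations; these are $\rho_1$, $\rho_2$ and $\rho_3$.

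Next I would split according to the clause gadget and the mutual position of $s_i$ and $s_j$ (writing connection-vector types in the shorthand that suppresses the trivial second components $\emptyset$). If $G=G_c^2$, then $(t_i,f_i)$ and $(t_j,f_j)$ are the only two $(t,f)$-vertex pairs of $G$. When $s_i,s_j$ are adjacent, Table~\ref{tileTypes} lists tiles of all four types $[(TF,TF,\epsilon,\epsilon)]$, $[(TF,FT,\epsilon,\epsilon)]$, $[(FT,TF,\epsilon,\epsilon)]$, $[(FT,FT,\epsilon,\epsilon)]$; rotating each so that its two occupied sides fall on $s_i,s_j$, and if needed applying the diagonal mirror that interchanges those two sides, realises all four assignments of orientations to $i$ and $j$ with $i,j$ fixed in place. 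When $s_i,s_j$ are opposite, the available types are $[(TF,\epsilon,TF,\epsilon)]$, $[(FT,\epsilon,FT,\epsilon)]$ and $[(TF,\epsilon,FT,\epsilon)]$, and the last of these coincides with $[(FT,\epsilon,TF,\epsilon)]$ modulo $\sim_r$, so a $180^\circ$ rotation of its base embedding produces both mixed connection vectors; together with the perpendicular-bisector mirror that fixes $s_i,s_j$ and flips both orientations this again covers all four combinations.

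Third, the case $G=G_c^3$ is easier for this statement, because the third $(t,f)$-vertex pair is unconstrained by the conclusion: here $s_i$ and $s_j$ are two of three consecutive occupied sides, and it suffices, for each target pair $(o_i,o_j)$ of orientations, to find a tile with $(t_i,f_i)$ on $s_i$ oriented $o_i$ and $(t_j,f_j)$ on $s_j$ oriented $o_j$. The six types listed for $G_c^3$ are precisely the elements of $\{TF,FT\}^3$ (placed on the three consecutive occupied sides) other than the two constant triples, so one may pick the orientation of the third pair to make the occupied triple non-constant — the opposite of $o_i$ when $o_i=o_j$, arbitrary otherwise — whence that triple is realised by one of the listed tiles, which a suitable rotation and, if necessary, mirror transports so that $(t_i,f_i),(t_j,f_j)$ land on $s_i,s_j$ with orientations $o_i,o_j$. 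Taking the four choices corresponding to the four combinations $(o_i,o_j)$ gives $\rho,\rho_1,\rho_2,\rho_3$.

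The step I expect to be the real work is the bookkeeping behind ``transporting by a suitable symmetry'': one has to verify from Figures~\ref{clauseTiles3var} and~\ref{clauseTiles2var} that the listed base embeddings do place the named $(t,f)$-vertex pairs on the claimed sides with the claimed orientations, and that the chosen symmetry maps $\rho(t_i)$ and $\rho(t_j)$ to the prescribed points rather than sliding a $(t,f)$-vertex pair onto another side. This is routine but tedious, and comes down to tracking the action of $f_{90^\circ},f_{180^\circ},f_{270^\circ},f_{\text{horizontal}},f_{\text{vertical}}$ on the eight admissible $(t,f)$-vertex positions for each clause-gadget drawing; unlike the variable case of Lemma~\ref{variableTileProperty}, no change of gadget is needed, since the figures per clause gadget already list enough of the relevant types (all non-constant ones for $G_c^3$, and all four adjacent-pair types for $G_c^2$).
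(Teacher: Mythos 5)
Your proposal is correct and follows the same route as the paper, which disposes of this lemma in one sentence by appealing to an inspection of Table~\ref{tileTypes}; you have simply written out that inspection in full, organizing it by gadget ($G_c^2$ versus $G_c^3$) and by the relative position of the two occupied sides, and correctly observing that the listed $G_c^3$ types are exactly the non-constant triples in $\{TF,FT\}^3$ while the $G_c^2$ types cover all four orientation combinations for both adjacent and opposite sides. The residual bookkeeping you flag (that the symmetry transporting a listed embedding must place the \emph{named} vertices $t_i,f_i,t_j,f_j$ at the prescribed points) is real but is glossed over by the paper as well, and is resolved by the figures together with the evident symmetry of the clause gadget under permuting its three attachment paths.
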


Lemmata \ref{variableTileProperty} and \ref{clauseTileProperty} can easily be verified using table \ref{tileTypes}. Note that the tile in Lemma \ref{variableTileProperty} may contain a different variable gadget. This is the case for a variable that occurs only once as a positive literal.

\begin{theorem}
 $H_\psi$ is a GUDG if the correct gadgets for all variables are chosen during the assembly. The assembly can be computed in polynomial time.
\end{theorem}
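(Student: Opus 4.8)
I plan to exhibit a choice of gadgets — for every variable vertex of $G_{\psi'}$ one of $G^3_v,G^2_{v,a},G^2_{v,b}$, and for every clause and every edge of $G_{\psi'}$ the prescribed gadget — together with a tile or tile triple embedding for each, such that the map $\rho_{H_\psi}$ assembled via items 1--3 above is a valid GUDG embedding of $H_\psi$; once a valid GUDG embedding exists, $H_\psi$ is a GUDG. By the preceding discussion every tile and tile triple embedding is itself a GUDG embedding and non-$(t,f)$-vertices coming from different tiles lie at Euclidean distance $>1$, so the only way $\rho_{H_\psi}$ can fail is an orientation conflict at some identified $(t,f)$-vertex pair. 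Two gadgets glued along an identified pair do map that pair onto the same unordered pair of points: this is immediate from the placement formulas, since a tile occupies the axis-parallel square of side $12$ around $24\rho_{G_{\psi'}}(u)$ (or around $12((x_1,y_1)+(x_2,y_2))$ for a length-$1$ edge path), a tile triple occupies one of the polygons $P_i$ around $24(x_2,y_2)$, glued squares/polygons share a full side of length $12$, and the two vertices of a $(t,f)$-pair sit at the two integer points at distance $1$ from the midpoint of that side. Hence it suffices to make the selections so that no orientation conflict occurs.

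The next step is to localise the conflict condition. Since a clockwise traversal of a tile square runs upward along the left side, downward along the right, rightward along the top and leftward along the bottom, one checks directly that two abutting tiles (or a tile and a tile triple) have no conflict on their common side exactly when the shared $(t,f)$-vertex pair has orientation $TF$ in one of them and $FT$ in the other. Following the chain $G_v\to G_e\to G_c$ attached to a fixed edge $(v,c)$ of $G_{\psi'}$: the edge gadget $G_e$ is a union of two internally disjoint paths of equal length, so for each shape it is used in (straight, or an $L$-bend, cf.\ Figure \ref{tile-triples} and the edge tile of Figure \ref{clauseTiles2var}) exactly two embeddings exist, interchanged by swapping the two strands, and in both the orientations of the end pairs $(t_1,f_1)$ and $(t_{37},f_{37})$ are opposite to one another while either assignment is realisable. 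Applying the ``opposite on the common side'' rule at both ends of $G_e$ collapses every conflict condition to one statement: for every edge $(v,c)$ of $G_{\psi'}$, the $(t,f)$-vertex pair of the variable tile for $v$ that faces $c$ and the $(t,f)$-vertex pair of the clause tile for $c$ that faces $v$ must have different orientations — and once that holds, the labelling of the edge gadget for $(v,c)$ is forced and is one of its two legal labellings.

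The third step resolves this constraint greedily. Fix, for each variable vertex $v$, any tile whose $(t,f)$-vertices are placed according to the directions of the incident edge paths (such a tile exists by Table \ref{tileTypes}), which for a twice-occurring variable provisionally selects $G^2_{v,a}$ or $G^2_{v,b}$. Then process the clauses one by one. A two-literal clause has, by Lemma \ref{clauseTileProperty}, an embedding of $G^2_c$ realising any prescribed orientations on its two pairs, so set both to the orientations opposite to the already fixed adjacent variable pairs. A three-literal clause contains a negative literal by the $1$-Negative property; apply Lemma \ref{clauseTileProperty} to the two pairs other than the pair of one chosen negative literal, matching them to their adjacent variables, which determines the orientation of the third pair, and then repair that pair by flipping the $\ominus$-$(t,f)$-vertex pair of the variable attached to it — possible by Lemma \ref{variableTileProperty}, possibly by switching that variable between $G^2_{v,a}$ and $G^2_{v,b}$ (this is exactly the ``correct gadgets'' hypothesis of the theorem, and the case of a variable occurring only once as a positive literal). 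This repair does not cascade: a variable occurs exactly once as a negative literal, so its $\ominus$-$(t,f)$-vertex pair is glued to a single clause gadget, and by Lemma \ref{variableTileProperty} the flip leaves the variable's $\oplus$-$(t,f)$-vertex pairs and all vertex positions unchanged. When all clauses have been processed, set the edge gadget labellings as above; no orientation conflict remains, $\rho_{H_\psi}$ is valid, and $H_\psi$ is a GUDG. For the running time: a polynomial-size planar orthogonal grid drawing of $G_\psi$ is computed in polynomial time by \cite{Tam87,TT89}; the preprocessing replaces each length-$l$ edge path by $l-1$ shorter ones, adding only polynomially many variables and clauses and keeping the drawing polynomial; each gadget contributes $O(1)$ vertices; and the greedy orientation procedure does $O(1)$ work per clause plus at most one variable repair — so $H_\psi$, its embedding, and $k$ are all produced in polynomial time.

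The part I expect to carry the real weight is the collection of finite checks invoked above: that for every configuration of incident edge directions that can arise at a variable or clause vertex of $G_{\psi'}$ after preprocessing — hence with edge paths of length at most two and the bends absorbed into tile triples — a matching tile type appears in Table \ref{tileTypes}, and that the edge gadget indeed admits, for each shape it occurs in, precisely the two strand labellings with opposite end orientations. Everything else is bookkeeping over the tile placement rules.
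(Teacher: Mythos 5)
Your proposal is correct and follows essentially the same route as the paper: reduce the problem to avoiding orientation conflicts, use Lemma \ref{clauseTileProperty} to match two-literal clauses completely and three-literal clauses up to one residual conflict, push that conflict onto the $\ominus$-$(t,f)$-vertex pair of a negative literal (guaranteed to exist by the $1$-Negative property), and repair it via Lemma \ref{variableTileProperty} --- possibly switching between $G^2_{v,a}$ and $G^2_{v,b}$, which is the ``correct gadgets'' hypothesis --- with no cascading since each variable has exactly one negative occurrence. The only presentational difference is that you defer the edge-gadget strand labelling to the end by observing its two ends always carry opposite orientations, whereas the paper fixes the edge tiles first to push all conflicts onto clause--variable interfaces; these are equivalent.
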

\begin{proof}
The only ambiguity during the assembly of $H_\psi$ is the selection of either $G^2_{v,a}$ or $G^2_{v,b}$ for a variable that occurs once as a positive literal. To decide which gadget has to be used, we start with any selection of tiles for the variable and clause vertices that fit the directions of the edges of $G_\psi'$ as described above. By selecting tiles and tile triples for the edge gadgets that do not cause an orientation conflict with the adjacent variable tile, we can assume w.l.o.g. that all orientation conflicts are directly between clause and variable tiles. Now resolve all existing orientation conflicts as follows: By Lemma \ref{clauseTileProperty} we can select a clause tile $(G^2_c,\rho)$ for every clause containing two literals that does not cause any orientation conflicts with the adjacent two variable tiles. For every clause containing three literals we can select a clause tile $(G^3_c,\rho)$ that causes at most one orientation conflict. Since every clause with three literals contains at 
least one negative literal $\overline{x}$ we can restrict this orientation conflict to the variable tile $(G_x,\rho_x)$ for variable $x$. By Lemma \ref{variableTileProperty}, we can then select a variable tile $(G_x',\rho_x')$ that causes neither this orientation conflict nor one with another adjacent clause tile. $G_x'$ is the variable gadget that has to be used for variable $x$ during the assembly of $H_\psi$. Since every orientation conflict is solved by at most three local replacements of tiles and tile triples that never cause new orientation conflicts to appear the tile selection for a valid GUDG embedding, and by extend the GUDG $H_\psi$, can be computed in polynomial time.
\end{proof}


\subsection{Correctness}

In this section, we prove that there is a resolving set $S$ for $H_\psi$ of size at most $4\cdot |X'|$ if and only if there is a satisfying truth assignment for $\psi$.

\begin{definition}
Let $G=(V,E)$ be an undirected graph. A vertex $u \in V$ {\em resolves} a vertex pair $v,w \in V$ if $d(u,v) \not= d(u,w)$. A vertex pair $v,w$ is called {\em unsolved by set} $U \subseteq V$, if no vertex of $U$ resolves $v,w$.
\end{definition}

\begin{lemma}\label{min3landmarks}
Every resolving set $S$ for $H_\psi$ contains at least three vertices of each copy of a variable gadget $G$. $S$ contains vertex $a_i$ or $b_i$ for $i=1,\ldots,3$, see Figure \ref{variableGadgets}.
\end{lemma}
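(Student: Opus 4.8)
The statement to prove is Lemma~\ref{min3landmarks}: every resolving set $S$ for $H_\psi$ must contain at least three vertices from each copy of a variable gadget $G$, and in fact must contain the vertex $a_i$ or $b_i$ for each $i=1,2,3$. The natural strategy is the standard "twin vertex" argument used in metric dimension lower bounds: identify, inside each variable gadget, several pairs of vertices that are hard to resolve from the outside, and show that any vertex lying outside a small designated region of the gadget fails to resolve at least one such pair. Concretely, I would first inspect Figure~\ref{variableGadgets} together with the coordinate/distance data in Table~\ref{var-coords} to pin down, for each $i$, a pair of vertices $u_i, w_i$ in (or very near) the $i$-th branch of the gadget with the property that $d(s,u_i)=d(s,w_i)$ for every vertex $s$ that is not "close" to that branch — in the degenerate case this is a pair of true twins $N(u_i)\setminus\{w_i\} = N(w_i)\setminus\{u_i\}$, which can only be resolved by $u_i$, $w_i$ themselves or a vertex adjacent to exactly one of them.

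**Key steps.** (1) For a fixed copy of a variable gadget, use the dashed-line structure (each dashed line is a path with $12$ further vertices, so branches have length $\ge 14$) to argue that any shortest path from a vertex $s$ outside the gadget into branch $i$ must pass through the $(t,f)$-vertex pair $(t_{i,14},f_{i,14})$ on the gadget boundary; hence for vertices deep inside branch $i$ the distance from $s$ is $d(s,\partial_i) + (\text{offset inside the branch})$, where $\partial_i \in \{t_{i,14},f_{i,14}\}$. (2) Exhibit in each branch a pair $u_i,w_i$ symmetric with respect to this access point, so it is unresolved by every vertex outside branch $i$ and also by every vertex inside a different branch $j\neq i$; by checking the small "core" of the gadget (the vertices $T_1,T_2,N_1,F,\dots$ listed in Table~\ref{var-coords}) confirm that no core vertex resolves all three pairs simultaneously. (3) Conclude $S$ must contain, for each $i$, at least one vertex from the closed region $R_i$ consisting of branch $i$ together with whatever few core vertices adjacent to branch $i$ can resolve $(u_i,w_i)$; since the $R_i$ are pairwise disjoint, $|S\cap G|\ge 3$. (4) Finally, sharpen this: show that among all vertices of $R_i$, only $a_i$ (in $G^3_v$ and $G^2_{v,a}$) or $b_i$ (in $G^2_{v,b}$) — the unique degree-one leaf hanging off branch $i$ — resolves the pair $(u_i,w_i)$, or more plausibly that choosing any vertex other than $a_i/b_i$ forces an extra landmark elsewhere; the cleanest route is to define $u_i,w_i$ as the two neighbours of $a_i$'s attachment point that are true twins except for $a_i$ itself, so that the only resolvers are $a_i$, $u_i$, $w_i$, and then a second twin pair in the same branch rules out $u_i$ and $w_i$, leaving $a_i$ (resp.\ $b_i$) as the forced choice.

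**Main obstacle.** The delicate part is step~(2)–(4): making the "no outside vertex resolves the branch pairs, and within the branch only $a_i/b_i$ works" argument airtight without a picture-by-picture case analysis. In particular I must be careful that the three branch regions $R_i$ of a single gadget really are disjoint and that a landmark placed in branch $i$ cannot "double up" and also resolve the distinguishing pair of branch $j$; this requires the path lengths ($12$ interior vertices per dashed segment, total branch length $14$) to be large enough that the access-point-symmetry is genuinely broken only from within the correct branch. I expect the proof to reduce, after this setup, to verifying a finite checklist on the explicit coordinates of Table~\ref{var-coords} and the incidence structure of Figure~\ref{variableGadgets}, i.e.\ checking that each candidate resolver $s$ in the gadget core satisfies $d(s,u_i)=d(s,w_i)$ for at least one $i$ unless $s\in\{a_i,b_i\}$. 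The second sentence of the lemma then drops out immediately from the refined step~(4).
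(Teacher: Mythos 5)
Your general instinct (a twin-vertex, forced-landmark argument) is the right one, but you have missed the specific --- and much simpler --- structure that makes this lemma a two-line observation, and the elaborate construction you propose instead is not grounded in the actual gadget. In the variable gadgets of Figure~\ref{variableGadgets}, $a_i$ and $b_i$ are not alternatives occurring in different gadget variants, nor is $a_i$ a single degree-one leaf whose two neighbours form a twin pair: both $a_i$ and $b_i$ are present in every copy, both are at distance $1$ from $f_{i,0}$, and every path from any other vertex of $H_\psi$ to either of them passes through $f_{i,0}$. Hence $\{a_i,b_i\}$ is itself a pendant-twin pair: $d(s,a_i)=d(s,f_{i,0})+1=d(s,b_i)$ for every $s\notin\{a_i,b_i\}$, so the only vertices that resolve the pair $a_i,b_i$ are $a_i$ and $b_i$ themselves. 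Applying this for $i=1,2,3$ (the three pairs being disjoint) yields both sentences of the lemma at once; this is exactly the paper's proof.

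Your plan, by contrast, introduces auxiliary pairs $u_i,w_i$, disjoint regions $R_i$, access-point arguments through $(t_{i,14},f_{i,14})$, and an unverified ``second twin pair'' needed to rule out $u_i$ and $w_i$ as admissible resolvers. None of this machinery is needed, and several of its load-bearing assumptions are never checked against the gadget (that only $a_i$ resolves your constructed pair; that a second twin pair exists in each branch to exclude $u_i,w_i$). As written, the proposal does not establish the second sentence of the lemma --- that $S$ must contain $a_i$ or $b_i$ specifically --- because your pair $(u_i,w_i)$ would also be resolved by $u_i$ or $w_i$ themselves, and the mechanism you sketch for excluding them is hypothetical. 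The repair is simply to take $(u_i,w_i)=(a_i,b_i)$, after which steps (1)--(3) collapse to the observation about $f_{i,0}$ above.
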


\begin{proof}
Suppose neither $a_i$ nor $b_i$ is in $S$ for some $i$. Since all shortest paths from any vertex of $S$ to $a_i$ and $b_i$ contain $f_{i,0}$ and $d(f_{i,0},a_i) = d(f_{i,0},b_i) = 1$, there is no vertex $s \in S$ that resolves the pair $a_i,b_i$.
\end{proof}

Let ${\cal F}$ be a set of vertices consisting of one vertex from $\{a_i,b_i\}$, $1 \leq i \leq 3$, for every copy of a variable gadget, i.e., $|{\cal F}| = 3\cdot |X'|$. The vertices in ${\cal F}$ are called {\em forced landmarks}.


\begin{lemma}\label{resolveAll}
Every vertex pair $T_1,T_2$ and $N_1,F$ from the same copy of a variable gadget and every vertex pair $w_1,w_2$ from the same copy of a clause gadget are unsolved by ${\cal F}$. For all other vertex pairs $u,v$, there is a forced landmark of ${\cal F}$ that resolves $u,v$.
\end{lemma}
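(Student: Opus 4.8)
The plan is to work throughout with the \emph{${\cal F}$-distance vector} $\big(d(v,z)\big)_{z\in{\cal F}}$ of a vertex $v$; two vertices are unsolved by ${\cal F}$ exactly when their ${\cal F}$-distance vectors coincide. The one structural fact that makes the analysis finite is a funneling property of the assembly: since distinct gadgets share only $(t,f)$-vertices and every edge lies inside a single gadget, every walk that leaves a gadget $G$ passes through a $(t,f)$-vertex of $G$, so for $u\in G$ and $v\notin G$ one has $d(u,v)=\min\{\,d(u,p)+d(p,v) : p \text{ a }(t,f)\text{-vertex of }G\,\}$. Hence the coordinates of $u$'s ${\cal F}$-distance vector indexed by landmarks \emph{outside} $G$ depend only on the few numbers $d(u,p)$ with $p$ ranging over the $(t,f)$-vertices of $G$; landmarks \emph{inside} $G$ occur only when $G$ is a variable gadget, namely the chosen members of $\{a_i,b_i\}$ for $i=1,2,3$, and since $a_i,b_i$ are twins at distance $1$ from $f_{i,0}$ the choice of member is immaterial for every vertex other than $a_i,b_i$. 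So the whole lemma reduces to a bounded computation per gadget type: the distances from each vertex to the gadget's $(t,f)$-vertices and, for a variable gadget, to $a_1,a_2,a_3$.

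For the unsolved pairs I would check, from Figure~\ref{variableGadgets} and Table~\ref{var-coords}, that in every variable gadget $T_1$ and $T_2$ have equal distance to each of $a_1,a_2,a_3$ and to each of its three $(t,f)$-vertices, and likewise $N_1$ and $F$; and, from Figure~\ref{clauseGadgets}, that in every clause gadget $w_1$ and $w_2$ have equal distance to each of its $(t,f)$-vertices. By the funneling property, equality at the $(t,f)$-vertices forces equality in all coordinates of the ${\cal F}$-distance vector indexed by landmarks outside the gadget, while inside the gadget equality is part of the same check (a clause gadget has no landmark inside). Hence $\{T_1,T_2\}$, $\{N_1,F\}$ and $\{w_1,w_2\}$ each have a constant ${\cal F}$-distance vector and are unsolved by ${\cal F}$.

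For the remaining pairs $\{u,v\}$ I would argue by where $u$ and $v$ sit. If both lie in one variable-gadget copy $G$, then $\{a_i,b_i\}$ is resolved by the member of it in ${\cal F}$, at distance $0$, and for every other pair a finite inspection of the triples $(d(\cdot,a_1),d(\cdot,a_2),d(\cdot,a_3))$ --- read from Table~\ref{var-coords} on the inner vertices and, for a vertex $w$ at position $j$ on a dashed path of length $m$ with ends $a,b$, given by $d(z,w)=\min(d(z,a)+j,\,d(z,b)+m-j)$ --- shows the triples are pairwise distinct apart from $\{T_1,T_2\}$ and $\{N_1,F\}$, so some $a_i\in{\cal F}$ resolves the pair. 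If both lie in one clause-gadget copy $G$, the clause-gadget data show that for every pair other than $\{w_1,w_2\}$ some $(t,f)$-vertex $p$ of $G$ has $d(p,u)\neq d(p,v)$, and taking the forced landmark $z$ reached from $p$ through the incident $37$-vertex edge gadget, the distances from $z$ to the $(t,f)$-vertices of $G$ exceed those from $p$ by an amount independent of which $(t,f)$-vertex of $G$ one ends at, so $d(z,u)\neq d(z,v)$. If both lie in one edge-gadget copy, whose two parallel $37$-vertex paths have their ends identified with $(t,f)$-vertices of neighbouring gadgets, measuring from forced landmarks reached through both ends --- along which distance grows by exactly one per step --- separates every such pair. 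Finally, if $u$ and $v$ lie in different gadgets, I take a variable gadget $G'$ that contains one endpoint or is joined to the gadget of one endpoint by a single edge gadget and compare distances to its member of ${\cal F}$; the endpoint in or next to $G'$ has bounded such distance, the other is pushed away by at least the length of a separating $37$-vertex edge gadget, so the vectors differ --- the only delicate sub-case, two gadgets joined by a single edge gadget, being settled by a finite comparison along that path.

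The substance of the proof is the bookkeeping in the last two halves of the case analysis: one must verify that no two non-exceptional vertices of a gadget accidentally share their distance data to the gadget's $(t,f)$-vertices (and, for variable gadgets, to $a_1,a_2,a_3$), one must rule out shortcuts through the clause variable graph in the clause-gadget case, and one must handle the short-separation sub-case of the cross-gadget case directly rather than by the $37$-vertex bound. Once the per-gadget distance tables are written out --- Table~\ref{var-coords} being the variable-gadget instance --- each of these is a mechanical check, best presented as one short table per gadget type.
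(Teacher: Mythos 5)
Your overall strategy (distance vectors to ${\cal F}$, the funneling of shortest paths through $(t,f)$-vertices, and a case split by which gadgets contain $u$ and $v$) is the same as the paper's, but your treatment of the case where $u$ and $v$ lie in the same variable gadget contains a genuine error. You claim that, apart from $\{T_1,T_2\}$ and $\{N_1,F\}$, the triples $\bigl(d(\cdot,a_1),d(\cdot,a_2),d(\cdot,a_3)\bigr)$ are pairwise distinct, so that some forced landmark of the \emph{same} gadget resolves the pair. This is false for the pairs $t_{i,j},f_{i,j+1}$ with $0\le j\le 13$: since $d(a_i,t_{\ell,0})=d(a_i,f_{\ell,0})+1$ and the $t$- and $f$-paths run in parallel, one gets $d(a_i,t_{\ell,j})=d(a_i,f_{\ell,j+1})$ for all $i,\ell,j$. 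Table \ref{var-coords} exhibits this directly ($t_{1,0}$ and $f_{1,1}$ both have triple $(2,4,3)$, $t_{2,0}$ and $f_{2,1}$ both $(4,2,4)$, $t_{3,0}$ and $f_{3,1}$ both $(3,4,2)$), and the same shift propagates through the edge gadget, so these pairs are invisible to every landmark whose shortest paths enter the $(t,f)$-path pair from the variable-gadget side.

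These pairs are not a bookkeeping detail one can sweep into a table check of a single gadget: they are resolved only by forced landmarks lying in the \emph{other} variable gadgets attached to the clause gadget at the far end of the $(t,f)$-path pair, and the resolution works because the $t$-path and the $f$-path are wired asymmetrically into the clause gadget (compare $t_{1,13}\mapsto 81$ versus $f_{1,14}\mapsto 80$ in the $d_y$ column of Table \ref{clauseDistances}). This asymmetry is exactly the mechanism by which a truth value chosen in a variable gadget becomes observable inside a clause gadget, i.e.\ it is the heart of the reduction, and the paper's proof of Case \ref{caseVarVar} singles these pairs out explicitly. Your edge-gadget subcase does invoke landmarks reached ``through both ends,'' which is the right idea, but you never apply it to the portions of the $(t,f)$-path pairs that lie inside the variable gadget, and your variable-gadget subcase as stated asserts something the paper's own table refutes. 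To repair the argument you must add, for each pair $t_{i,j},f_{i,j+1}$, a computation of the distances from a forced landmark beyond the adjacent clause gadget, which in turn requires the clause-gadget distance data of Table \ref{clauseDistances} rather than only per-gadget local information.
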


\begin{proof}
 Let $u,v$ be an arbitrary vertex pair of $H_\psi$.
  \begin{case} $u$ and $v$ belong to the same gadget copy $G$.
    \begin{subcase} $G$ is a variable gadget copy. \label{caseVarVar} Vertex pair $u,v = T_1,T_2$ is unsolved by ${\cal F}$, because every shortest path from a landmark of ${\cal F}$ to $T_1$ and $T_2$ passes one of the $f_{i,0}$ vertices and $d(T_1,f_{i,0})=d(T_2,f_{i,0})$ for $1\le i\le 3$. Analogously, vertex pair $u,v = N_1,F$ is unsolved by ${\cal F}$, since $d(N_1,f_{i,0})=d(F,f_{i,0})=2$ for $1\le i\le 3$. The vertex pairs $u,v = t_{i,j},f_{i,j+1}$, $1\le i\le 3$, $0\le j\le 13$ are resolved by the forced landmarks from the variable gadgets connected to this $(t,f)$-path pair via one clause gadget and two edge gadgets. All other vertex pairs $u,v$ are resolved by one of the forced landmarks of $G$, see table \ref{var-coords}.
    \end{subcase}
    \begin{subcase}$G$ is an edge gadget copy.\label{caseLineLine} All edge gadgets are basically extensions of the $(t,f)$-path pairs inside the clause and variable gadgets, so this case has already been covered in case \ref{caseVarVar}.
    \end{subcase}
    \begin{subcase}$G$ is a clause gadget copy.\label{caseClauseClause} The vertex pair $u,v = w_1,w_2$ is unsolved by ${\cal F}$, because all shortest paths from the forced landmarks to $w_1$ and $w_2$ enter $G$ via an $f$ path and pass vertex $m$. All other vertex pairs $u,v$ are resolved by at least one forced landmark in an adjacent variable gadget copy, see table \ref{clauseDistances}.
    \end{subcase}
  \end{case}
  \begin{case}$u$ and $v$ belong to different gadget copies.
    \begin{subcase}$u$ and $v$ belong to variable gadget copies $G_1$ and $G_2$, respectively. All forced landmarks of $G_1$ resolve the pair $u,v$: The distance from a forced landmark of $G_1$ to any vertex of $G_1$ is at most $18$ (e.g. $d(a_1,t_{2,14})$) while the distance to any vertex of $G_2$ is greater than $18$, because any shortest path has to traverse at least two edge gadgets and one clause gadget.
    \end{subcase}
    \begin{subcase}$u$ and $v$ both belong to edge gadget copies. Let $a$ and $b$ be the forced landmarks closest to $u$ and $v$, respectively. Also $d(a,u) < d(c,u)$ for all forced landmarks $c\not= a$, because all vertices at distance $d(a,u)$ from $u$ are either in a clause gadget, an edge gadget or the variable gadget that $a$ belongs to, hence none of them being another forced landmark. Now suppose that neither $a$ nor $b$ resolves the pair $u,v$, meaning $d(a,v) = d(a,u)$ and $d(b,u) = d(b,v)$. Together we get $d(a,v) = d(a,u) < d(b,u) = d(b,v)$, which contradicts to the assumption that $b$ is the landmark closest to $v$.
     \end{subcase}
    \begin{subcase}$u$ and $v$ belong to the clause gadget copies $G_c$ and $G_d$ for the clauses $c$ and $d$, respectively.
    \begin{subsubcase} There is a variable $x$ such that $c$ contains literal $x$ or $\overline{x}$ and $d$ does contain neither $x$ nor $\overline{x}$. The forced landmarks of variable gadget $G_x$ for $x$ resolve the pair $u,v$: All shortest paths from forced landmarks of $G_x$ to vertices of the adjacent clause gadget have a length of at most $86$ (e.g. $d(a_1,t_{1,15})$), while all shortest paths to vertices of a non-adjacent clause gadget have a length greater than $86$, since they have to traverse three edge gadgets, one clause gadget and one variable gadget.
    \end{subsubcase}
    \begin{subsubcase} All variables that occur as a literal in $c$ also occur as a literal in $d$ and vice versa.
     Since $c\not= d$ there is one variable $x$ such that $x\in c$ and $\overline{x}\in d$ or $\overline{x}\in c$ and $x\in d$. W.l.o.g. let $x\in c$, $\overline{x}\in d$. Furthermore let $a$ and $b$ be the forced landmarks in $G_x$ closest to $u$ and $v$, respectively. Then $d(u,a) < d(u,b)$ and $d(v,b) < d(v,a)$. Now assume $u,v$ is unsolved by ${\cal F}$. Then $d(v,a)=d(u,a)$ and $d(v,b)=d(u,b)$. Together we get $d(v,a) = d(u,a) < d(u,b) = d(v,b) < d(v,a)$ which is a contradiction. 
    \end{subsubcase}
      \end{subcase}
    \begin{subcase}$u$ belongs to variable gadget copy $G_x$ and $v$ to edge gadget copy $G_e$.
      \begin{subsubcase}$G_x$ is adjacent to $G_e$. See case \ref{caseVarVar}.
      \end{subsubcase}
      \begin{subsubcase}\label{caseVarLineSub2}$G_x$ is not adjacent to $G_e$. Every forced landmarks $a$ in $G_x$ resolves $u,v$: The distance from $a$ to any vertex in $G_x$ is at most $18$ while the distance to any vertex in $G_e$ is greater than $18$, since the shortest path has to traverse at least one other edge gadget and one clause gadget.
      \end{subsubcase}
    \end{subcase}
    \begin{subcase}$u$ belongs to variable gadget copy $G_x$ and $v$ to clause gadget copy $G_c$. See case \ref{caseVarLineSub2}.
    \end{subcase}
    \begin{subcase}$u$ belongs to clause gadget copy $G_c$ and $v$ to edge gadget copy $G_e$.
      \begin{subsubcase}$G_c$ is adjacent to $G_e$. See case \ref{caseClauseClause}.
      \end{subsubcase}
      \begin{subsubcase}$G_c$ is not adjacent to $G_e$. If $v$ belongs to a $(t,f)$-vertex pair of $G_e$ then this case has already been covered in one of the previous cases, because $v$ has been identified with a vertex from a clause or variable gadget. So we can assume that $v\in \{t_2,\cdots ,t_{36},f_2,\cdots ,f_{36}\}$. Let $a$ be the forced landmark closest to $v$ inside a variable gadget adjacent to $G_e$. Then $a$ resolves $u,v$, because $d(a,v) \le 51$ while the distance from $a$ to any vertex in $G_c$ is at least $52$.
      \end{subsubcase}
    \end{subcase}
  \end{case}
\end{proof}
\begin{table}[htb]\null\hfill
 \begin{tabular}{|c|c||c|c||c|c||c|c|}\hline
$v$	&	$(d_x,d_y,d_z)$		 & 
$v$	&	$(d_x,d_y,d_z)$	 	 & 
$v$	&	$(d_x,d_y,d_z)$		 & 
$v$	&	$(d_x,d_y,d_z)$		 \\\hline\hline 
$m$	&	$(66,66,66)$		 & 
$w_1$	&	$(67,67,67)$	 	 & 
$w_2$	&	$(67,67,67)$		 & 
$c_1$	&	$(67,68,68)$		 \\\hline
$c_2$	&	$(68,68,67)$		 &
$c_3$	&	$(68,67,68)$		 & 
$t_{1,1}$	&	$(66,69,69)$	 &
$t_{1,2}$	&	$(65,70,70)$	 \\\hline
$t_{1,3}$	&	$(64,71,71)$	&
$t_{1,4}$	&	$(63,72,72)$	&
$t_{1,5}$	&	$(62,73,73)$	 & 
$t_{1,6}$	&	$(61,74,74)$	 \\\hline
$t_{1,7}$	&	$(60,75,75)$	&
$t_{1,8}$	&	$(59,76,76)$	 & 
$t_{1,9}$	&	$(58,77,77)$	&
$t_{1,10}$	&	$(57,78,78)$	 \\\hline
$t_{1,11}$	&	$(56,79,79)$	&
$t_{1,12}$	&	$(55,80,80)$	 & 
$t_{1,13}$	&	$(54,81,81)$	 & 
$t_{1,14}$	&	$(53,82,82)$	\\\hline
$t_{1,15}$	&	$(52,82,82)$	 & 
$f_{1,1}$	&	$(65,67,67)$	 & 
$f_{1,2}$	&	$(64,68,68)$	 & 
$f_{1,3}$	&	$(63,69,69)$	\\\hline
$f_{1,4}$	&	$(62,70,70)$	 &
$f_{1,5}$	&	$(61,71,71)$	 & 
$f_{1,6}$	&	$(60,72,72)$	 & 
$f_{1,7}$	&	$(59,73,73)$	\\\hline
$f_{1,8}$	&	$(58,74,74)$	 & 
$f_{1,9}$	&	$(57,75,75)$	&
$f_{1,10}$	&	$(56,76,76)$	 & 
$f_{1,11}$	&	$(55,77,77)$	\\\hline
$f_{1,12}$	&	$(54,78,78)$	 & 
$f_{1,13}$	&	$(53,79,79)$	 & 
$f_{1,14}$	&	$(52,80,80)$	&
$f_{1,15}$	&	$(51,81,81)$	 \\\hline
$t_{2,1}$	&	$(69,66,69)$	 & 
$t_{2,2}$	&	$(70,65,70)$	& 
$t_{2,3}$	&	$(71,64,71)$	 & 
$t_{2,4}$	&	$(72,63,72)$	\\\hline
$t_{2,5}$	&	$(73,62,73)$	 & 
$t_{2,6}$	&	$(74,61,74)$	& 
$t_{2,7}$	&	$(75,60,75)$	 & 
$t_{2,8}$	&	$(76,59,76)$	\\\hline
$t_{2,9}$	&	$(77,58,77)$	 &
$t_{2,10}$	&	$(78,57,78)$	 & 
$t_{2,11}$	&	$(79,56,79)$	 & 
$t_{2,12}$	&	$(80,55,80)$	\\\hline
$t_{2,13}$	&	$(81,54,81)$	 & 
$t_{2,14}$	&	$(82,53,82)$	 &
$t_{2,15}$	&	$(82,52,82)$	 & 
$f_{2,1}$	&	$(67,65,67)$	\\\hline
$f_{2,2}$	&	$(68,64,68)$	&
$f_{2,3}$	&	$(69,63,69)$	 & 
$f_{2,4}$	&	$(70,62,70)$	 &
$f_{2,5}$	&	$(71,61,71)$	 \\\hline
$f_{2,6}$	&	$(72,60,72)$	&
$f_{2,7}$	&	$(73,59,73)$	 & 
$f_{2,8}$	&	$(74,58,74)$	 & 
$f_{2,9}$	&	$(75,57,75)$	 \\\hline
$f_{2,10}$	&	$(76,56,76)$	&
$f_{2,11}$	&	$(77,55,77)$	 & 
$f_{2,12}$	&	$(78,54,78)$	 & 
$f_{2,13}$	&	$(79,53,79)$	 \\\hline
$f_{2,14}$	&	$(80,52,80)$	 &
$f_{2,15}$	&	$(81,51,81)$	 &
$t_{3,1}$	&	$(69,69,66)$	&
$t_{3,2}$	&	$(70,70,65)$	 \\\hline
$t_{3,3}$	&	$(71,71,64)$	 & 
$t_{3,4}$	&	$(72,72,63)$	 &
$t_{3,5}$	&	$(73,73,62)$	&
$t_{3,6}$	&	$(74,74,61)$	\\\hline
$t_{3,7}$	&	$(75,75,60)$	 & 
$t_{3,8}$	&	$(76,76,59)$	 & 
$t_{3,9}$	&	$(77,77,58)$	&
$t_{3,10}$	&	$(78,78,57)$	\\\hline
$t_{3,11}$	&	$(79,79,56)$	 & 
$t_{3,12}$	&	$(80,80,55)$	 & 
$t_{3,13}$	&	$(81,81,54)$	&
$t_{3,14}$	&	$(82,82,53)$	\\\hline
$t_{3,15}$	&	$(82,82,52)$	 & 
$f_{3,1}$	&	$(67,67,65)$	 & 
$f_{3,2}$	&	$(68,68,64)$	&
$f_{3,3}$	&	$(69,69,63)$	\\\hline
$f_{3,4}$	&	$(70,70,62)$	&
$f_{3,5}$	&	$(71,71,61)$	 & 
$f_{3,6}$	&	$(72,72,60)$	 & 
$f_{3,7}$	&	$(73,73,59)$	\\\hline
$f_{3,8}$	&	$(74,74,58)$	 & 
$f_{3,9}$	&	$(75,75,57)$	&
$f_{3,10}$	&	$(76,76,56)$	 & 
$f_{3,11}$	&	$(77,77,55)$	 \\\hline
$f_{3,12}$	&	$(78,78,54)$	 & 
$f_{3,13}$	&	$(79,79,53)$	 & 
$f_{3,14}$	&	$(80,80,52)$	 &
$f_{3,15}$	&	$(81,81,51)$	 \\\hline  
 \end{tabular}\hfill\null
\caption{Hopdistances $d_x,d_y,d_z$ between vertices of a copy of $G^3_c$ and their closest forced landmark inside adjacent variable gadget copies for variables $x,y,z$}
\label{clauseDistances}
\end{table}

\begin{lemma}\label{resolveT1T2}
The vertex pair $T_1,T_2$ can only be resolved by $T_1$, $T_2$, $N_1$, $N_2$ or $F$ of the same gadget copy $G_x$.
\end{lemma}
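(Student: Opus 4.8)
The plan is to prove the contrapositive: every vertex $s$ of $H_\psi$ lying outside the set $\{T_1,T_2,N_1,N_2,F\}$ of the copy $G_x$ satisfies $d(s,T_1)=d(s,T_2)$, and hence does not resolve the pair $T_1,T_2$. Since the five listed vertices all belong to the one copy $G_x$, this is exactly the lemma.

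The main tool is a local symmetry of the variable gadget, read off from Figure~\ref{variableGadgets}. Let $\varphi\colon V(H_\psi)\to V(H_\psi)$ be the permutation that exchanges $T_1\leftrightarrow T_2$ and $N_1\leftrightarrow N_2$ and fixes every other vertex; in particular it fixes $F$, the landmark candidates $a_i,b_i$, all three $(t,f)$-paths of $G_x$, and all of $H_\psi$ outside $G_x$. The first step is to check that $\varphi$ is a graph automorphism of $H_\psi$: every edge not incident to one of the four swapped vertices is trivially preserved, and the edges at $T_1,T_2,N_1,N_2$ are preserved because in the gadget these four vertices are joined to each other, to $F$, and to the rest of $G_x$ in a way invariant under the swap. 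The second step is then immediate: $\varphi$ fixes $s$, so it maps a shortest $s$--$T_1$ path to an $s$--$T_2$ path of the same length (and vice versa), whence $d(s,T_1)=d(s,T_2)$. Note this argument in fact also covers $s=F$, so the five-vertex statement is a (harmless) slight over-approximation; $T_1$ and $T_2$ obviously resolve the pair, and one checks from the gadget that $N_1$ and $N_2$ do as well.

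It is worth recording why this cannot be circumvented by approaching the gadget from far away. Figure~\ref{variableGadgets} shows that, apart from $T_1,T_2$ themselves, only $N_1,N_2,F$ are adjacent to $T_1$ or $T_2$, so $\{N_1,N_2,F\}$ is a cut set of $H_\psi$ separating $\{T_1,T_2\}$ from all remaining vertices, and $d(s,T_j)=\min_{\sigma\in\{N_1,N_2,F\}}\bigl(d(s,\sigma)+d(\sigma,T_j)\bigr)$ for $j\in\{1,2\}$; the automorphism forces the two minima to agree. This is also the structural reason that Lemma~\ref{resolveAll} could already rule out the forced landmarks, which all sit outside this cut. The two-literal variants $G^2_{v,a}$ and $G^2_{v,b}$ are handled by the identical argument applied to whichever of these vertices occur.

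The step I expect to be the real work is verifying that $\varphi$ really is a graph automorphism. The variable gadgets are not small --- each dashed segment in Figure~\ref{variableGadgets} conceals twelve further vertices --- so one must check carefully that the three $(t,f)$-paths and the landmark candidates $a_i,b_i$ attach to the four swapped vertices symmetrically, i.e.\ that nothing in the ``neck'' $\{N_1,N_2,F\}$ of the gadget breaks the $T_1\leftrightarrow T_2$, $N_1\leftrightarrow N_2$ symmetry. Once that is confirmed the distance identity, and hence the lemma, follows immediately; a reader who prefers not to argue via an automorphism can instead verify the displayed min-formula directly by tabulating $d(\sigma,T_1),d(\sigma,T_2)$ for $\sigma\in\{N_1,N_2,F\}$ together with the relevant distances $d(s,\sigma)$, much as the distance table accompanying Figure~\ref{tileExample} does for the forced landmarks.
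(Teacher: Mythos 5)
There is a genuine error here: the symmetry you postulate does not exist, and in fact \emph{cannot} exist without destroying the whole reduction. Your map $\varphi$ fixes $F$, so if it were an automorphism it would give $d(F,T_1)=d(F,T_2)$, i.e.\ $F$ would \emph{not} resolve the pair $T_1,T_2$. You dismiss this as a ``harmless over-approximation,'' but it is fatal: in the forward direction of the main theorem the fourth landmark of a variable gadget is chosen to be $T_1$ if the variable is true and $F$ if it is false, and by Lemma~\ref{resolveAll} the pair $T_1,T_2$ is unresolved by the forced landmarks, so whichever of $T_1$ or $F$ is chosen must resolve it. The gadget is therefore deliberately built so that \emph{all five} of $T_1,T_2,N_1,N_2,F$ resolve the pair (the paper's proof opens with exactly this observation), which forces the ``neck'' $\{N_1,N_2,F\}$ to be attached asymmetrically. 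The published coordinates confirm this: $T_2$ is adjacent to $N_1$ (Euclidean distance $\approx 0.22$) while $T_1$ is not (distance $\approx 1.02$), so $T_1$ and $T_2$ do not even have matching neighborhoods outside the swapped set and $\varphi$ is not edge-preserving. Your fallback cut-set claim is also false: $\{N_1,N_2,F\}$ does not separate $\{T_1,T_2\}$ from the rest of $H_\psi$, because $t_{2,0}$ and $t_{3,0}$ are direct neighbors of both $T_1$ and $T_2$; so the displayed min-formula over $\sigma\in\{N_1,N_2,F\}$ is wrong as well.

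The correct argument (the paper's) inverts your picture: the common gateway is $\{t_{2,0},t_{3,0}\}$, not $\{N_1,N_2,F\}$. Both $T_1$ and $T_2$ are adjacent to both $t_{2,0}$ and $t_{3,0}$, and every vertex $v$ outside $\{T_1,T_2,N_1,N_2,F\}$ --- inside the gadget or anywhere else in $H_\psi$ --- admits shortest paths from $T_1$ and from $T_2$ that pass through one of these two vertices, giving $d(T_1,v)=\min\{d(t_{2,0},v),d(t_{3,0},v)\}+1=d(T_2,v)$. So the intuition ``everything far away sees $T_1$ and $T_2$ identically'' is right, but it is realized by a common one-step gateway on the $t$-side of the gadget, not by a global $T_1\leftrightarrow T_2$ automorphism; the five exceptional vertices sit in an asymmetric appendage precisely so that each of them can tell $T_1$ and $T_2$ apart.
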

\begin{proof}
Anyone of these vertices obviously resolves $T_1,T_2$ and except for $N_2$ they also resolve $N_1,F$. Furthermore there are shortest paths from $T_1$ and $T_2$ to any other vertex $v$ in $G_x$ that cross either $t_{2,0}$ or $t_{3,0}$, meaning that $v$ does not resolve $T_1,T_2$, because $d(T_1,v) = min\{d(t_{2,0},v) , d(t_{3,0},v)\} + 1 = d(T_2,v)$.
\end{proof}


\begin{corollary}
The Lemmata \ref{min3landmarks}, \ref{resolveAll} and \ref{resolveT1T2} imply that every resolving set for $H_\psi$ contains at least $4\cdot |X'|$ vertices.
\end{corollary}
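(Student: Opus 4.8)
The plan is to combine the three lemmas and the counting of forced landmarks into a single arithmetic bound. By Lemma \ref{min3landmarks}, any resolving set $S$ must contain at least three vertices from each copy of a variable gadget, and these vertices include one element of each pair $\{a_i,b_i\}$, $1 \le i \le 3$; call such a choice a set $\mathcal{F}$ of forced landmarks. Since there is one variable gadget copy per variable of $X'$, we have $|\mathcal{F}| = 3\cdot|X'|$, and $\mathcal{F} \subseteq S$.

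The next step is to argue that $\mathcal{F}$ alone is not a resolving set, and moreover that each variable gadget copy forces an additional landmark beyond the three already counted. By Lemma \ref{resolveAll}, the only vertex pairs left unsolved by $\mathcal{F}$ are, in each variable gadget copy $G_x$, the pair $T_1,T_2$ (and the pair $N_1,F$), and in each clause gadget copy the pairs $w_1,w_2$. Focus on the pair $T_1,T_2$ in a fixed variable gadget copy $G_x$: it is unsolved by $\mathcal{F}$, so $S$ must contain some vertex $s$ resolving it. By Lemma \ref{resolveT1T2}, that vertex $s$ must be one of $T_1,T_2,N_1,N_2,F$ of the same copy $G_x$. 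None of these five vertices is among the forced landmarks $a_i,b_i$ of $G_x$ (they are distinct vertices of the gadget), so $s$ is a fourth vertex of $S$ contained in $G_x$, distinct from the three already guaranteed by Lemma \ref{min3landmarks}.

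Carrying this out for every variable gadget copy independently: each copy contributes at least four distinct vertices of $S$ — the three forced landmarks plus one resolver of its own $T_1,T_2$ pair — and vertices from different copies are distinct. Summing over the $|X'|$ variable gadget copies gives $|S| \ge 4\cdot|X'|$, which is the claimed bound.

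I do not expect a genuine obstacle here, since all the real work is in the three cited lemmas; the only point requiring a moment's care is the disjointness bookkeeping. One must check that the fourth vertex $s$ attributed to $G_x$ really lies in $G_x$ and is not secretly one of the forced landmarks of $G_x$ or a vertex shared with a neighbouring gadget through the $(t,f)$-identification. The first is immediate because $T_1,T_2,N_1,N_2,F$ are interior non-$(t,f)$-vertices of the variable gadget, hence neither identified with vertices of other gadgets nor equal to any $a_i$ or $b_i$; the second is clear for the same reason. Hence the four-per-copy lower bounds are over pairwise disjoint vertex sets and add up, completing the proof.
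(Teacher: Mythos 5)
Your proposal is correct and is exactly the counting argument the paper intends (the paper states the corollary without an explicit proof): three forced landmarks per variable gadget from Lemma \ref{min3landmarks}, plus one of $T_1,T_2,N_1,N_2,F$ per variable gadget forced by Lemma \ref{resolveT1T2}, with the disjointness observations you make. Nothing further is needed.
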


\begin{theorem}
{\sc GUDG Metric Dimension} is NP-complete.
\end{theorem}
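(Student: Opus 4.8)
The membership in NP has already been noted: given a candidate set $S$, run a breadth-first search from each $s\in S$ and check that every pair of vertices is separated by some $s$, all in polynomial time. For the NP-hardness the plan is to use the reduction from {\sc $1$-Negative Planar 3-Sat} set up in the preceding sections. Starting from an instance $\psi=(X,{\cal C})$, I would run the preprocessing to obtain the equivalent instance $\psi'=(X',{\cal C}')$ whose clause variable graph $G_{\psi'}$ admits a planar orthogonal grid drawing with edge paths of length at most two (this drawing is polynomial in the size of $\psi$, hence $|X'|$ is polynomial in $|X|$), assemble the GUDG $H_\psi$ from tiles and tile triples as described, which can be done in polynomial time as shown above, and set $k:=4\cdot|X'|$. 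Everything then reduces to proving that $\psi'$, equivalently $\psi$, is satisfiable if and only if $H_\psi$ has a resolving set of size at most $k$.

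For the ``only if'' direction, let $S$ be a resolving set with $|S|\le 4\cdot|X'|$. By the preceding corollary $|S|=4\cdot|X'|$ exactly, and by Lemma \ref{min3landmarks} together with Lemma \ref{resolveT1T2} each copy of a variable gadget must contain at least three forced landmarks (one of $a_i,b_i$ for each $i$) and at least one further vertex from $\{T_1,T_2,N_1,N_2,F\}$ to resolve the pair $T_1,T_2$; a counting argument then shows that, per variable gadget $G_x$, the set $S$ consists of a choice ${\cal F}$ of forced landmarks plus exactly one \emph{decision vertex} $s_x$, and that $S$ contains no vertices of clause or edge gadgets. Since by Lemma \ref{resolveAll} the pair $N_1,F$ of $G_x$ is also unsolved by ${\cal F}$ while $N_2$ does not resolve it (proof of Lemma \ref{resolveT1T2}), we get $s_x\in\{T_1,T_2,N_1,F\}$. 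I would then read off from the variable gadget which of these four positions lie on its ``true side'' and which on its ``false side'', and define $t(x)$ accordingly. By Lemma \ref{resolveAll} the only pairs not already resolved by ${\cal F}$ are the pairs $T_1,T_2$ and $N_1,F$ in the variable gadgets, each resolved by its own $s_x$, and the pairs $w_1,w_2$, one per clause gadget $G_c$. A distance check, using the bottleneck vertex $m$ of each clause gadget and the fact that decision vertices in non-adjacent variable gadgets are too far to break the symmetry between $w_1$ and $w_2$, shows that $w_1,w_2$ of $G_c$ is resolved only by a decision vertex $s_x$ of a variable gadget adjacent to $G_c$, and then only when $s_x$ lies on the side matching the sign of $x$'s literal in $c$, i.e. when that literal is true under $t$. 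Hence every clause of ${\cal C}'$ has a true literal and $t$ is a satisfying assignment.

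For the ``if'' direction, let $t$ satisfy $\psi'$. I would form $S$ by taking, in every variable gadget, three forced landmarks (say $a_1,a_2,a_3$) and one decision vertex $s_x$ placed on the true side of $G_x$ if $t(x)=\text{true}$ and on the false side otherwise; since both sides meet $\{T_1,T_2,N_1,F\}$ this gives $|S|=4\cdot|X'|=k$. By Lemma \ref{resolveAll} the forced landmarks resolve every pair of $H_\psi$ except the pairs $T_1,T_2$ and $N_1,F$ inside variable gadgets and the pairs $w_1,w_2$ inside clause gadgets; each $s_x$, being one of $T_1,T_2,N_1,F$, resolves $T_1,T_2$ and $N_1,F$ of its own gadget by Lemma \ref{resolveT1T2} and its proof; and for each clause $c$ I would pick a literal of $c$ true under $t$, say on variable $x$, so that $s_x$ lies on the side matching that literal, and then $s_x$ resolves $w_1,w_2$ of $G_c$ by the same distance computation as in the other direction. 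Thus $S$ is a resolving set of size $k$, which completes the reduction and the proof.

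The hard part is the distance bookkeeping behind the two appeals to a ``distance check'': one must verify that a decision vertex placed on the true (resp. false) side of a variable gadget $G_x$ strictly changes the hop-distance to exactly one of $w_1,w_2$ in every adjacent clause gadget containing the positive (resp. negative) literal of $x$, while a decision vertex on the wrong side, or in a non-adjacent variable gadget, leaves $d(\cdot,w_1)=d(\cdot,w_2)$. This is precisely what the lengths of the $(t,f)$-path pairs, the symmetry of the clause gadget about the vertex $m$, and the splicing performed by the edge gadgets are engineered to guarantee, and it comes down to a finite verification against the gadget figures and the distance tables (Tables \ref{var-coords} and \ref{clauseDistances}); the remaining combinatorics is the counting and bookkeeping sketched above.
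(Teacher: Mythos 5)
Your proposal is correct and follows essentially the same route as the paper: the same reduction from {\sc $1$-Negative Planar 3-Sat} with $k=4\cdot|X'|$, the same use of Lemmas \ref{min3landmarks}, \ref{resolveAll} and \ref{resolveT1T2} to pin down the structure of a minimum resolving set, and the same translation between the position of the fourth landmark in each variable gadget and the truth value of that variable, with the clause pairs $w_1,w_2$ handled by the same $t$-path versus $f$-path distance argument. The only cosmetic difference is that you rule out $N_2$ as a possible decision vertex via the pair $N_1,F$, whereas the paper keeps $N_2$ admissible and simply maps it to \emph{true}.
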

\begin{proof}
Let $\psi = (X, {\cal C})$ be an instance for {\sc $1$-Negative Planar 3-Sat} and $(H_\psi,4\cdot |X'|)$, $H_\psi=(V_H,E_H)$ the instance for {\sc GUDG Metric Dimension} as described above.
Then $\psi$ is satisfiable    $\Leftrightarrow    \exists$ resolving set $S\subseteq V_H : |S|\le 4\cdot |X'|$.

  $\Rightarrow$:
    Because $\psi$ is satisfiable, the altered instance $\psi'$ with variable set $X'=\{x_1, \dots ,x_n\}$ and clause set ${\cal C'}$ is also satisfiable. Let $A: X\to \{true,false\}$ be a satisfying truth assignment for $\psi'$.
    For every variable $x\in X'$ choose $T_1$ of the corresponding variable gadget copy as a landmark if $A(x)=true$ and $F$ otherwise.
    Together with ${\cal F}$ this set of vertices is a resolving set:
    According to Lemmata \ref{resolveAll} and \ref{resolveT1T2} all pairs are resolved except for $w_1,w_2$ of each clause gadget copy.
    Since $A$ satisfies $\psi'$ each clause gadget copy is connected to either the $\ominus$-$(t,f)$-path pair of a variable gadget copy in which $F$ has been chosen as a landmark or to one of the $\oplus$-$(t,f)$-path pairs of a variable gadget copy in which $T_1$ has been chosen as a landmark.
    In both cases, for some $i\in\{0,1\}$, the shortest path from this landmark to $w_{1+i}$ enters the clause gadget through the $t$ path while the shortest path to $w_{2-i}$ enters the gadget through the $f$ path, thus resolving $w_1,w_2$.

  $\Leftarrow$:
    Let $S$ be a resolving set for $H_\psi$ with $4\cdot |X'|$ vertices. According to Lemmata \ref{min3landmarks} and \ref{resolveT1T2} $S$ contains $3\cdot |X'|$ forced landmarks and one of the vertices $T_1$, $T_2$, $N_1$, $N_2$ or $F$ for every variable in $X'$. Now construct a truth assignment $A$ for $\psi'$ by setting $A(x)$ to $false$ if the fourth landmark in $x$'s variable gadget copy is $F$ and $true$ otherwise. Then $A$ satisfies $\psi'$:
    $S$ resolves $w_1,w_2$ in every clause gadget copy, so there is one landmark $s\in S$ for each copy such that $d(s,w_1)\not= d(s,w_2)$. This implies that the shortest path $p$ from $s$ to either $w_1$ or $w_2$ does not contain $m$ and enters the clause gadget through the adjacent $t$ path. So $s$ has to be inside the variable gadget that is connected to the clause gadget through one edge gadget:

    Assume that it is not. Then $p$ has to traverse an entire clause gadget $G_c$ and another variable gadget $G_v$ first. Independent of where $p$ enters $G_c$, on a shortest path to a vertex outside of $G_v$ it is never shorter to leave $G_c$ through a $t$ path than crossing $G_c$'s $m$ node and leaving through the $f$ path. So $p$ enters $G_v$ through an $f$ path, maybe contains the $N_2$ node of $G_v$ and leaves through another $f$ path, hence not resolving $w_1,w_2$.

    Since $p$ enters the clause gadget through the $t$ path and $s$ also resolves $T_1,T_2$ in its variable gadget, $A(x)$ has been set to $false$ if the clause gadget is connected to the $\ominus$-$(t,f)$-path pair of the variable gadget and $true$ otherwise.
    So every clause contains at least one positive literal whose variable is set to $true$ or one negative literal whose variable is set to $false$, meaning that $\psi'$ is satisfied. And because $\psi'$ is satisfiable, the original instance $\psi$ is satisfiable.
\end{proof}

\clearpage
\bibliographystyle{plain}
\bibliography{literature}

\end{document}